%% file: main_TIT.tex
\pdfoutput=1
\documentclass[journal]{IEEEtran}

\usepackage{amsmath,amssymb,amsfonts,amsthm}
\usepackage{algorithm}
\usepackage{algpseudocode}
\usepackage{graphicx}
\usepackage{booktabs}
\usepackage{cite}
\usepackage{url}
\usepackage{bm}
\usepackage{mathtools}
\usepackage{xcolor}
\usepackage{dblfloatfix}
\newcommand{\tabIVwidth}{\columnwidth}
\usepackage{tikz}
\usetikzlibrary{arrows.meta,positioning,calc}

\newtheorem{theorem}{Theorem}
\newtheorem{lemma}{Lemma}
\newtheorem{proposition}{Proposition}

\theoremstyle{definition}
\newtheorem{definition}{Definition}

\newcommand{\R}{\mathbb{R}}
\newcommand{\E}{\mathbb{E}}
\newcommand{\I}{\mathrm{I}}
\newcommand{\HH}{\mathrm{H}}
\newcommand{\Cdel}{C(d)}
\newcommand{\db}{\bar d}
\newcommand{\ones}{\mathbf{1}}
\newcommand{\Bin}{\beta}
\newcommand{\RDM}{\mathcal{R}_{\mathrm{DM}}}
\newcommand{\RVTR}{\mathcal{R}_{\mathrm{VTR}}}
\newcommand{\hg}{\eta}

\input{data_TIT/numbers.tex}

\begin{document}

\title{Improved Lower Bounds on the Capacity of the Binary Deletion Channel
via a Learning Approach to Run-Length Inputs}

\author{Hassan Khodaiemehr, Chen Feng, and Tolga M.~Duman%
\thanks{H.~Khodaiemehr and C.~Feng are with the School of Engineering, The University of British Columbia,
Okanagan Campus, Kelowna, BC, Canada (e-mail: hassan.khodaiemehr@ubc.ca; chen.feng@ubc.ca).}%
\thanks{T.~M.~Duman is with the Department of Electrical and Electronics Engineering,
Bilkent University, Ankara 06800, Turkey (e-mail: duman@ee.bilkent.edu.tr).}%
\thanks{The work of C.~Feng was supported in part by the NSERC Discovery
Grants RGPIN-2023-04962.
The work of T.~M.~Duman was funded by the European Union through the
ERC Advanced Grant 101054904: TRANCIDS.
Views and opinions expressed are, however, those of the authors only and
do not necessarily reflect those of the European Union or the European
Research Council Executive Agency.
Neither the European Union nor the granting authority can be held
responsible for them.}%
\thanks{This paper was presented in part at the Canadian Workshop on Information Theory (CWIT), 2024. An earlier version of this paper is available on arXiv~\cite{khodaiemehr2026improvedlowerboundscapacity}.}
}

\maketitle

\begin{abstract}
The best constructive lower bounds on the capacity of the binary deletion channel come from
random codes with independent run lengths, yet the two strongest such bounds, due to Drinea and Mitzenmacher and to Venkataramanan, Tatikonda, and Ramchandran, have
been evaluated mainly for geometric or low-parameter run-length laws.
We let a learning algorithm choose the run-length law freely, which raises three challenges.
First, the Drinea--Mitzenmacher functional is an infinite sum over the ways deletions merge runs; we
show that it depends on the law only through its mean, its full-deletion probability, and bilinear
forms in the law and its renewal weights, so that gradients of truncations are exact and every truncation can only
lower the bound.
Second, for non-Markov inputs the output is no longer Markov and the
Venkataramanan--Tatikonda--Ramchandran analysis breaks down; we show that the residual length of the
current input run turns the output into a hidden Markov chain, which extends the bound to every
finite-support law.
Third, its correction term counts only output runs formed from three input runs; we prove a larger
correction that accounts for every output run formed by several input runs, which improves the
published bound even for truncated geometric laws.
Certified by interval arithmetic, the new bounds exceed all previously published
deterministic lower bounds at every tabulated deletion probability, by up to $\AbsGainVal$~bits per
channel use and $\RelGainVal\%$.
At large deletion probability the learned laws concentrate on run-length clusters with survivor
counts about two standard deviations apart, like a pulse-amplitude constellation.
\end{abstract}

\begin{IEEEkeywords}
Deletion channel, channel capacity, lower bounds, run-length codes, interval arithmetic.
\end{IEEEkeywords}

\section{Introduction}
\label{sec:intro}
\IEEEPARstart{S}{ynchronization} channels---deletions, insertions, duplications, and more general timing
errors---remain among the most challenging models in information
theory~\cite{Mitzenmacher2009,CheraghchiCrawford2020,Mercier2010,MercierTarokhLabeau2012,IyengarSiegelWolf2016}.
The independent and identically distributed (i.i.d.) binary deletion channel with deletion probability $d\in[0,1]$ (BDC$_d$) deletes each
transmitted bit independently with probability $d$ and concatenates the surviving bits without
revealing deletion locations. Unlike the binary erasure channel, the receiver observes a
\emph{variable-length} subsequence of the input and obtains no side information indicating where
deletions occurred. This combination of random output length and unknown alignment places BDC$_d$
outside the classical discrete memoryless channel (DMC) toolkit, and standard single-letter
characterizations do not apply.
Deletions arise in magnetic and optical recording, packet networks with lost timing, DNA/RNA
storage and sequencing, and high-speed links subject to bit slips.

A coding theorem and weak converse for a broad class of discrete channels with synchronization
errors were established by Dobrushin~\cite{Dobrushin1967}, building on his information-spectrum
formulation of Shannon's theorem~\cite{Dobrushin1959,Shannon1948}.
Information stability implies that the capacity $\Cdel$ exists and equals the normalized limit of
finite-block mutual information,
\begin{equation}
\label{eq:C-limit}
\Cdel
\;=\;
\lim_{N\to\infty}\frac1N\max_{p_{X^N}}\I(X^N;Y),
\end{equation}
yet the theorem does not yield a closed-form expression or a tractable numerical procedure for
$\Cdel$~\cite{Mitzenmacher2009,CheraghchiCrawford2020}.
Information stability has more recently been established for synchronization errors governed by
stationary ergodic finite-state Markov chains, covering insertions, deletions, and
substitutions (IDS)~\cite{MorozovDuman2024MarkovISIT,MorozovDuman2026Markov}.
Gallager~\cite{Gallager1961} gave early lower bounds on channel capacity by sequential decoding
of convolutional codes over IDS channels.
Diggavi and Grossglauser related deletion models to finite-buffer losses and showed that deletion
capacity remains within a controlled gap of erasure capacity under mismatched
decoding~\cite{DiggaviAllerton2001,Diggavi2007}.
Random coding with alternating runs whose lengths are drawn i.i.d.\ from a block-length law
(often geometric) subsequently produced strong constructive lower bounds, including the
celebrated estimate $\Cdel>0.1185(1-d)$~\cite{MitzenmacherDrinea2006,DrineaMitzenmacher2006LB,DrineaMitzenmacher2007,KirschDrinea2009},
later improved by large-scale computation~\cite{RubinsteinCon2023}.
Those constructions, and the present paper, operate at infinite blocklength: a stationary renewal
input is fed to BDC$_d$ and an achievable rate is read from an explicit functional of the
run-length law~$P$.

Nontrivial upper bounds arrived comparatively late.
Diggavi, Mitzenmacher, and Pfister~\cite{DiggaviMitzenmacherPfister2007} obtained the first
nontrivial upper bounds via genie-aided run information.
Fertonani and Duman~\cite{FertonaniDuman2010} substantially tightened numerical upper (and some lower)
bounds by evaluating auxiliary memoryless channels with the Blahut--Arimoto (BA)
algorithm~\cite{Blahut1972,Arimoto1972}.
Rahmati and Duman~\cite{RahmatiDuman2015}
improved fragmentation upper bounds for $d\in(0.65,1)$, obtaining in particular
$\Cdel\le 0.4143(1-d)$ on that interval.
A related fragmentation argument yields a nontrivial upper bound on nonbinary deletion capacity
via the inequality $C_{2K}(d)\le C_2(d)+(1-d)\log K$~\cite{RahmatiDuman2013Nonbinary}.
For small deletion probability, Kalai, Mitzenmacher, and Sudan showed
$\Cdel=1-\Theta(d\log(1/d))$~\cite{KalaiSudan2010}, while Kanoria and Montanari computed a three-term
series expansion of $\Cdel$ together with near-optimal inputs at that order~\cite{KanoriaMontanari2013}.
In the joint small-deletion/small-substitution regime, the i.i.d.\ deletion/substitution capacity
admits the approximation $1-H(p_d)-H(p_s)$~\cite{KazemiDuman2025DelSub}.
Parallel small-noise expansions for binary insertion channels (simple and Gallager insertion
models) establish capacity approximations that differ from Bernoulli$(1/2)$ rates only in higher-order
terms~\cite{TeginDuman2025InsertionISIT,TeginDuman2026InsertionCap}.
Cheraghchi~\cite{Cheraghchi2019} derived the first fully explicit, computer-free nontrivial upper
bounds for a wide range of~$d$, including forms involving the golden ratio
$\phi=(1+\sqrt5)/2$.
Computational BA upper bounds were subsequently tightened by Rubinstein and
Con~\cite{RubinsteinCon2023} (to $n\le 28$), and further by Pinto and
Ribeiro~\cite{PintoRibeiro2026ParallelBA} via a GPU-parallelized Blahut--Arimoto procedure on the
exact finite-length deletion DMCs $\mathrm{BDC}_{n,k}$.
Combined with the Rahmati--Duman fragmentation lemma, their bound at $d'=0.64$ yields the improved
high-noise statement $\Cdel\le 0.3578(1-d)$ for all $d\ge 0.64$.
These $C_{n,k}$ computations yield upper bounds on $\Cdel$~\cite{FertonaniDuman2010}.

A concurrent manuscript of Papailiopoulos~\cite{Papailiopoulos2026}, developed with
interacting language-model agents as disclosed therein, gives a computer-assisted
enclosure $L(d)\le\Cdel\le U(d)$ whose half-width is at most $0.0095$ bits uniformly in~$d$,
with mean radius below $0.00652$.
The converse side uses a stationary-source reduction together with two finite-window tests that
control unobserved input beyond a computational horizon.
The achievability side evaluates specified finite-state sources (output entropy minus deletion and
residual-mask entropy) and independent-run sources (the Drinea--Mitzenmacher jigsaw rate plus a
finite lower estimate of run-start uncertainty, in the spirit of
Kirsch--Drinea~\cite{KirschDrinea2009} with disjoint counting in place of simulation).
Pointwise bounds are transported to every $d\in[0,1]$ by the monotonicity of
$\Cdel/(1-d)$~\cite{RahmatiDuman2015}.
That programme is complementary to the present work: it produces both lower and upper bounds, and a uniform
error certificate, from finite inequalities with directed arithmetic.
It does not optimize the Drinea--Mitzenmacher or Venkataramanan functionals over a free run-length
law~$P$, and it does not identify the sparse comb inputs that appear at large~$d$ below.
We compare the two lower envelopes in Section~\ref{sec:results}. The results of
\cite{Papailiopoulos2026} are stronger on most of the interior of $[0,1]$, while the present
lower bound is larger at low and high deletion probabilities ($d=0.01$, $d=0.80$ and $d=0.90$).

Surveys of Mitzenmacher~\cite{Mitzenmacher2009} and Cheraghchi and
Ribeiro~\cite{CheraghchiCrawford2020} give comprehensive accounts; related synchronization models
(sticky channels, timing errors, combined sync/substitution noise) are treated
in~\cite{MitzenmacherSticky2008,MercierTarokhLabeau2012,IyengarSiegelWolf2016}.
Deep-learning detectors for concatenated marker codes over deletion and insertion
channels address practical decoding rather than capacity-achieving input
optimization~\cite{KargiDuman2024DL,KargiDuman2026SymbolLevel}.
Burst, segmented, and mixed deletion/insertion/substitution
codes~\cite{LiuDuman2025Segmented,WangYaakobiDuman2026Burst,LiuDuman2026TwoBurstsDI,LiuCaiDuman2026DelSubBurst},
together with DNA-motivated marker and composite-alphabet
designs~\cite{HaghighatDuman2025HalfMarker,TeginDuman2026CompositeDNA}, are likewise complementary
to the binary i.i.d.\ deletion capacity studied here.

The run-length functionals of Drinea--Mitzenmacher (DM) and of VTR are achievable rates for inputs
whose runs have i.i.d.\ lengths with law~$P$: the DM functional is the rate of a random code with
jigsaw decoding, and the VTR functional bounds the information rate of a stationary renewal input.
Both were evaluated only on low-dimensional families: geometric laws (Markov inputs) and the
two-parameter family of~\cite{DrineaMitzenmacher2007}.
We treat $P$ as a free distribution on a finite support and make three contributions.
\begin{enumerate}
\item \emph{DM functional.}
We show that the functional depends on $P$ only through its mean, its full-deletion probability, and
bilinear forms in $P$ and the renewal weights $\Psi$ of the law, that the received run-length law is a
mixture of binomial laws with weights proportional to $\Psi$, and that its mean has a closed form
(Lemma~\ref{lem:dm-reduce}).
Truncation of every infinite sum is one-sided (Proposition~\ref{prop:dm-trunc}).
\item \emph{VTR functional.}
We extend the VTR decomposition from Markov inputs to every finite-support run-length law.
The output of a renewal input is not Markov, but the residual length of the current input run turns
it into a hidden Markov chain (Lemma~\ref{lem:residual}), whose conditional entropies are computed
exactly by a finite filter (Proposition~\ref{prop:filter}).
We prove a new lower bound on the correction term of the decomposition, which accounts for every
output run formed from several input runs (Lemma~\ref{lem:chain}) and exceeds the VTR correction
by at least the factor $1+D$ (Proposition~\ref{prop:vtr-compare}).
The resulting bound (Theorem~\ref{thm:vtr}) improves the published VTR bound even when the input is
restricted to geometric run-length laws truncated to a finite support.
\item \emph{Optimization and verified evaluation.}
Both functionals are maximized over the simplex by gradient methods, and every reported number is
the lower endpoint of an interval-arithmetic enclosure computed with directed rounding and a
verified logarithm (Section~\ref{sec:numerics}, Proposition~\ref{prop:verified}).
The resulting lower bounds exceed all previously published deterministic lower bounds at every
deletion probability for which run-length bounds were tabulated (Table~\ref{tab:comparison}).
At large $d$ the optimized laws concentrate on a few separated clusters of run lengths whose
survivor counts are about two binomial standard deviations apart, a structure that neither the
geometric family nor the two-parameter family of~\cite{DrineaMitzenmacher2007} contains.
\end{enumerate}

The optimization procedure selects only the run-length law $P$, while the validity of the resulting bounds follows from Theorems~\ref{thm:dm} and~\ref{thm:vtr}, Proposition~\ref{prop:dm-trunc}, and the verified enclosure procedure of Proposition~\ref{prop:verified}.
Compared with the earlier version of this work~\cite{khodaiemehr2026improvedlowerboundscapacity}, the present paper makes three main advances. First, the correction term previously used for the VTR functional is replaced by the chain penalty of Lemma~\ref{lem:chain}, yielding a fully rigorous formulation. Second, all numerical evaluations are carried out using verified interval arithmetic rather than double-precision arithmetic. Third, complete proofs are provided for all auxiliary results, and the presentation has been reorganized to provide the supporting arguments in a more detailed and systematic manner.

The rest of this paper is organized as follows. Section~\ref{sec:model} sets up the model and proves the properties of stationary renewal inputs
and their renewal weights.
Sections~\ref{sec:dm} and~\ref{sec:vtr} treat the two functionals.
Section~\ref{sec:numerics} describes the optimization and the verified evaluation, and
Section~\ref{sec:results} reports the bounds.
Section~\ref{sec:discussion} discusses open problems, and Section~\ref{sec:conclusion} concludes the paper.

\input{sec_model_TIT}
\input{sec_dm_TIT}
\input{sec_vtr_TIT}
\input{sec_numerics_TIT}
\input{sec_results_TIT}

\input{sec_discussion_TIT}

\section{Conclusion}
\label{sec:conclusion}

We treated the run-length law of the Drinea--Mitzenmacher and Venkataramanan--Tatikonda--Ramchandran
achievability functionals as a free distribution.
For the first functional we obtained an exact bilinear reduction, a binomial-mixture form of the
received run-length law, and one-sided truncation.
For the second we extended every term to general finite-support laws and proved a new lower bound on
the correction term, which already improves the published bound for truncated geometric laws.
Gradient search over the simplex, followed by interval-arithmetic evaluation, gives lower bounds
that exceed all previously published deterministic lower bounds at every deletion probability for
which run-length bounds were tabulated.
The concurrent enclosure of~\cite{Papailiopoulos2026} has larger lower endpoints at
$d\in\{\PPbetter\}$, while the present bounds are larger at $d\in\{\WeBetter\}$.
Several questions remain open, including tighter bounds on the correction term, the structure of
maximizing laws at large $d$, and the extension to other synchronization channels.

\bibliographystyle{IEEEtran}
\bibliography{refs_TIT}

\end{document}

%% file: data_TIT/numbers.tex
\newcommand{\AbsGainVal}{6.86\times10^{-3}}
\newcommand{\AbsGainD}{0.30}
\newcommand{\RelGainVal}{6.8}
\newcommand{\RelGainD}{0.90}
\newcommand{\VtrUpTo}{0.80}
\newcommand{\DmFrom}{0.85}
\newcommand{\NumTab}{19}
\newcommand{\MeanGap}{2.11}
\newcommand{\GapMin}{1.64}
\newcommand{\GapMax}{2.86}
\newcommand{\NGaps}{20}
\newcommand{\CentEight}{10.48, 29.95, 56.31}
\newcommand{\ClusterMassMin}{83}
\newcommand{\ClusterMassMax}{92}
\newcommand{\KappaGain}{2.4}
\newcommand{\GapEight}{2.24\, \textrm{and}\, 2.03}
\newcommand{\MarkovGainMin}{0.1}
\newcommand{\MarkovGainMax}{4.7}
\newcommand{\FreeGainMax}{2.5}
\newcommand{\DmFreeMaxD}{0.60}
\newcommand{\DmFreeMax}{2.41}
\newcommand{\DmFreeRelNine}{6.7}
\newcommand{\PPbetter}{0.05,\allowbreak 0.1,\allowbreak 0.2,\allowbreak 0.3,\allowbreak 0.4,\allowbreak 0.5,\allowbreak 0.6,\allowbreak 0.7}
\newcommand{\WeBetter}{0.01,\allowbreak 0.8,\allowbreak 0.9}
\newcommand{\MCphiA}{0.0060}
\newcommand{\MCestA}{0.0085}
\newcommand{\MCciA}{0.0011}
\newcommand{\MCphiB}{0.0465}
\newcommand{\MCestB}{0.0669}
\newcommand{\MCciB}{0.0038}
\newcommand{\MCphiC}{0.1411}
\newcommand{\MCestC}{0.1929}
\newcommand{\MCciC}{0.0044}
\makeatletter
\@namedef{env@0.01}{0.92215}
\@namedef{env@0.02}{0.86456}
\@namedef{env@0.03}{0.81478}
\@namedef{env@0.04}{0.77017}
\@namedef{env@0.05}{0.72946}
\@namedef{env@0.06}{0.69191}
\@namedef{env@0.07}{0.65703}
\@namedef{env@0.08}{0.62447}
\@namedef{env@0.09}{0.59396}
\@namedef{env@0.1}{0.56530}
\@namedef{env@0.15}{0.44469}
\@namedef{env@0.2}{0.35332}
\@namedef{env@0.25}{0.28335}
\@namedef{env@0.3}{0.22936}
\@namedef{env@0.35}{0.18727}
\@namedef{env@0.4}{0.15403}
\@namedef{env@0.45}{0.12739}
\@namedef{env@0.5}{0.10576}
\@namedef{env@0.55}{0.08787}
\@namedef{env@0.6}{0.07277}
\@namedef{env@0.65}{0.05978}
\@namedef{env@0.7}{0.04837}
\@namedef{env@0.75}{0.03821}
\@namedef{env@0.8}{0.02901}
\@namedef{env@0.85}{0.02070}
\@namedef{env@0.9}{0.01322}
\@namedef{env@0.95}{0.00634}
\newcommand{\env}[1]{\@nameuse{env@#1}}
\makeatother

%% file: sec_model_TIT.tex
\section{Model and Renewal Inputs}
\label{sec:model}

Table~\ref{tab:notation} collects the main symbols and notation. Logarithms are to base $2$ unless written $\ln$.
For a probability vector $q$ on a countable set, $\HH(q)=-\sum_k q_k\log q_k$ with $0\log0=0$, and
$h(p)=\HH(p,1-p)$ is the binary entropy function.
We write $\db=1-d$, use the convention $\binom{a}{k}=0$ unless $0\le k\le a$, and write
\[
\Bin_a(k)=\binom{a}{k}\db^{\,k}d^{\,a-k},\qquad a,k\ge0,
\]
for the probability that exactly $k$ of $a$ bits survive when each bit is deleted independently with
probability $d$.
For a finite or countable random vector $A$, $A^t=(A_1,\dots,A_t)$ and $A_s^t=(A_s,\dots,A_t)$.
The indicator of an event $E$ is $\ones_E$ or $\ones\{E\}$.

\begin{table}[!t]
\caption{Main notation}
\label{tab:notation}
\centering
\footnotesize
\begin{tabular}{@{}ll@{}}
\toprule
Symbol & Meaning \\
\midrule
$d$, $\db=1-d$ & deletion and survival probabilities \\
$P=(P_z)_{z=1}^{Z}$ & run-length law of the input, $P_Z>0$ \\
$\mu$ & mean run length $\sum_z zP_z$ \\
$D$ & probability that a run is deleted, $\sum_zP_zd^{\,z}$ \\
$L_j$, $N_j$ & length of input run $j$; number of its surviving bits \\
$R_n$ & number of input runs that meet the window $\{1,\dots,n\}$ \\
$\ell_j$, $\tilde N_j$ & length of run $j$ inside the window; its survivors there \\
$\Bin_a(k)$ & $\binom{a}{k}\db^{\,k}d^{\,a-k}$ \\
$Q_{r,i}$ & probability that $i$ i.i.d.\ runs have total length $r$ \\
$\Psi_r$ & renewal weight $\sum_{i\ge0}D^iQ_{r,i}$ \\
$q=(q_k)$ & law of the length of an output run \\
$\rho$ & output runs per input bit, see~\eqref{eq:rho} \\
$U_t$ & residual length of the input run after output bit $t$ \\
$S_t$ & runs deleted entirely between output bits $t-1$, $t$ \\
$\pi$, $\nu$ & residual laws, see~\eqref{eq:pi} and~\eqref{eq:nu} \\
$\hg(z,r,s)$ & entropy of the hypergeometric split~\eqref{eq:hsplit} \\
$\Theta_{z,r}$ & chain kernel~\eqref{eq:Kchain} \\
\bottomrule
\end{tabular}
\end{table}

An input $X^n\in\{0,1\}^n$ is sent through BDC$_d$.
Let $\Delta_1,\Delta_2,\dots$ be i.i.d.\ Bernoulli$(d)$ random variables, independent of the input,
where $\Delta_k=1$ means that bit $k$ is deleted.
The output is $Y=(X_k:\,k\le n,\ \Delta_k=0)$, listed in increasing order of $k$, and its length is
$M_n=\sum_{k=1}^n(1-\Delta_k)\sim\mathrm{Bin}(n,\db)$.
For every $n$ and every input law, $\frac1n\I(X^n;Y)\le\frac1n\max_{p_{X^n}}\I(X^n;Y)$, and the
right-hand side converges to $\Cdel$ by~\eqref{eq:C-limit}.
Hence every sequence of input laws satisfies
\begin{equation}
\label{eq:liminf-C}
\liminf_{n\to\infty}\frac1n\I(X^n;Y)\le\Cdel .
\end{equation}
All lower bounds in this paper are obtained by bounding the left-hand side of~\eqref{eq:liminf-C}
from below for a stationary renewal input.

\subsection{Stationary renewal inputs}

A \emph{run} is a maximal constant substring.
Throughout, $P=(P_z)$ is a probability law on $\{1,\dots,Z\}$ with $P_Z>0$, and $L$ denotes a
generic random variable with law $P$.
We write
\begin{equation}
\label{eq:mu-D}
\mu=\sum_{z=1}^{Z}zP_z,
\qquad
D=\sum_{z=1}^{Z}P_zd^{\,z}.
\end{equation}
Thus $\mu$ is the mean run length and $D$ is the probability that all bits of a run of length $L$
are deleted.
Since $0<d<1$, we have $0<P_Zd^{\,Z}\le D\le d<1$.

\begin{definition}[Stationary renewal input]
\label{def:renewal}
Let $L_1$ have the law
\begin{equation}
\label{eq:first-run}
\Pr[L_1=u]=\frac{\Pr[L\ge u]}{\mu},\qquad u=1,\dots,Z,
\end{equation}
let $L_2,L_3,\dots$ be i.i.d.\ with law $P$, and let $B$ be a uniform bit, all independent.
Let $\theta_0=0$ and $\theta_j=L_1+\dots+L_j$.
The \emph{stationary renewal input} with run-length law $P$ is the sequence $X_1,X_2,\dots$ with
$X_k=B\oplus((j-1)\bmod2)$ for $\theta_{j-1}<k\le\theta_j$.
\end{definition}

Run $j$ occupies the positions $\theta_{j-1}+1,\dots,\theta_j$, its length is $L_j$, and the symbols of
consecutive runs alternate.
The law~\eqref{eq:first-run} sums to one because $\sum_{u\ge1}\Pr[L\ge u]=\E[L]=\mu$.

\begin{lemma}[Stationarity]
\label{lem:stationary}
The sequence $(X_k)_{k\ge1}$ of Definition~\ref{def:renewal} is stationary.
For every $k\ge1$, the bit $X_k$ is uniform, the number $V_k$ of positions after $k$ that belong
to the run containing $k$ has the law
\begin{equation}
\label{eq:pi}
\pi(u)=\frac{\Pr[L>u]}{\mu},\qquad u=0,\dots,Z-1,
\end{equation}
and $X_k$ and $V_k$ are independent.
\end{lemma}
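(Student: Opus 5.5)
The plan is to describe the input through a Markov chain on an enlarged state space, and to show that this chain starts in its stationary law. For each $k\ge1$, define the state $W_k=(X_k,V_k)$, where $V_k$ is the number of positions after $k$ in the run containing $k$. Then $V_k\in\{0,\dots,Z-1\}$. By Definition~\ref{def:renewal}, the process $(W_k)$ is a Markov chain with the following transitions:
\begin{itemize}
\item if $V_k=u\ge1$, then $W_{k+1}=(X_k,u-1)$ deterministically;
\item if $V_k=0$, position $k+1$ starts a fresh run of some index $j\ge2$, with length $L_j$. Since $L_j$ is independent of everything before it, $W_{k+1}=(1\oplus X_k,\,L_j-1)$ with $\Pr[L_j-1=u]=P_{u+1}$.
\end{itemize}
The Markov property follows because, given the past up to $k$, the only fresh randomness used after a run boundary is the next i.i.d.\ length. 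Each $X_k$ is a deterministic function of $W_k$, so the sequence $(X_k)$ is stationary as soon as $(W_k)$ is. It therefore suffices to show that $W_1$ has a law $\sigma$ invariant under this kernel.

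The first step computes the law of $W_1$ directly. We have $X_1=B$ and $V_1=L_1-1$, with $B$ independent of $L_1$. By~\eqref{eq:first-run},
\[
\Pr[V_1=u]=\frac{\Pr[L\ge u+1]}{\mu}=\frac{\Pr[L>u]}{\mu}=\pi(u).
\]
So $\sigma(b,u)=\tfrac12\pi(u)$, i.e.\ $X_1$ and $V_1$ are independent, with $X_1$ uniform and $V_1\sim\pi$.

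The second step checks invariance by a one-line balance computation. The mass entering state $(b,u)$ has two sources:
\begin{itemize}
\item from $(b,u+1)$ by decrement, contributing $\tfrac12\pi(u+1)=\tfrac12\Pr[L>u+1]/\mu$, which vanishes when $u=Z-1$;
\item from $(1\oplus b,0)$ by renewal, contributing $\tfrac12\pi(0)P_{u+1}=\tfrac12 P_{u+1}/\mu$, using $\pi(0)=1/\mu$ since $\Pr[L>0]=1$.
\end{itemize}
Their sum is $\tfrac12\bigl(\Pr[L>u+1]+\Pr[L=u+1]\bigr)/\mu=\tfrac12\Pr[L>u]/\mu=\sigma(b,u)$. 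Hence $\sigma$ is invariant.

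It follows that $W_k\sim\sigma$ for every $k$, so each $X_k$ is uniform, $V_k\sim\pi$, and $X_k$ and $V_k$ are independent. The Markov chain started in an invariant law is stationary, so $(W_k)$ is stationary, and therefore so is its image $(X_k)$.

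The main obstacle is making the Markov property rigorous, since run indices are random. We handle it by showing that, conditional on $W_1^k$ and on $V_k=0$, the next run length is independent of $W_1^k$ and has law $P$. We argue this by decomposing on the index $j$ of the run that ends at $k$ (the event $\theta_{j-1}<k=\theta_j$). This event is measurable with respect to $B,L_1,\dots,L_j$, and $L_{j+1}$ is independent of these variables with law $P$. Summing over $j$ then gives the transition law for all $k$ at once.
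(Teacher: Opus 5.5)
Your proof is correct, but it takes a different route from the paper's. The paper never sets up a Markov chain on input positions. Instead, it shows that the one-step shift of the generating variables has the same law as $(L_1,B,L_2,\dots)$. This shift is $(L'_1,B',L'_2,\dots)$ with $L'_1=L_1-1$, $B'=B$, $L'_j=L_j$ on $\{L_1\ge2\}$, and $L'_1=L_2$, $B'=1-B$, $L'_j=L_{j+1}$ on $\{L_1=1\}$. The equality in law is proved by conditioning on these two events, which depend on $L_1$ alone. The marginal claims are then read off at $k=1$ and transported to every $k$ by stationarity, since $(X_k,V_k)$ is the same function of $(X_k,X_{k+1},\dots)$ for all $k$.

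The arithmetic core is the same in both arguments. The paper's identity $\Pr[L_1=u+1]+\Pr[L_1=1]P_u=\Pr[L\ge u]/\mu$ is exactly your balance equation after the substitution $V=L-1$. The difference lies in where the probabilistic work is done:
\begin{itemize}
\item The shift argument needs only conditional independence on two events determined by $L_1$, and it never has to handle random run indices.
\item Your route must establish the Markov property. You do this correctly by decomposing over $\{\theta_j=k\}$: on that event $W_1^k$ is a function of $(B,L_1,\dots,L_j)$, and $L_{j+1}$ is independent of these variables.
\item In return, your route makes the residual-length chain explicit. It gives the law of $(X_k,V_k)$, and the independence of its two components, for every $k$ directly from the invariant law $\sigma$.
\end{itemize}
Your chain is the input-side counterpart of the residual chain of Lemma~\ref{lem:residual}. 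Your balance check mirrors the direct verification of the invariance of $\pi$ that the paper gives after that lemma.
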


\begin{proof}
The shifted sequence $X'_k=X_{k+1}$, $k\ge1$, is obtained from Definition~\ref{def:renewal} with
the variables $(L'_1,B',\allowbreak L'_2,L'_3,\dots)$ defined as follows.
If $L_1\ge2$, then $L'_1=L_1-1$, $B'=B$, and $L'_j=L_j$ for $j\ge2$.
If $L_1=1$, then $L'_1=L_2$, $B'=1-B$, and $L'_j=L_{j+1}$ for $j\ge2$.
On each of the two events $\{L_1\ge2\}$ and $\{L_1=1\}$, which depend on $L_1$ alone, the variables
$L'_1$, $B'$, and $(L'_j)_{j\ge2}$ are conditionally independent, $B'$ is conditionally uniform, and
$(L'_j)_{j\ge2}$ is conditionally i.i.d.\ with law $P$.
The conditional laws of $B'$ and of $(L'_j)_{j\ge2}$ are the same on both events; hence, unconditionally,
$L'_1$, $B'$, and $(L'_j)_{j\ge2}$ are independent, $B'$ is uniform, $(L'_j)_{j\ge2}$ is i.i.d.\ with
law $P$, and it remains to identify the law of $L'_1$.
Since $\Pr[L_1=1]=1/\mu$, for $u\ge1$
\begin{align*}
\Pr[L'_1=u]&=\Pr[L_1=u+1]+\Pr[L_1=1]P_u\\
&=\frac{\Pr[L\ge u+1]+P_u}{\mu}=\frac{\Pr[L\ge u]}{\mu}.
\end{align*}
Thus $(L'_1,B',L'_2,\dots)$ has the same law as $(L_1,B,L_2,\dots)$, and $(X'_k)$ has the same law as
$(X_k)$, which is stationarity.
For $k=1$ we have $X_1=B$ and $V_1=L_1-1$, which are independent, and
$\Pr[L_1-1=u]=\Pr[L\ge u+1]/\mu=\pi(u)$.
Since $(X_k,V_k)$ is the same function of $(X_k,X_{k+1},\dots)$ for every $k$ (the number of
consecutive positions after $k$ that carry the symbol $X_k$), stationarity extends these properties
to every $k$.
\end{proof}

The input of blocklength $n$ is the prefix $X^n$.
It is the window of the stationary process, and its entropy rate is~\cite{Diggavi2007,VenkataramananTatikondaRamchandran2013}
\begin{equation}
\label{eq:entropy-rate}
\lim_{n\to\infty}\frac1n\HH(X^n)=\frac{\HH(P)}{\mu}.
\end{equation}
For the geometric law $P_z=(1-p)p^{z-1}$ on all positive integers, which gives a first-order Markov
input, $\mu=1/(1-p)$ and the right-hand side equals $h(p)$.
All laws used in this paper have finite support; truncated geometric laws are used as a Markov
control in Section~\ref{sec:results}.

\subsection{The window and deletions at the level of runs}
\label{subsec:runs}

Let $R_n=\min\{j:\theta_j\ge n\}$ be the number of runs that meet the window $\{1,\dots,n\}$, and let
$\ell_j=L_j$ for $j<R_n$ and $\ell_{R_n}=n-\theta_{R_n-1}$ be the lengths of these runs inside the window.
Runs $2,\dots,R_n-1$ lie entirely inside the window; runs $1$ and $R_n$ may be cut by its boundary.
Given $n$, the window $X^n$ and the vector $(B,\ell_1,\dots,\ell_{R_n})$ determine each other.

Let $N_j$ be the number of surviving bits of run $j$, and $\tilde N_j$ the number of surviving
bits of run $j$ inside the window, for $j\le R_n$.
Then $\tilde N_j=N_j$ for $j<R_n$.
The numbers of survivors depend on disjoint sets of deletion indicators, which are independent of the
input.
Hence:
\begin{enumerate}
\item conditionally on $X^n$, the variables $\tilde N_1,\dots,\tilde N_{R_n}$ are independent with
$\tilde N_j\sim\mathrm{Bin}(\ell_j,\db)$;
\item the pairs $(L_j,N_j)$, $j\ge2$, are i.i.d.\ with $\Pr[L_j=z,N_j=m]=P_z\Bin_z(m)$, and
independent of $(L_1,N_1,B)$.
\end{enumerate}
The output is the concatenation, over $j=1,\dots,R_n$, of $\tilde N_j$ copies of the symbol of run
$j$.
Deletions shorten runs or delete them entirely; they never split a run, but runs of the same symbol
merge in the output when all runs between them are deleted.

The following notion is used by Drinea and Mitzenmacher~\cite{DrineaMitzenmacher2007}.

\begin{definition}[Group of an output run]
\label{def:group}
Let $o$ be a run of $Y$ and let $j$ be the input run that contains the first bit of $o$.
The \emph{group} of $o$ consists of the input runs $j,j+1,\dots,j+2i$, where $j+2i+1$ is the first
run after $j$ that has the opposite symbol and at least one surviving bit.
The integer $i\ge0$ is the \emph{type} of $o$.
The runs $j+1,j+3,\dots,j+2i-1$ are deleted entirely, and $o$ consists of the surviving bits of
the same-symbol runs $j,j+2,\dots,j+2i$.
\end{definition}

In a group, the first run has at least one surviving bit, and the other same-symbol runs may be
deleted entirely.
Definition~\ref{def:group} is applied to the output of the infinite input $X_1,X_2,\dots$, in which
every output run is followed by a run of the opposite symbol with a surviving bit with probability one;
in a finite window, the group of the last output run need not be determined by the window.
The law of a typical output run used in Section~\ref{sec:dm} is the limit of the empirical distribution
of the groups and lengths of the output runs~\cite{DrineaMitzenmacher2007}; only the last output run of
a window can be affected by the window boundary, and the limit does not depend on it.
Drinea and Mitzenmacher write $F(i,z,r,s')$ for the set of groups of type $i$ whose first run has
length $z$, whose other same-symbol runs have total length $r$, and whose deleted opposite-symbol
runs have total length $s'$.

\subsection{Renewal weights}

Let $L'_1,L'_2,\dots$ be i.i.d.\ with law $P$.
For integers $r,i\ge0$ let
\begin{equation}
\label{eq:Q-Psi}
Q_{r,i}=\Pr\Bigl[\,\sum_{m=1}^{i}L'_m=r\Bigr],
\qquad
\Psi_r=\sum_{i\ge0}D^{\,i}Q_{r,i},
\end{equation}
where the empty sum is zero, so that $Q_{r,0}=\ones\{r=0\}$.
Since $L'_m\ge1$, $Q_{r,i}=0$ for $i>r$, and the series defining $\Psi_r$ is a finite sum.
We set $\Psi_r=0$ for $r<0$.
Equivalently, $\Psi_r=\E\bigl[\sum_{i\ge0}D^{\,i}\ones\{L'_1+\dots+L'_i=r\}\bigr]$ is the expected
number of partial sums equal to $r$, where the partial sum of $i$ terms is counted with weight
$D^{\,i}$; it appears in both functionals.

\begin{lemma}[Renewal weights]
\label{lem:psi}
Let $0<d<1$ and let $P$ have finite support.
\begin{enumerate}
\item The weights satisfy the renewal recursion
\begin{equation}
\label{eq:psi-rec}
\Psi_0=1,\qquad\Psi_r=D\sum_{\ell=1}^{\min(r,Z)}P_\ell\Psi_{r-\ell}\quad(r\ge1).
\end{equation}
Equivalently, for every integer $a\ge0$,
\begin{equation}
\label{eq:psi-conv}
\sum_{z=1}^{Z}P_z\Psi_{a-z}=\frac{\Psi_a-\ones\{a=0\}}{D}.
\end{equation}
\item The following series converge, with
\begin{align}
\label{eq:psi-sum}
\sum_{r\ge0}\Psi_r&=\frac{1}{1-D},&
\sum_{r\ge1}r\Psi_r&=\frac{\mu D}{(1-D)^2},\\
\label{eq:psi-dsum}
\sum_{r\ge0}\Psi_rd^{\,r}&=\frac{1}{1-D^2}.
\end{align}
\item For $0\le x\le1$, $\sum_{r\ge0}\Psi_rx^r=1/(1-D\hat P(x))$ with $\hat P(x)=\sum_zP_zx^z$.
\end{enumerate}
\end{lemma}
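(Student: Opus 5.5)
The natural route is through generating functions, proving part~3 first and reading off part~2 from it; part~1 is a first-step decomposition.

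\emph{Part 1.} For $r\ge1$ the $i=0$ term vanishes, since $Q_{r,0}=\ones\{r=0\}$. For $i\ge1$ we condition on $L'_1=\ell$, which gives
\[
Q_{r,i}=\sum_{\ell}P_\ell Q_{r-\ell,i-1}.
\]
The index $\ell$ ranges over $1\le \ell\le\min(r,Z)$ because $Q_{s,i-1}=0$ for $s<0$. Multiplying by $D^i=D\cdot D^{i-1}$ and summing over $i\ge1$ gives~\eqref{eq:psi-rec}. All sums are finite, since $Q_{r,i}=0$ for $i>r$, so the interchange is harmless. Also $\Psi_0=Q_{0,0}=1$. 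For~\eqref{eq:psi-conv}, divide~\eqref{eq:psi-rec} by $D>0$, which is valid since $D\ge P_Zd^Z>0$. The convention $\Psi_r=0$ for $r<0$ lets the sum run over all $z\le Z$. The case $a=0$ reads $0=(1-1)/D$.

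\emph{Part 3.} For $0\le x\le1$ all terms are nonnegative, so Tonelli lets us reorder freely. We have $\sum_r Q_{r,i}x^r=\E[x^{L'_1+\dots+L'_i}]=\hat P(x)^i$ by independence. Hence
\[
\sum_r\Psi_r x^r=\sum_{i\ge0}(D\hat P(x))^i .
\]
This geometric series converges because $0\le D\hat P(x)\le D<1$, giving $1/(1-D\hat P(x))$.

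\emph{Part 2.} Set $x=1$: then $\hat P(1)=1$ and $\sum_r\Psi_r=1/(1-D)$. Set $x=d$: then $\hat P(d)=D$ by~\eqref{eq:mu-D}, so $\sum_r\Psi_rd^r=1/(1-D^2)$.

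For the first moment, go back to the $i$-sum rather than differentiating at the boundary point $x=1$. Swapping sums by Tonelli gives
\[
\sum_r r\Psi_r=\sum_i D^i\,\E[L'_1+\dots+L'_i]=\mu\sum_i iD^i=\frac{\mu D}{(1-D)^2}.
\]
The $r=0$ term contributes nothing, so the sum may start at $r\ge1$. This avoids justifying term-by-term differentiation at $x=1$. That justification is the only mildly delicate point; it could otherwise be handled by monotone convergence as $x\uparrow1$.
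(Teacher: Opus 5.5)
Your proposal is correct and follows essentially the same route as the paper: the recursion comes from conditioning on $L'_1$, and the sums come from exchanging the $r$- and $i$-summations, which nonnegativity justifies, together with $\E[x^{L'_1+\dots+L'_i}]=\hat P(x)^i$; the first moment is computed directly from $\sum_r rQ_{r,i}=i\mu$, exactly as the paper does. The only cosmetic difference is that you prove Part~3 first and specialize to $x=1$ and $x=d$, whereas the paper computes those two cases directly and then notes that the same computation gives Part~3.
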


\begin{proof}
Conditioning on $L'_1$ gives $Q_{r,i}=\sum_{\ell}P_\ell Q_{r-\ell,i-1}$ for $i\ge1$, with
$Q_{r',i-1}=0$ for $r'<0$.
For $r\ge1$ we have $Q_{r,0}=0$, and therefore
\begin{align*}
\Psi_r&=\sum_{i\ge1}D^{\,i}\sum_{\ell}P_\ell Q_{r-\ell,i-1}\\
&=D\sum_{\ell}P_\ell\sum_{i'\ge0}D^{\,i'}Q_{r-\ell,i'}
=D\sum_\ell P_\ell\Psi_{r-\ell},
\end{align*}
where all sums are finite.
Also $\Psi_0=Q_{0,0}=1$.
Since $D>0$, dividing by $D$ gives~\eqref{eq:psi-conv} for $a\ge1$; for $a=0$ both sides
of~\eqref{eq:psi-conv} vanish.

All terms below are nonnegative, and the order of summation may be exchanged.
Since $\sum_rQ_{r,i}=1$, $\sum_rrQ_{r,i}=\E[L'_1+\dots+L'_i]=i\mu$, and
$\sum_rd^{\,r}Q_{r,i}=\E[d^{\,L'_1+\dots+L'_i}]=D^{\,i}$ by independence,
\begin{align*}
\sum_{r}\Psi_r&=\sum_{i\ge0}D^{\,i}=\frac{1}{1-D},\\
\sum_rr\Psi_r&=\mu\sum_{i\ge0}iD^{\,i}=\frac{\mu D}{(1-D)^2},\\
\sum_r\Psi_rd^{\,r}&=\sum_{i\ge0}D^{\,2i}=\frac{1}{1-D^2}.
\end{align*}
The same computation with $x^r$ in place of $d^{\,r}$ gives
$\sum_r\Psi_rx^r=\sum_i(D\hat P(x))^i$, and $0\le D\hat P(x)\le D<1$.
\end{proof}

%% file: sec_dm_TIT.tex
\section{The Drinea--Mitzenmacher Functional}
\label{sec:dm}

Let $K$ be the length of an output run and $T$ its group (Definition~\ref{def:group}), both under
the law of a typical output run of the renewal input, that is, the limit of the empirical
distribution of the lengths and groups of the output runs of a long block, as
in~\cite{DrineaMitzenmacher2007}.
Summing the Bernoulli deletions over the runs of a group gives~\cite[Eq.~(7)]{DrineaMitzenmacher2007}
\begin{multline}
\label{eq:dm-joint}
\Pr\bigl[T\in F(i,z,r,s'),\,K=k\bigr]
=P_zQ_{r,i}Q_{s',i}\,d^{\,z+r+s'}
\\
\quad\times\Bigl(\frac{\db}{d}\Bigr)^{k}
\Bigl[\tbinom{z+r}{k}-\tbinom{r}{k}\Bigr],
\end{multline}
for $i,r,s'\ge0$ and $z,k\ge1$.
The bracket counts the sets of $k$ surviving positions among the $z+r$ same-symbol bits of the
group that contain at least one position of the first run, and $d^{\,z+r+s'}(\db/d)^k$ is the
probability of each such survival pattern together with the deletion of the $s'$ opposite-symbol
bits.
Let $q_k=\Pr[K=k]$ be the law of the output run length.

\begin{theorem}[{Drinea--Mitzenmacher~\cite[Thm.~3]{DrineaMitzenmacher2007}}]
\label{thm:dm}
If $P$ is geometric or has finite support, then
\begin{equation}
\label{eq:dm-lb}
\Cdel\ge\RDM(P,d):=-h(d)+\frac{\db}{\E[K]}\Bigl(\HH(q)+\Lambda(P)\Bigr),
\end{equation}
where
\begin{multline}
\label{eq:dm-lambda}
\Lambda(P)=\sum_{k\ge1}\sum_{i,z,r,s'}
\Pr\bigl[T\in F(i,z,r,s'),\,K=k\bigr]
\\
\quad\times\log\Bigl[\tbinom{z+r}{k}-\tbinom{r}{k}\Bigr].
\end{multline}
\end{theorem}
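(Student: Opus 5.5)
This is the Drinea--Mitzenmacher jigsaw-decoding bound, and I would prove it by a random-coding argument with a mismatched decoder. The first step fixes a blocklength $n$ and draws $2^{nR}$ codewords i.i.d.\ as windows $X^n$ of the stationary renewal input (Definition~\ref{def:renewal}). The receiver sees $Y$ and uses the decoder of~\cite{DrineaMitzenmacher2007}: it parses $Y$ into output runs and accepts codeword $x$ if $Y$ can be obtained from $x$ by an assignment of each output run to a group of consecutive input runs of $x$ (Definition~\ref{def:group}). The groups must be consistent with the typical joint empirical statistics of (type, $z$, $r$, $s'$, $k$). Because these are the jigsaw pieces, an error occurs only if some other independent codeword admits a typical jigsaw covering of $Y$. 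It therefore suffices to show that the probability that a fixed independent codeword is jigsaw-compatible with $Y$ is at most $2^{-n(\RDM(P,d)-\epsilon)}$. The union bound then gives reliability for $R<\RDM$, and hence $\Cdel\ge\RDM$ by the coding theorem (or via~\eqref{eq:liminf-C} with a Fano-type argument).

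The counting step comes next. The output has about $n\db$ bits and about $n\db/\E[K]$ runs, where $\E[K]$ is the mean output run length. For a given $Y$, I would bound the probability that a random independent codeword fits by counting. The number of ways to choose group structures and deletion patterns consistent with typical statistics is controlled by typicality of the empirical law of groups: the per-run log-count of compatible deletion placements inside a group of class $F(i,z,r,s')$ with $k$ survivors is $\log[\binom{z+r}{k}-\binom{r}{k}]$. The per-run log-probability under the renewal law of the run lengths and of the deletions is given by~\eqref{eq:dm-joint}. Combining these, the exponent of the probability that a random codeword together with some deletion pattern produces the fixed $Y$ is as follows. The output entropy $\HH(q)$ per output run plus $\Lambda(P)$ per output run is the entropy captured by the group decomposition. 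From this, $h(d)$ per input bit is subtracted, which accounts for the number of deletion patterns, $\binom{n}{nd}\approx2^{nh(d)}$. Converting per-output-run quantities to per-input-bit quantities with the factor $\db/\E[K]$ yields exactly~\eqref{eq:dm-lb}.

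Finiteness of all the expectations is needed for typicality and for the law of large numbers over output runs. For finite-support $P$ this follows from Lemma~\ref{lem:psi}: the geometric-type tails $\sum_r\Psi_r<\infty$ make the type $i$ geometrically distributed with ratio $D^2<1$. Hence the group sizes have exponential tails, $q$ has finite entropy, and the series in $\Lambda$ converges. The geometric case follows from the same tail argument, since $D<1$ still holds there.

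The main obstacle is the rigorous typicality and ergodicity step. The output runs of a long block do not form an i.i.d.\ sequence of groups, since boundary effects and the dependence of consecutive groups through the separating surviving run are present. One must show that their empirical distribution converges to the law~\eqref{eq:dm-joint}. I would first exploit the regenerative structure: after each surviving opposite-symbol run the process restarts, so groups are i.i.d.\ across regeneration points, apart from the first and last, which are negligible. The strong law then applies. This is also the point where I would simply invoke~\cite[Thm.~3]{DrineaMitzenmacher2007} for the combinatorial union bound rather than redo it.
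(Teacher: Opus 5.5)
The paper gives no proof of Theorem~\ref{thm:dm}. It quotes \cite[Thm.~3]{DrineaMitzenmacher2007} and uses it as a black box; everything it proves about the functional (Lemma~\ref{lem:dm-reduce}, Proposition~\ref{prop:dm-trunc}) is downstream evaluation. Your closing fallback, invoking \cite[Thm.~3]{DrineaMitzenmacher2007}, is therefore exactly the paper's route. Your regeneration remark also correctly explains the product form \eqref{eq:dm-joint}: each group starts at the first opposite-symbol run with a survivor, and everything after that run is fresh. However, the self-contained counting you sketch is wrong at precisely the step where the jigsaw bound differs from the earlier bound, so it cannot replace the citation.

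You bound the probability that an independent codeword ``together with some deletion pattern'' produces $Y$. In that accounting you treat $\log[\binom{z+r}{k}-\binom{r}{k}]$ as a count of compatible deletion placements and $h(d)$ as the logarithm of the number $\binom{n}{nd}$ of deletion patterns. A union bound over deletion patterns cannot produce $+\Lambda(P)$. For a fixed sequence of group types $t_j\in F(i_j,z_j,r_j,s'_j)$, whether an independent codeword realizes it does not depend on which survival pattern is used inside each group. Summing over patterns therefore overcounts by about $2^{m\Lambda(P)}$, where $m\approx n\db/\E[K]$ is the number of output runs, and leaves only $-h(d)+\frac{\db}{\E[K]}\HH(q)$.

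The jigsaw decoder instead takes the union over group-type sequences only. About $2^{m\HH(T\mid K)}$ of them are jointly typical with the run lengths of $Y$, and an independent codeword realizes each with probability $\prod_jP_{z_j}Q_{r_j,i_j}Q_{s'_j,i_j}$. Taking $-\log$ of \eqref{eq:dm-joint} gives, per output run,
\begin{multline*}
\HH(T\mid K)-\E\bigl[-\log(P_zQ_{r,i}Q_{s',i})\bigr]\\
=-\HH(q)-\Lambda(P)+\E\bigl[-k\log\db-(z+r+s'-k)\log d\bigr],
\end{multline*}
where $(z,r,s',k)$ are the parameters of the random pair $(T,K)$ and expectations are under \eqref{eq:dm-joint}. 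The groups partition the input, so $m\E[k]\approx n\db$ and $m\E[z+r+s'-k]\approx nd$. The last term thus contributes $nh(d)$, and the exponent is $-n\RDM(P,d)$. So $h(d)$ comes from the probabilities of the survival patterns, not from counting deletion patterns in the union. $\Lambda(P)$ is the saving from not having to resolve the pattern inside a group.

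There is also a minor slip: summing \eqref{eq:dm-joint} gives $\Pr[\text{type } i]=(1-D)D^{\,i}$, so the type is geometric with ratio $D$, not $D^2$.
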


The factor $\db/\E[K]$ is the number of output runs per input bit.
The term $\Lambda(P)$ is the expected logarithm of the number of survival patterns consistent with an
output run and its group, and it measures how much the jigsaw decoder of~\cite{DrineaMitzenmacher2007}
gains from knowing the group structure.
Drinea and Mitzenmacher evaluated~\eqref{eq:dm-lb} for geometric laws and for a two-parameter family,
and noted that it is not clear how the law should be chosen~\cite{DrineaMitzenmacher2007}.
Rubinstein and Con~\cite{RubinsteinCon2023} optimized a surrogate of~\eqref{eq:dm-lb} over general
laws and substituted the result into~\eqref{eq:dm-lb}.
The next lemma removes the need for a surrogate: it reduces every quantity in~\eqref{eq:dm-lb} to
series in $P$ and its renewal weights, each of which is itself a finite sum.
The series are infinite even for finite-support $P$ (for example, $P_1=1$ gives $\Psi_r=d^{\,r}$ for
all $r$), and Proposition~\ref{prop:dm-trunc} below shows that every finite truncation of them gives a
valid lower bound.

\subsection{Renewal reduction}

For $z\ge1$ and $r,k\ge0$ let
\begin{align}
\label{eq:w}
w(z,r,k)&=\Bigl[\tbinom{z+r}{k}-\tbinom{r}{k}\Bigr]\db^{\,k}d^{\,z+r-k},\\
\label{eq:G}
G_{z,r}&=\sum_{k=1}^{z+r}w(z,r,k)\log\Bigl[\tbinom{z+r}{k}-\tbinom{r}{k}\Bigr].
\end{align}
For $1\le k\le z+r$, the bracket in~\eqref{eq:w} is the number of $k$-subsets of $z+r$ positions
that meet the first $z$ positions; it is at least $1$ because $z\ge1$.
For $k>z+r$ both binomial coefficients vanish, $w(z,r,k)=0$, and such $k$ are excluded from all sums that carry the logarithm of the bracket.
Hence $w(z,r,k)\ge0$, the logarithm in~\eqref{eq:G} is nonnegative whenever $w(z,r,k)>0$, and
$0\le G_{z,r}\le(z+r)\sum_kw(z,r,k)\le z+r$.

\begin{lemma}[Renewal reduction]
\label{lem:dm-reduce}
Let $0<d<1$ and let $P$ have finite support.
Then, for every $k\ge1$,
\begin{align}
\label{eq:q-bilinear}
q_k&=\sum_{z\ge1}\sum_{r\ge0}P_z\Psi_r\,w(z,r,k)
=\frac{1-D^2}{D}\sum_{a\ge1}\Psi_a\,\Bin_a(k),\\
\label{eq:lambda-bilinear}
\Lambda(P)&=\sum_{z\ge1}\sum_{r\ge0}P_z\Psi_r\,G_{z,r},\\
\label{eq:EK}
\E[K]&=\db\,\mu\,\frac{1+D}{1-D}.
\end{align}
All series converge, and $\sum_{k\ge1}q_k=1$.
In particular, the number of output runs per input bit is
\begin{equation}
\label{eq:rho}
\rho=\frac{\db}{\E[K]}=\frac{1-D}{\mu(1+D)}.
\end{equation}
\end{lemma}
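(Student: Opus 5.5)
The plan is to start from the joint law~\eqref{eq:dm-joint} and sum out the variables that do not enter the quantity of interest, in the order $s'$, then $i$, then $(z,r)$. Every term is nonnegative, so Tonelli's theorem lets us exchange the order of summation freely. Convergence follows once the final expressions are shown to be finite, which the identities of Lemma~\ref{lem:psi} provide.

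\emph{Step 1 (summing out $s'$ and $i$).} In~\eqref{eq:dm-joint}, the variable $s'$ appears only through $Q_{s',i}d^{\,s'}$. By independence of the $L'_m$,
\[
\sum_{s'\ge0}Q_{s',i}d^{\,s'}=\E\bigl[d^{\,L'_1+\dots+L'_i}\bigr]=D^{\,i}.
\]
The remaining factor $d^{\,z+r}(\db/d)^k[\binom{z+r}{k}-\binom{r}{k}]$ is exactly $w(z,r,k)$. Hence the joint law marginalized over $s'$ is $P_zQ_{r,i}D^{\,i}w(z,r,k)$. Summing over $i\ge0$ and using~\eqref{eq:Q-Psi} gives
\[
\Pr[\text{first run length }z,\ \text{other same-symbol length }r,\ K=k]=P_z\Psi_r\,w(z,r,k).
\]
Summing this over $(z,r)$ yields the first expression in~\eqref{eq:q-bilinear}. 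Multiplying by $\log[\binom{z+r}{k}-\binom{r}{k}]$ and summing over $k$ (only $1\le k\le z+r$ contribute) yields~\eqref{eq:lambda-bilinear} with $G_{z,r}$ as in~\eqref{eq:G}.

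For convergence of $\Lambda$, use $0\le G_{z,r}\le z+r$. Then $\sum_{z,r}P_z\Psi_r(z+r)=\mu/(1-D)+\mu D/(1-D)^2<\infty$ by~\eqref{eq:psi-sum}.

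\emph{Step 2 (binomial-mixture form).} Split
\[
w(z,r,k)=\Bin_{z+r}(k)-d^{\,z}\Bin_r(k).
\]
For the first piece, set $a=z+r$ and apply~\eqref{eq:psi-conv}:
\[
\sum_{z,r}P_z\Psi_r\Bin_{z+r}(k)=\sum_{a}\Bin_a(k)\sum_zP_z\Psi_{a-z}=\sum_a\Bin_a(k)\,\frac{\Psi_a-\ones\{a=0\}}{D}.
\]
For the second piece, $\sum_zP_zd^{\,z}=D$ gives $D\sum_r\Psi_r\Bin_r(k)$. Since $k\ge1$, we have $\Bin_0(k)=0$, so the $a=0$ terms drop out. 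Hence
\[
q_k=\Bigl(\frac1D-D\Bigr)\sum_{a\ge1}\Psi_a\Bin_a(k),
\]
which is the second form in~\eqref{eq:q-bilinear}. Each piece converges separately, because $\sum_k\Bin_a(k)\le1$ and $\sum_a\Psi_a<\infty$; this justifies the subtraction.

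\emph{Step 3 (normalization and mean).} First sum over $k\ge1$ using $\sum_{k\ge1}\Bin_a(k)=1-d^{\,a}$. By~\eqref{eq:psi-sum} and~\eqref{eq:psi-dsum},
\[
\sum_{a\ge1}\Psi_a(1-d^{\,a})=\frac1{1-D}-\frac1{1-D^2}=\frac{D}{1-D^2},
\]
so $\sum_kq_k=1$. Next use $\sum_kk\Bin_a(k)=a\db$. Then
\[
\E[K]=\frac{1-D^2}{D}\,\db\,\frac{\mu D}{(1-D)^2}=\db\,\mu\,\frac{1+D}{1-D},
\]
which is~\eqref{eq:EK}, and~\eqref{eq:rho} follows by division.

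The only delicate point is Step~2. The signed decomposition of $w$ must be justified by absolute convergence of each piece, and the boundary term $\ones\{a=0\}$ from~\eqref{eq:psi-conv} must be tracked, since it is cancelled only because $k\ge1$. The rest is bookkeeping with Lemma~\ref{lem:psi}, taking~\eqref{eq:dm-joint} as the given law of a typical output run.
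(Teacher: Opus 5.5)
Your proposal is correct and follows the paper's proof essentially step for step: you marginalize $s'$ via $\sum_{s'}Q_{s',i}d^{\,s'}=D^{\,i}$ and then $i$ to get $\Psi_r$, split $w(z,r,k)=\Bin_{z+r}(k)-d^{\,z}\Bin_r(k)$ and apply~\eqref{eq:psi-conv}, and finish with the sums of Lemma~\ref{lem:psi}. One small correction to your closing remark: the $\ones\{a=0\}$ term vanishes for every $k$, because $z\ge1$ forces $a=z+r\ge1$; the term that needs $k\ge1$ is the $r=0$ term $D\Psi_0\Bin_0(k)$ of the second piece.
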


\begin{proof}
\emph{Step 1: summation over $s'$ and $i$.}
All terms of~\eqref{eq:dm-joint} are nonnegative, and sums may be taken in any order.
For fixed $(i,z,r,k)$, the dependence of~\eqref{eq:dm-joint} on $s'$ is through the factor
$Q_{s',i}d^{\,s'}$, and
\[
\sum_{s'\ge0}Q_{s',i}d^{\,s'}=\E\bigl[d^{\,L'_1+\dots+L'_i}\bigr]=D^{\,i}
\]
by independence (the probability that $i$ runs are deleted entirely).
The remaining factor is $P_zQ_{r,i}\,d^{\,z+r}(\db/d)^k[\binom{z+r}{k}-\binom{r}{k}]=P_zQ_{r,i}\,w(z,r,k)$.
Summing over $i$ gives $\sum_iD^{\,i}Q_{r,i}=\Psi_r$.
Hence
\begin{align*}
q_k&=\sum_{i,z,r,s'}\Pr\bigl[T\in F(i,z,r,s'),K=k\bigr]\\
&=\sum_{z\ge1}\sum_{r\ge0}P_z\Psi_rw(z,r,k),
\end{align*}
which is the first equality in~\eqref{eq:q-bilinear}.
The same computation with the factor $\log[\binom{z+r}{k}-\binom{r}{k}]$, which does not depend on
$(i,s')$, followed by summation over $k$, gives~\eqref{eq:lambda-bilinear}.
The series in~\eqref{eq:lambda-bilinear} converges because
$\sum_{z,r}P_z\Psi_rG_{z,r}\le\sum_{z,r}P_z\Psi_r(z+r)\le\mu/(1-D)+\mu D/(1-D)^2$
by Lemma~\ref{lem:psi}.

\emph{Step 2: the binomial mixture.}
Since $\binom{r}{k}\db^{\,k}d^{\,z+r-k}=d^{\,z}\Bin_r(k)$,
\begin{equation}
\label{eq:w-split}
w(z,r,k)=\Bin_{z+r}(k)-d^{\,z}\Bin_r(k).
\end{equation}
Substituting $a=z+r$ and using~\eqref{eq:psi-conv},
\begin{align*}
\sum_{z\ge1}\sum_{r\ge0}P_z\Psi_r\Bin_{z+r}(k)
&=\sum_{a\ge1}\Bin_a(k)\sum_{z}P_z\Psi_{a-z}\\
&=\frac1D\sum_{a\ge1}\Psi_a\Bin_a(k),
\end{align*}
where $a\ge1$ because $z\ge1$.
Summing first over $z$,
\begin{align*}
\sum_{z\ge1}\sum_{r\ge0}P_zd^{\,z}\Psi_r\Bin_r(k)&=D\sum_{r\ge0}\Psi_r\Bin_r(k)\\
&=D\sum_{a\ge1}\Psi_a\Bin_a(k),
\end{align*}
where the term $r=0$ vanishes because $\Bin_0(k)=0$ for $k\ge1$.
Both series are bounded by $\sum_a\Psi_a<\infty$.
Subtracting the second from the first gives the second equality in~\eqref{eq:q-bilinear}, since
$1/D-D=(1-D^2)/D$.

\emph{Step 3: normalization and mean.}
By~\eqref{eq:q-bilinear}, the identity $\sum_{k\ge1}\Bin_a(k)=1-d^{\,a}$, and Lemma~\ref{lem:psi},
\begin{align*}
\sum_{k\ge1}q_k&=\frac{1-D^2}{D}\sum_{a\ge1}\Psi_a(1-d^{\,a})\\
&=\frac{1-D^2}{D}\Bigl[\Bigl(\frac{1}{1-D}-1\Bigr)-\Bigl(\frac{1}{1-D^2}-1\Bigr)\Bigr]\\
&=\frac{1-D^2}{D}\Bigl[\frac{D}{1-D}-\frac{D^2}{1-D^2}\Bigr]=(1+D)-D=1.
\end{align*}
Similarly, $\sum_kk\Bin_a(k)=a\db$ and~\eqref{eq:psi-sum} give
\begin{align*}
\E[K]&=\frac{1-D^2}{D}\,\db\sum_{a\ge1}a\Psi_a\\
&=\frac{1-D^2}{D}\,\db\,\frac{\mu D}{(1-D)^2}
=\db\,\mu\,\frac{1+D}{1-D},
\end{align*}
which is~\eqref{eq:EK}; \eqref{eq:rho} follows.
\end{proof}

By~\eqref{eq:q-bilinear}, $q$ is the law of a $\mathrm{Bin}(a,\db)$ variable conditioned to be
positive, where $a$ is drawn with probability proportional to $\Psi_a(1-d^{\,a})$.
This is a representation of $q$, not a description of the group parse.
The rate~\eqref{eq:rho} has a direct interpretation, which serves as a check: by
Lemma~\ref{lem:residual} below, an output bit starts a new output run with probability
$(1-D)/(\mu\db(1+D))$, and there are $\db$ output bits per input bit.

The dependence of~\eqref{eq:dm-lb} on $P$ enters through $\mu$, $D$, the renewal weights, and the
bilinear forms in $(P,\Psi)$.
The kernel $G$ and the binomial laws $\Bin_a$ depend on $d$ alone.
Once they are tabulated, each evaluation of a truncation of $\RDM(P,d)$ costs one renewal recursion,
one bilinear form $P^{\top}G\Psi$, and one matrix--vector product for $q$, and the exact gradient of
this truncated expression with respect to $P$ is available by differentiating these operations.

\subsection{One-sided truncation}

Numerical evaluation requires finite sums.
By Step~1 of the proof of Lemma~\ref{lem:dm-reduce} and the remarks after~\eqref{eq:G}, every term
of the series~\eqref{eq:q-bilinear} and~\eqref{eq:lambda-bilinear} is nonnegative.

\begin{proposition}[One-sided truncation]
\label{prop:dm-trunc}
Let $P$ have finite support, let $\mathcal A\subset\{1,2,\dots\}$ and $\mathcal K\subset\{1,2,\dots\}$ be
finite, and let $\mathcal T$ be a finite set of triples $(z,r,k)$ with $z\ge1$, $r\ge0$, and $1\le k\le z+r$.
Define
\begin{align*}
\tilde q_k&=\frac{1-D^2}{D}\sum_{a\in\mathcal A}\Psi_a\Bin_a(k),\quad k\in\mathcal K,\\
\tilde\Lambda&=\sum_{(z,r,k)\in\mathcal T}P_z\Psi_r\,w(z,r,k)\log\Bigl[\tbinom{z+r}{k}-\tbinom{r}{k}\Bigr],
\end{align*}
and $\varepsilon=1-\sum_{k\in\mathcal K}\tilde q_k$.
Then $0\le\varepsilon\le1$ and
\begin{multline}
\label{eq:dm-trunc}
\widetilde{\mathcal R}:=-h(d)+\frac{\db}{\E[K]}
\Bigl(-\sum_{k\in\mathcal K}\tilde q_k\log\tilde q_k
\\
\quad+(1-\varepsilon)\log(1-\varepsilon)+\tilde\Lambda\Bigr)
\\
\le\RDM(P,d)\le\Cdel ,
\end{multline}
where $\E[K]$ is given by~\eqref{eq:EK} and $0\log0=0$.
\end{proposition}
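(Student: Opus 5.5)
The plan is to compare the two pieces of the bracket in \eqref{eq:dm-lb} separately: $\tilde\Lambda$ against $\Lambda(P)$, and the truncated entropy expression against $\HH(q)$. The prefactor $\db/\E[K]$ is the exact positive quantity from \eqref{eq:EK}, so both comparisons transfer directly to $\RDM(P,d)$. The final inequality $\RDM(P,d)\le\Cdel$ is Theorem~\ref{thm:dm}, which applies because $P$ has finite support.

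\emph{Step 1: $\varepsilon$ and the $\tilde q_k$.} By the second equality in \eqref{eq:q-bilinear}, $q_k$ is a series of nonnegative terms $\frac{1-D^2}{D}\Psi_a\Bin_a(k)$ over $a\ge1$. Hence $0\le\tilde q_k\le q_k$ for every $k\in\mathcal K$. Lemma~\ref{lem:dm-reduce} gives $\sum_{k\ge1}q_k=1$, so $\sum_{k\in\mathcal K}\tilde q_k\le\sum_{k\in\mathcal K}q_k\le1$, and this yields $0\le\varepsilon\le1$.

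\emph{Step 2: $\tilde\Lambda\le\Lambda(P)$.} Step~1 of the proof of Lemma~\ref{lem:dm-reduce} gives $\Lambda(P)=\sum_{z,r}P_z\Psi_rG_{z,r}$, which expands by \eqref{eq:G} into a series over all triples $(z,r,k)$ with $1\le k\le z+r$. Each term $P_z\Psi_rw(z,r,k)\log[\binom{z+r}{k}-\binom{r}{k}]$ is nonnegative, because the bracket is at least $1$ for $z\ge1$ and $k\le z+r$. The series converges absolutely, so the sum over the finite subset $\mathcal T$ is at most the full sum.

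\emph{Step 3: entropy.} This is the only nontrivial step. We need
\[
\HH(q)\ \ge\ -\sum_{k\in\mathcal K}\tilde q_k\log\tilde q_k+(1-\varepsilon)\log(1-\varepsilon).
\]
I will establish it through two inequalities.

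First, compare $\tilde q$ with $q$ on $\mathcal K$. The function $\phi(x)=-x\log x$ is not monotone, so termwise comparison is not available. Instead I will rewrite the right-hand side as
\[
(1-\varepsilon)\HH\bigl(\tilde q/(1-\varepsilon)\bigr),
\]
where $\tilde q/(1-\varepsilon)$ is a normalized law on $\mathcal K$. The degenerate case $1-\varepsilon=0$ gives $0$, which is trivially at most $\HH(q)$.

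Second, decompose $q$ as a mixture. Write
\[
q=\sum_{a\ge1}c_a\,\beta^{+}_a,\qquad c_a=\tfrac{1-D^2}{D}\Psi_a(1-d^a),
\]
where $\beta^+_a$ is $\Bin_a$ conditioned on $k\ge1$. The pieces are as follows:
\begin{itemize}
\item the component over $a\in\mathcal A$, restricted to $k\in\mathcal K$, has total mass $1-\varepsilon$ and normalized law $\tilde q/(1-\varepsilon)$;
\item the remaining mass $\varepsilon$ is carried by the terms with $a\notin\mathcal A$ together with the $k\notin\mathcal K$ part of the $\mathcal A$ terms, and forms some law $r$.
\end{itemize}
Then $q=(1-\varepsilon)\tilde p+\varepsilon r$ with $\tilde p=\tilde q/(1-\varepsilon)$. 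Concavity of entropy gives
\[
\HH(q)\ge(1-\varepsilon)\HH(\tilde p)+\varepsilon\HH(r)\ge(1-\varepsilon)\HH(\tilde p),
\]
which is exactly the claimed inequality after expanding $\HH(\tilde p)$.

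Combining Steps~2 and~3, the bracket in \eqref{eq:dm-trunc} is at most $\HH(q)+\Lambda(P)$. Multiplying by $\db/\E[K]>0$ and adding $-h(d)$ gives $\widetilde{\mathcal R}\le\RDM(P,d)$. Two technical points remain to verify: that $\HH(q)$ is finite, since $q$ has finite mean by \eqref{eq:EK} and hence finite entropy; and that the mixture decomposition is well defined when $r$ has countable support, which concavity of $\HH$ on countable laws covers.
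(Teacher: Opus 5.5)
Your proof is correct and follows the paper's argument. Nonnegativity of the omitted terms gives $0\le\tilde q_k\le q_k$ and $\tilde\Lambda\le\Lambda(P)$, and writing $q=(1-\varepsilon)\,\tilde q/(1-\varepsilon)+\varepsilon\,(q-\tilde q)/\varepsilon$ and applying concavity gives $\HH(q)\ge(1-\varepsilon)\HH\bigl(\tilde q/(1-\varepsilon)\bigr)$; the prefactor is exact and Theorem~\ref{thm:dm} finishes. Two cosmetic points: the binomial-mixture description of the remainder is unnecessary, since only $q-\tilde q\ge0$ with total mass $\varepsilon$ is used, and the case $\varepsilon=0$, where your law $r$ is undefined, is trivial because then $\tilde q=q$.
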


\begin{proof}
Every omitted term is nonnegative; hence $0\le\tilde q_k\le q_k$ for $k\in\mathcal K$ and
$0\le\tilde\Lambda\le\Lambda(P)$.
In particular $\sum_{k\in\mathcal K}\tilde q_k\le\sum_kq_k=1$, and $0\le\varepsilon\le1$.

Extend $\tilde q$ by $\tilde q_k=0$ for $k\notin\mathcal K$.
We show
\begin{equation}
\label{eq:Hq-lower}
\HH(q)\ge-\sum_{k}\tilde q_k\log\tilde q_k+(1-\varepsilon)\log(1-\varepsilon).
\end{equation}
If $\varepsilon=1$, then $\tilde q=0$, the right-hand side is zero, and~\eqref{eq:Hq-lower} holds.
If $\varepsilon=0$, then $\tilde q\le q$ and both sum to one, so $\tilde q=q$ and~\eqref{eq:Hq-lower}
holds with equality.
Let $0<\varepsilon<1$, and define the probability vectors $\bar q=\tilde q/(1-\varepsilon)$ and
$\bar e=(q-\tilde q)/\varepsilon$, so that $q=(1-\varepsilon)\bar q+\varepsilon\bar e$.
The function $\phi(x)=-x\log x$ is concave on $[0,1]$; hence, for every $k$,
\begin{align*}
\phi(q_k)&\ge(1-\varepsilon)\phi(\bar q_k)+\varepsilon\phi(\bar e_k)\\
&\ge(1-\varepsilon)\phi(\bar q_k).
\end{align*}
Summing over $k$ gives $\HH(q)\ge(1-\varepsilon)\HH(\bar q)$, and
\begin{align*}
(1-\varepsilon)\HH(\bar q)&=-\sum_k\tilde q_k\log\frac{\tilde q_k}{1-\varepsilon}\\
&=-\sum_k\tilde q_k\log\tilde q_k+(1-\varepsilon)\log(1-\varepsilon).
\end{align*}

The prefactor $\db/\E[K]$ is positive and is computed exactly by~\eqref{eq:EK}, without truncation.
Combining~\eqref{eq:Hq-lower} with $\tilde\Lambda\le\Lambda(P)$ gives
$\widetilde{\mathcal R}\le\RDM(P,d)$, and the last inequality of~\eqref{eq:dm-trunc} is
Theorem~\ref{thm:dm}.
\end{proof}

Proposition~\ref{prop:dm-trunc} allows the sets $\mathcal A$, $\mathcal K$, and $\mathcal T$ to be
chosen independently of each other; no tail estimate is needed for validity.
In our evaluations the omitted mass satisfies $\varepsilon<10^{-12}$.
Section~\ref{sec:numerics} describes how the finite sums are enclosed in interval arithmetic, so that
the rounding errors of the computation are also accounted for.

%% file: sec_vtr_TIT.tex
\section{The Venkataramanan--Tatikonda--Ramchandran Functional for General Run Laws}
\label{sec:vtr}

This section proves Theorem~\ref{thm:vtr} below, which extends the bound
of~\cite[Thm.~3]{VenkataramananTatikondaRamchandran2013} from Markov inputs to every finite-support
run-length law and replaces its correction term by a larger one.
The proof bounds the four terms of a decomposition of $\I(X^n;Y)$ separately:
Lemma~\ref{lem:runlevel} gives the decomposition, Lemmas~\ref{lem:residual} and~\ref{lem:hsy} and
Proposition~\ref{prop:filter} bound the second term, Lemma~\ref{lem:hxsy} bounds the third, and
Lemma~\ref{lem:chain} bounds the fourth.
Throughout, $X_1,X_2,\dots$ is the stationary renewal input of Definition~\ref{def:renewal} with a
finite-support law $P$, $0<d<1$, and the notation of Section~\ref{subsec:runs} is used.

\subsection{Decomposition}

Following~\cite{VenkataramananTatikondaRamchandran2013}, we introduce the numbers of runs that are
deleted entirely between consecutive output bits.
For $1\le t\le M_n$ let $J(t)$ be the index of the input run that contains output bit $t$, and set
$J(0)=0$ and $J(M_n+1)=R_n+1$.
For $t=1,\dots,M_n+1$ let
\begin{equation}
\label{eq:S-def}
S_t=\bigl|\{j:\ J(t-1)<j<J(t),\ \tilde N_j=0\}\bigr|,
\end{equation}
and $S=(S_1,\dots,S_{M_n+1})$.
Since no bit survives strictly between two consecutive output bits, every run $j$ with
$J(t-1)<j<J(t)$ has $\tilde N_j=0$; hence $S_t=J(t)-J(t-1)-1$ if $J(t)>J(t-1)$, and $S_t=0$ if
output bits $t-1$ and $t$ lie in the same run, that is, $S_t=\max\{J(t)-J(t-1)-1,0\}$.
Thus $S_1$ counts the runs before the first output bit, $S_{M_n+1}$ the runs after the last, and
$S_t$ for $2\le t\le M_n$ the runs deleted entirely between output bits $t-1$ and $t$.

\begin{lemma}[Run-level description]
\label{lem:runlevel}
\begin{enumerate}
\item The pair $(Y,S)$ determines $R_n$ and $(\tilde N_1,\dots,\tilde N_{R_n})$, and, if $M_n\ge1$,
the symbol $B$ of the first run.
\item Given $X^n$, the vector $(\tilde N_1,\dots,\tilde N_{R_n})$ determines $(Y,S)$.
\item The mutual information satisfies
\begin{multline}
\label{eq:vtr-split}
\I(X^n;Y)=\HH(X^n)-\HH(S\mid Y)
\\
\quad-\HH(X^n\mid S,Y)+\HH(S\mid X^n,Y).
\end{multline}
\item $\HH(S\mid X^n,Y)=\HH(\tilde N_1,\dots,\tilde N_{R_n}\mid X^n,Y)$.
\end{enumerate}
\end{lemma}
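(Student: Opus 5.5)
We prove the four parts in order; parts~1 and~2 are bookkeeping about runs and survivors, part~3 is a chain-rule identity that uses part~2, and part~4 uses both parts~1 and~2.

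\emph{Part 1.} Read $Y$ and $S$ together from left to right. If $M_n=0$, then $S=(S_1)$ with $S_1=R_n$ and every $\tilde N_j=0$. If $M_n\ge1$, then $S_1=J(1)-1$, so the first $S_1$ runs have zero survivors. For $2\le t\le M_n$, $S_t=0$ and $Y_t=Y_{t-1}$ mean that output bits $t-1$ and $t$ lie in the same input run. Any $S_t>0$, or $S_t=0$ with a change of symbol, starts a new input run $J(t)=J(t-1)+S_t+1$, and the $S_t$ runs in between have zero survivors. The case $S_t=0$ with $Y_t=Y_{t-1}$ cannot be a run change: adjacent runs alternate in symbol, so two consecutive output bits of the same symbol with no fully deleted run between them lie in one run. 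Hence $J(t)$ is determined for every $t$, and so is $\tilde N_j=|\{t:J(t)=j\}|$. Finally $R_n=J(M_n)+S_{M_n+1}$. The first run has symbol $B$ and run $j$ has symbol $B\oplus((j-1)\bmod 2)$, so $B=Y_1\oplus((J(1)-1)\bmod2)$.

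\emph{Part 2.} Given $X^n$, the symbols of all runs are known. The output is the concatenation of $\tilde N_j$ copies of the symbol of run $j$, so it is determined by the vector $(\tilde N_j)$. The map $t\mapsto J(t)$ is read off from the cumulative survivor counts, and $S$ then follows from $S_t=\max\{J(t)-J(t-1)-1,0\}$.

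\emph{Part 3.} Expand $\HH(X^n,S\mid Y)$ in two ways:
\[
\HH(S\mid Y)+\HH(X^n\mid S,Y)=\HH(X^n\mid Y)+\HH(S\mid X^n,Y).
\]
Solving for $\HH(X^n\mid Y)$ and substituting into $\I(X^n;Y)=\HH(X^n)-\HH(X^n\mid Y)$ gives~\eqref{eq:vtr-split}. All entropies are finite because the variables take finitely many values for fixed $n$.

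\emph{Part 4.} Given $X^n$, the quantity $R_n$ is fixed, $S$ is a function of $N:=(\tilde N_j)_{j\le R_n}$ by part~2, and $N$ is a function of $(Y,S)$ by part~1. Therefore, conditionally on $(X^n,Y)$, the variables $S$ and $N$ are functions of each other, and $\HH(S\mid X^n,Y)=\HH(N\mid X^n,Y)$.

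The main obstacle is the careful case analysis in part~1, in particular the boundary conventions $J(0)=0$ and $J(M_n+1)=R_n+1$. For instance, when $M_n=0$ we have $S_1=J(1)-J(0)-1=R_n$, so $R_n$ is still recovered. A second point to check is that a same-symbol merge always involves an odd number $S_t\ge1$ of deleted runs, so it can never be confused with the case $S_t=0$.
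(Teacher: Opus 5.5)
Your proposal is correct and follows the paper's proof essentially step for step: the same recursion $J(t)=J(t-1)+S_t+1$ except when $S_t=0$ and $Y_t=Y_{t-1}$ in Part~1, the same two-way expansion of $\HH(X^n,S\mid Y)$ in Part~3, and the same argument in Part~4 that $S$ and $(\tilde N_j)_{j\le R_n}$ are functions of each other given $(X^n,Y)$. Your remarks on the boundary conventions and on same-symbol merges requiring an odd $S_t\ge1$ are correct and agree with the paper's reasoning.
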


\begin{proof}
\emph{Part 1.}
If $M_n=0$, then $R_n=S_1$ and $\tilde N_j=0$ for all $j$.
Let $M_n\ge1$.
We have $J(1)=S_1+1$.
For $2\le t\le M_n$, if output bits $t-1$ and $t$ lie in the same run, then $S_t=0$ and
$Y_t=Y_{t-1}$.
If they lie in different runs, then $J(t)=J(t-1)+S_t+1$, and if moreover $S_t=0$, the runs $J(t-1)$
and $J(t)$ are adjacent, have opposite symbols, and $Y_t\ne Y_{t-1}$.
Hence
\[
J(t)=
\begin{cases}
J(t-1), & S_t=0\text{ and }Y_t=Y_{t-1},\\
J(t-1)+S_t+1, & \text{otherwise},
\end{cases}
\]
which determines $J(1),\dots,J(M_n)$ recursively from $(Y,S)$.
Then $R_n=J(M_n)+S_{M_n+1}$, $\tilde N_j=|\{t\le M_n:J(t)=j\}|$, and
$B=Y_1\oplus((J(1)-1)\bmod2)$.

\emph{Part 2.}
Given $X^n$, the run lengths $\ell_j$ and the symbols of all runs are known.
The output $Y$ is the concatenation of $\tilde N_j$ copies of the symbol of run $j$,
$j=1,\dots,R_n$, and $S$ is obtained from~\eqref{eq:S-def}.

\emph{Part 3.}
By the chain rule,
$\HH(X^n,S\mid Y)=\HH(S\mid Y)+\HH(X^n\mid S,Y)=\HH(X^n\mid Y)+\HH(S\mid X^n,Y)$, and all entropies are
finite because all variables take finitely many values.
Substituting $\HH(X^n\mid Y)$ into $\I(X^n;Y)=\HH(X^n)-\HH(X^n\mid Y)$ gives~\eqref{eq:vtr-split}.

\emph{Part 4.}
Write $\tilde N=(\tilde N_1,\dots,\tilde N_{R_n})$.
By Part~1, $\tilde N$ is a function of $(Y,S)$; by Part~2, $S$ is a function of $(X^n,\tilde N)$.
Hence $\HH(S\mid X^n,Y)=\HH(S,\tilde N\mid X^n,Y)=\HH(\tilde N\mid X^n,Y)$.
\end{proof}

The first term of~\eqref{eq:vtr-split} is controlled by~\eqref{eq:entropy-rate}.
VTR bounded the second and third terms for Markov inputs, for which the output $Y$ is itself a
first-order Markov chain, and lower-bounded the fourth term by counting output runs formed from
three adjacent input runs~\cite[Thm.~3]{VenkataramananTatikondaRamchandran2013}.
For a general run law the output is not Markov, and each term requires a separate treatment.

\subsection{The output process and the residual chain}
\label{subsec:residual}

To study $\HH(S\mid Y)$ we pass to the infinite input $X_1,X_2,\dots$ and the deletion indicators
$\Delta_1,\Delta_2,\dots$.
Let $\tau_1<\tau_2<\cdots$ be the positions $k$ with $\Delta_k=0$; there are infinitely many
with probability one.
For $t\ge1$ let $\bar Y_t=X_{\tau_t}$, let $\bar J_t$ be the index of the run that contains position
$\tau_t$, and let
\[
U_t=\theta_{\bar J_t}-\tau_t\in\{0,\dots,Z-1\}
\]
be the number of bits of that run after position $\tau_t$.
Let $\bar S_1=\bar J_1-1$ and $\bar S_t=\max\{\bar J_t-\bar J_{t-1}-1,0\}$ for $t\ge2$, the number of
runs strictly between $\bar J_{t-1}$ and $\bar J_t$ (zero if $\bar J_t=\bar J_{t-1}$); every such run is
deleted entirely.
The output of the window is a prefix of $\bar Y$: $Y=\bar Y^{M_n}$, and $J(t)=\bar J_t$ for $t\le M_n$.
Moreover, $S_t=\bar S_t$ for $1\le t\le M_n$, because the runs counted by $S_t$ precede run $J(t)$
and therefore lie entirely inside the window.
For $u\ge0$ let $\zeta_u=1-d^{\,u}+d^{\,u}D/(1+D)$, and let
\begin{equation}
\label{eq:nu}
\nu_v=\frac{\db}{1-D}\sum_{\ell=v+1}^{Z}P_\ell\,d^{\,\ell-1-v},\qquad v=0,\dots,Z-1.
\end{equation}
Since $\sum_{v<\ell}\db\,d^{\,\ell-1-v}=1-d^{\,\ell}$, $\sum_v\nu_v=\sum_\ell P_\ell(1-d^{\,\ell})/(1-D)=1$.

\begin{lemma}[Output process]
\label{lem:residual}
\begin{enumerate}
\item The process $(\bar Y_t,U_t,\bar S_{t+1})_{t\ge1}$ is stationary.
For every $t\ge1$, $U_t$ has the law $\pi$ of~\eqref{eq:pi}, $\bar Y_t$ is uniform, and $U_t$ and
$\bar Y_t$ are independent.
\item The process $(U_t,\bar Y_t)_{t\ge1}$ is a time-homogeneous Markov chain on
$\{0,\dots,Z-1\}\times\{0,1\}$.
More precisely, for $t\ge2$, conditionally on $(U^{t-1},\bar Y^{t-1},\bar S^{t-1})$ with $U_{t-1}=u$,
the triple $(\bar S_t,U_t,\bar Y_t)$ has the following law:
\begin{enumerate}
\item for $0\le v<u$, with probability $\db\,d^{\,u-1-v}$: $\bar S_t=0$, $\bar Y_t=\bar Y_{t-1}$, and
$U_t=v$;
\item with the remaining probability $d^{\,u}$: $\bar S_t$ is geometric,
$\Pr[\bar S_t=m]=(1-D)D^{\,m}$ for $m\ge0$; $\bar Y_t\ne\bar Y_{t-1}$ if and only if $\bar S_t$ is
even; and $U_t$ is independent of $\bar S_t$ with law $\nu$.
\end{enumerate}
\item Let $\alpha=\sum_u\pi(u)d^{\,u}$.
Then $\alpha=(1-D)/(\mu\db)$ and $\Pr[\bar Y_t\ne\bar Y_{t-1}]=\alpha/(1+D)$ for $t\ge2$.
\end{enumerate}
\end{lemma}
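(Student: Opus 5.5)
The proof works at the level of the infinite input $X_1,X_2,\dots$ and the i.i.d.\ deletion indicators $\Delta_1,\Delta_2,\dots$, with the output positions $\tau_t$ treated as renewal epochs. The central observation is a strong Markov property: conditionally on everything up to position $\tau_{t-1}$ (which determines $U^{t-1},\bar Y^{t-1},\bar S^{t-1}$), the future depends only on the remaining $u=U_{t-1}$ bits of the current run, on the fresh runs $L_{\bar J_{t-1}+1},L_{\bar J_{t-1}+2},\dots$, which are i.i.d.\ with law $P$ and independent of the past, and on the indicators $\Delta_k$ for $k>\tau_{t-1}$, which are also fresh. Stopping at $\tau_{t-1}$ is legitimate because it is a stopping time for the filtration generated by $(X_k,\Delta_k,L\text{'s revealed so far})$.

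Part 2 then follows by examining the next surviving bit. It lies among the $u$ remaining bits of the current run at offset $u-v$ (with $0\le v<u$) with probability $\db\,d^{\,u-1-v}$. In that case $\bar S_t=0$, the symbol is unchanged, and $U_t=v$. With probability $d^{\,u}$ all remaining bits are deleted. Each subsequent fresh run is then deleted entirely with probability $D$, independently, so the number $\bar S_t$ of fully deleted runs is geometric with $\Pr[\bar S_t=m]=(1-D)D^{m}$. The first surviving run is the $(m+1)$st fresh run, so its symbol differs from $\bar Y_{t-1}$ exactly when $m$ is even.

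Within that run, whose law is $P$ conditioned on at least one survivor, the residual $U_t=v$ arises when the first survivor sits at position $\ell-v$ of a run of length $\ell$. This has probability $P_\ell d^{\,\ell-1-v}\db/(1-D)$, which is $\nu_v$ and does not depend on $m$, giving the independence of $U_t$ from $\bar S_t$. Because these transition probabilities depend on the past only through $u$ (and on $\bar Y_{t-1}$ through the symbol rule), $(U_t,\bar Y_t)$ is a time-homogeneous Markov chain.

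For Part 1, I do not re-verify stationarity of the chain. Instead, note that $\tau_1$ is a stopping time and the deletion process is independent of the input. Since $\Delta_1,\dots,\Delta_{\tau_1-1}$ are all $1$ and independent of $X$, the shifted input $(X_{\tau_1+k-1})_{k\ge1}$ has the law of $(X_k)$, by the stationarity of Lemma~\ref{lem:stationary} combined with an independent random shift. More carefully, I condition on $\tau_1=s$, which is independent of $X$, and apply stationarity to get $X_{s},X_{s+1},\dots\overset{d}{=}X_1,X_2,\dots$.

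It follows that $(\bar Y_1,U_1)=(X_{\tau_1},V_{\tau_1})$ has the law of $(X_1,V_1)$, which is uniform times $\pi$ with independence. The same shift argument applied at $\tau_t$ (using that $\tau_t$ is independent of $X$, and that the post-$\tau_t$ deletions are fresh) shows that the whole process $(\bar Y_{t+s},U_{t+s},\bar S_{t+s+1})_{s\ge0}$ is a fixed function of the shifted input and the shifted deletion sequence, both of which have the original law. Here $\bar S_{t+1}$ depends only on what follows $\tau_t$. This gives stationarity. The main subtlety, and the step I expect to be the obstacle, is that $\bar S_1=\bar J_1-1$ is measured from the start rather than from a previous output bit, so the triple must be indexed as $(\bar Y_t,U_t,\bar S_{t+1})$, exactly as stated, to stay a function of the shifted sequences.

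For Part 3, the identity $\sum_u \Pr[L>u]d^{\,u}=\sum_\ell P_\ell(1-d^{\,\ell})/\db$ together with~\eqref{eq:pi} gives $\alpha=(1-D)/(\mu\db)$. By Part 2(b), $\Pr[\bar Y_t\ne\bar Y_{t-1}]=\alpha\sum_{m\text{ even}}(1-D)D^{m}=\alpha/(1+D)$, using Part 1 to identify the law of $U_{t-1}$ as $\pi$.
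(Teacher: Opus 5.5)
Your proposal is correct and follows essentially the same route as the paper: Part~1 conditions on $\tau_t=p$ (which is independent of $X$ and followed by fresh deletions), so that the post-$\tau_t$ input and deletions have a $t$-independent law and the triple process is a fixed function of them; Part~2 is a strong Markov argument at $\tau_{t-1}$; and Part~3 is the same direct computation. The paper is more explicit about the conditioning $\sigma$-field in Part~2: it must contain the full length $L_{r(p)}$ of the current run, which is needed to know $U_{t-1}$, while leaving $L_{r(p)+1},L_{r(p)+2},\dots$ i.i.d.\ with law $P$. This is what your ``$L$'s revealed so far'' has to mean, and the probability you write for $U_t=v$ gives $\nu_v$ only after summing over $\ell>v$.
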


\begin{proof}
\emph{Part 1.}
For $t\ge1$ let $W_t=\bigl((X_{\tau_t+k})_{k\ge0},(\Delta_{\tau_t+k})_{k\ge1}\bigr)$.
Fix $t\ge1$ and $p\ge1$.
The event $\{\tau_t=p\}$ is determined by $\Delta_1,\dots,\Delta_p$, which are independent of
$(X_k)_{k\ge1}$ and of $(\Delta_{p+k})_{k\ge1}$.
Hence, conditionally on $\tau_t=p$, the sequence $(X_{p+k})_{k\ge0}$ has the law of
$(X_{1+k})_{k\ge0}$ by Lemma~\ref{lem:stationary}, the sequence $(\Delta_{p+k})_{k\ge1}$ is i.i.d.\
Bernoulli$(d)$, and the two are independent.
This conditional law does not depend on $p$ or on $t$; hence $W_t$ has the same law for every $t$.
For $s\ge t$, the variables $\bar Y_s$, $U_s$, and $\bar S_{s+1}$ are obtained from $W_t$ by the same
rule for every $t$: $\bar Y_t$ is the first symbol of $W_t$, $U_t$ is the number of consecutive symbols
after it that are equal to it, later survivors are read off the deletion indicators, and run
boundaries are the positions where the symbol changes.
Hence $(\bar Y_s,U_s,\bar S_{s+1})_{s\ge t}$ has the same law for every $t$, which is stationarity.
For $t=1$, conditionally on $\tau_1=p$, the pair $(\bar Y_1,U_1)=(X_p,V_p)$ has the law of
$(X_1,V_1)$, which is described by Lemma~\ref{lem:stationary}.

\emph{Part 2.}
Fix $t\ge2$ and $p\ge1$, let $r(p)$ be the index of the run containing position $p$, and let
$\mathcal F_p$ be the $\sigma$-field of events $A$ such that
$A\cap\{r(p)=r\}\in\sigma(B,L_1,\dots,L_r,\Delta_1,\dots,\Delta_p)$ for every $r$.
For every $r$, the event $\{r(p)=r\}$ is determined by $L_1,\dots,L_r$, and the vector
$(L_{r+1},L_{r+2},\dots)$ is independent of $\sigma(B,L_1,\dots,L_r,\Delta_1,\dots,\Delta_p)$; hence,
conditionally on $\mathcal F_p$, the run lengths $L_{r(p)+1},L_{r(p)+2},\dots$ are i.i.d.\ with law $P$, the
indicators $\Delta_{p+1},\Delta_{p+2},\dots$ are i.i.d.\ Bernoulli$(d)$, and these two sequences are
independent.
The event $\{\tau_{t-1}=p\}$ and, on this event, the variables $U^{t-1}$, $\bar Y^{t-1}$, and
$\bar S^{t-1}$ are $\mathcal F_p$-measurable, and $U_{t-1}=\theta_{r(p)}-p$.
On $\{\tau_{t-1}=p,\ U_{t-1}=u\}$, the next surviving bit is the first $k>p$ with $\Delta_k=0$.
It lies in the current run and equals $p+k'$ with $1\le k'\le u$ with probability $d^{\,k'-1}\db$;
then $\bar S_t=0$, $\bar Y_t=\bar Y_{t-1}$, and $U_t=u-k'$.
Writing $v=u-k'$ gives case~(a).
With probability $d^{\,u}$ the remaining $u$ bits of the current run are deleted.
Then the runs $r(p)+1,r(p)+2,\dots$ are examined in order; each is deleted entirely with probability
$D$, independently, and $\bar S_t$ is the number of runs deleted before the first run with a
surviving bit.
Hence $\Pr[\bar S_t=m]=(1-D)D^{\,m}$.
The run $\bar J_t=r(p)+\bar S_t+1$ has the symbol of run $r(p)$ flipped $\bar S_t+1$ times, which
gives the parity rule.
Conditionally on $\bar S_t=m$, the run $\bar J_t$ is independent of the $m$ deleted runs before it
and has the law of a run conditioned to have a surviving bit: its length is $\ell$ with probability
$P_\ell(1-d^{\,\ell})/(1-D)$, and its first surviving bit is its $k''$-th bit with probability
$d^{\,k''-1}\db/(1-d^{\,\ell})$, in which case $U_t=\ell-k''$.
Hence $\Pr[U_t=v\mid\bar S_t=m]=\sum_{\ell>v}P_\ell d^{\,\ell-1-v}\db/(1-D)=\nu_v$, for every $m$,
which gives case~(b).
The resulting conditional law depends on $(U^{t-1},\bar Y^{t-1},\bar S^{t-1})$ only through
$(U_{t-1},\bar Y_{t-1})$ and does not depend on $p$ or $t$.

\emph{Part 3.}
Using $\Pr[L>u]=\sum_{\ell>u}P_\ell$ and exchanging the sums,
\[
\alpha=\frac1\mu\sum_{\ell}P_\ell\sum_{u=0}^{\ell-1}d^{\,u}
=\frac1\mu\sum_\ell P_\ell\frac{1-d^{\,\ell}}{1-d}=\frac{1-D}{\mu\db}.
\]
By Part~2, a flip requires case~(b) with $\bar S_t$ even, and
$\Pr[\bar S_t\text{ even}]=\sum_{j\ge0}(1-D)D^{\,2j}=1/(1+D)$.
By Part~1, $U_{t-1}\sim\pi$; hence $\Pr[\bar Y_t\ne\bar Y_{t-1}]=\E[d^{\,U_{t-1}}]/(1+D)=\alpha/(1+D)$.
\end{proof}

Part~1 implies that $\pi$ is an invariant law of the chain of Part~2.
This can also be checked directly: for $0\le v\le Z-1$, using $\alpha\db/(1-D)=1/\mu$,
\begin{align*}
&\sum_{u>v}\pi(u)\db\,d^{\,u-1-v}+\alpha\,\nu_v\\
&\quad=\frac1\mu\Bigl[\sum_{\ell>v+1}P_\ell\bigl(1-d^{\,\ell-1-v}\bigr)+\sum_{\ell>v}P_\ell d^{\,\ell-1-v}\Bigr]\\
&\quad=\frac{\Pr[L>v]}{\mu}.
\end{align*}

For $\kappa\ge1$ and $t\ge\kappa+1$, the vector $(\bar Y_{t-\kappa},\dots,\bar Y_t,\bar S_t)$ is a
function of $(\bar Y_s,U_s,\bar S_{s+1})_{s=t-\kappa}^{t}$, and by Part~1 of
Lemma~\ref{lem:residual} its law does not depend on $t$.
We write
\[
\HH_\pi(S_t\mid Y_{t-\kappa},\dots,Y_t)=\HH(\bar S_t\mid\bar Y_{t-\kappa},\dots,\bar Y_t),\qquad t\ge\kappa+1.
\]
The next proposition shows that this quantity is a finite computation.
Let
\begin{equation}
\label{eq:g}
g=\frac{h(D^2)}{1-D^2}
\end{equation}
be the entropy of the geometric law $(1-D^2)D^{\,2j}$, $j\ge0$.
Define the $Z\times Z$ matrices, indexed by $u,v\in\{0,\dots,Z-1\}$,
\begin{align*}
T^{\mathrm{stay}}_{uv}&=\ones\{v<u\}\,\db\,d^{\,u-1-v},\\
T^{\mathrm{same}}_{uv}&=T^{\mathrm{stay}}_{uv}+d^{\,u}\frac{D}{1+D}\nu_v,\qquad
T^{\mathrm{flip}}_{uv}=d^{\,u}\frac{1}{1+D}\nu_v .
\end{align*}

\begin{proposition}[Evaluation of the context entropy]
\label{prop:filter}
Let $\kappa\ge1$.
For $\epsilon=(\epsilon_1,\dots,\epsilon_\kappa)\in\{\mathrm{same},\mathrm{flip}\}^\kappa$ let
$x_0=\pi$ and $x_j=x_{j-1}T^{\epsilon_j}$ (row vectors), let $\Pr[\epsilon]=x_\kappa\ones$, and, if
$x_{\kappa-1}\ones>0$, let $y=x_{\kappa-1}/(x_{\kappa-1}\ones)$ and
\begin{equation}
\label{eq:pstay}
p_{\mathrm{stay}}(\epsilon)=\frac{\sum_uy_u(1-d^{\,u})}{\sum_uy_u\zeta_u}.
\end{equation}
Then
\begin{equation}
\label{eq:hsy-filter}
\HH_\pi(S_t\mid Y_{t-\kappa},\dots,Y_t)=\sum_{\epsilon:\,\Pr[\epsilon]>0}\Pr[\epsilon]\,H_\epsilon,
\end{equation}
where $H_\epsilon=g$ if $\epsilon_\kappa=\mathrm{flip}$ and
$H_\epsilon=h(p_{\mathrm{stay}}(\epsilon))+(1-p_{\mathrm{stay}}(\epsilon))g$ if
$\epsilon_\kappa=\mathrm{same}$.
For $\kappa=1$, with $p_f=\alpha/(1+D)$ and $p_0=(1-\alpha)/(1-p_f)$,
\begin{equation}
\label{eq:hsy-pair}
\HH_\pi(S_t\mid Y_{t-1},Y_t)=(1-p_f)\bigl[h(p_0)+(1-p_0)g\bigr]+p_fg .
\end{equation}
\end{proposition}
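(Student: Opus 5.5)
We will reduce the conditional entropy to a computation on the Markov chain of Lemma~\ref{lem:residual}. The idea is to replace the output symbols by their pattern of same/flip transitions and then run a forward filter on the residual $U$.

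\emph{Step 1: from symbols to transition patterns.} Fix $t\ge\kappa+1$ and set $s=t-\kappa\ge1$. The vector $(\bar Y_s,\dots,\bar Y_t)$ is in bijection with the pair $(\bar Y_s,\epsilon)$, where $\epsilon_j=\mathrm{flip}$ if and only if $\bar Y_{s+j}\ne\bar Y_{s+j-1}$. By Part~1 of Lemma~\ref{lem:residual}, $\bar Y_s$ is uniform and independent of $U_s\sim\pi$. The transition law in Part~2 is invariant under complementing all output symbols. Hence $\bar Y_s$ is independent of $(\epsilon,U_s,\dots,U_t,\bar S_t)$, and therefore
\[
\HH(\bar S_t\mid\bar Y_s,\dots,\bar Y_t)=\HH(\bar S_t\mid\epsilon).
\]

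\emph{Step 2: the forward filter.} Part~2 gives, for each $u$ and $v$, the one-step kernel $\Pr[U_{m}=v,\ \epsilon=\mathrm{same}\mid U_{m-1}=u]=T^{\mathrm{same}}_{uv}$. Case~(a) contributes $T^{\mathrm{stay}}_{uv}$. Case~(b) with $\bar S$ odd contributes $d^{\,u}\frac{D}{1+D}\nu_v$, using $\Pr[\bar S\text{ odd}]=D/(1+D)$ and the independence of $U$ and $\bar S$. Similarly, $T^{\mathrm{flip}}$ comes from case~(b) with $\bar S$ even. By induction on $j$, using the Markov property in Part~2 and the initial condition $U_s\sim\pi$, we have $x_j(v)=\Pr[\epsilon_1,\dots,\epsilon_j,\ U_{s+j}=v]$. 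This gives $\Pr[\epsilon]=x_\kappa\ones$, and $y$ is the posterior law of $U_{t-1}$ given $\epsilon_1,\dots,\epsilon_{\kappa-1}$.

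\emph{Step 3: the law of $\bar S_t$ given $\epsilon$.} By Part~2, conditionally on the whole past with $U_{t-1}=u$, the law of $(\bar S_t,\bar Y_t)$ depends only on $u$. Hence $\bar S_t$ and the earlier pattern are conditionally independent given $U_{t-1}$.
\begin{itemize}
\item If $\epsilon_\kappa=\mathrm{flip}$, then $\bar S_t$ is even, and conditioned on this it has law $(1-D^2)D^{2j}$ for every $u$. So $H_\epsilon=g$.
\item If $\epsilon_\kappa=\mathrm{same}$, the event splits into two disjoint parts. Either case~(a) holds, which gives $\bar S_t=0$ with conditional weight $1-d^{\,u}$. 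Or $\bar S_t$ is odd, with weight $d^{\,u}D/(1+D)$; the total weight is $\zeta_u$.
\end{itemize}
Averaging these weights over $y$ gives $\Pr[\bar S_t=0\mid\epsilon]=p_{\mathrm{stay}}(\epsilon)$. Given that $\bar S_t$ is odd, $(\bar S_t-1)/2$ has law $(1-D^2)D^{2j}$ independently of $u$. Therefore $H_\epsilon=h(p_{\mathrm{stay}})+(1-p_{\mathrm{stay}})g$. Summing $\Pr[\epsilon]H_\epsilon$ over patterns with positive probability gives~\eqref{eq:hsy-filter}. The case $x_{\kappa-1}\ones=0$ only occurs when $\Pr[\epsilon]=0$, so these patterns drop out.

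\emph{Step 4: the case $\kappa=1$.} Here $y=\pi$. Then $\sum_u\pi(u)(1-d^{\,u})=1-\alpha$ and $\sum_u\pi(u)\zeta_u=1-\alpha+\alpha D/(1+D)=1-p_f$, using Part~3. This gives $p_{\mathrm{stay}}=p_0$, and $\Pr[\mathrm{flip}]=p_f$, which is~\eqref{eq:hsy-pair}.

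The main obstacle is Step~1 together with the conditional-independence claim in Step~3. There we must argue carefully that conditioning on the full past in Part~2 lets us drop $\epsilon_1,\dots,\epsilon_{\kappa-1}$ once $U_{t-1}$ is given. I would handle this by writing the joint law of $(U_s,\dots,U_t,\epsilon,\bar S_t)$ explicitly as a product of the Part~2 kernels. The complement-symmetry argument for $\bar Y_s$ follows from the same product form.
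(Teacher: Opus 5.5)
Your proposal is correct and follows essentially the same route as the paper. You reduce the conditioning to the same/flip pattern using the independence of $\bar Y_{t-\kappa}$ from the residual dynamics, obtain $\Pr[\epsilon]$ and the posterior $y$ of $U_{t-1}$ from forward products of $T^{\mathrm{same}}$ and $T^{\mathrm{flip}}$, and split the law of $\bar S_t$ into the flip case (even, geometric) and the same case (zero or odd); your complement-symmetry argument in Step~1 is a rephrasing of the paper's observation that the kernel of $(\bar S_s,U_s,E_s)$ depends on $U_{s-1}$ but not on $\bar Y_{s-1}$.
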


\begin{proof}
Fix $t\ge\kappa+1$ and let $E_s=\ones\{\bar Y_s\ne\bar Y_{s-1}\}$ for $t-\kappa<s\le t$; we identify
$E_s=0$ with ``same'' and $E_s=1$ with ``flip''.

\emph{Reduction to polarity events.}
The bits $\bar Y_{t-\kappa},\dots,\bar Y_t$ and the pair $(\bar Y_{t-\kappa},E_{t-\kappa+1}^{t})$
determine each other.
By Part~2 of Lemma~\ref{lem:residual}, the conditional law of $(\bar S_s,U_s,E_s)$ given the past
depends only on $U_{s-1}$ and not on $\bar Y_{s-1}$.
Since $U_{t-\kappa}$ is independent of $\bar Y_{t-\kappa}$ by Part~1, the vector
$(U_{t-\kappa},(\bar S_s,U_s,E_s)_{s=t-\kappa+1}^{t})$ is independent of $\bar Y_{t-\kappa}$.
Hence
\[
\HH(\bar S_t\mid\bar Y_{t-\kappa}^t)=\HH(\bar S_t\mid\bar Y_{t-\kappa},E_{t-\kappa+1}^t)
=\HH(\bar S_t\mid E_{t-\kappa+1}^t).
\]

\emph{Law of the events.}
By Part~2 of Lemma~\ref{lem:residual}, for $u,v\in\{0,\dots,Z-1\}$,
$\Pr[E_s=0,U_s=v\mid U_{s-1}=u]$ is the sum of the probability $T^{\mathrm{stay}}_{uv}$ of case~(a)
and the probability $d^{\,u}\Pr[\bar S_s\text{ odd}]\nu_v=d^{\,u}\frac{D}{1+D}\nu_v$ of case~(b) with
$\bar S_s$ odd; this is $T^{\mathrm{same}}_{uv}$.
Similarly $\Pr[E_s=1,U_s=v\mid U_{s-1}=u]=T^{\mathrm{flip}}_{uv}$.
By the Markov property and $U_{t-\kappa}\sim\pi$,
\[
\Pr\bigl[E_{t-\kappa+j}=\epsilon_j\ (1\le j\le\kappa'),\ U_{t-\kappa+\kappa'}=v\bigr]=(x_{\kappa'})_v
\]
for $0\le\kappa'\le\kappa$.
With $\kappa'=\kappa$ this gives $\Pr[\epsilon]=x_\kappa\ones$; with $\kappa'=\kappa-1$ it shows that
$y$ is the conditional law of $U_{t-1}$ given the first $\kappa-1$ events.

\emph{Law of $\bar S_t$.}
Condition on the first $\kappa-1$ events and on $U_{t-1}=u$; by the Markov property, the pair
$(\bar S_t,E_t)$ has the law of Part~2 of Lemma~\ref{lem:residual}.
If $E_t=1$, then $\bar S_t$ is even and
$\Pr[\bar S_t=2j,E_t=1\mid U_{t-1}=u]=d^{\,u}(1-D)D^{\,2j}$, which is proportional to $D^{\,2j}$ with
a factor that does not depend on $j$.
Hence, given $\epsilon$ with $\epsilon_\kappa=\mathrm{flip}$, $\bar S_t/2$ has the geometric law
$(1-D^2)D^{\,2j}$ whatever the value of $U_{t-1}$, and $H_\epsilon=g$.
If $E_t=0$, then
\begin{align*}
\Pr[\bar S_t=0,E_t=0\mid U_{t-1}=u]&=1-d^{\,u},\\
\Pr[\bar S_t=2j+1,E_t=0\mid U_{t-1}=u]&=d^{\,u}(1-D)D^{\,2j+1},
\end{align*}
and $\Pr[E_t=0\mid U_{t-1}=u]=\zeta_u$.
Averaging over $U_{t-1}\sim y$ gives $\Pr[\bar S_t=0\mid\epsilon]=p_{\mathrm{stay}}(\epsilon)$, and,
given $\epsilon$ and $\bar S_t\ne0$, $(\bar S_t-1)/2$ again has the geometric law $(1-D^2)D^{\,2j}$.
The conditional law of $\bar S_t$ is therefore the mixture, with weights $p_{\mathrm{stay}}$ and
$1-p_{\mathrm{stay}}$, of the point mass at $0$ and a law on the odd integers with entropy $g$.
Since the two components have disjoint supports, its entropy is
$h(p_{\mathrm{stay}})+(1-p_{\mathrm{stay}})g$.
Averaging over $\epsilon$ gives~\eqref{eq:hsy-filter}.

Here $y=\pi$, $\sum_u\pi(u)(1-d^{\,u})=1-\alpha$, and
$\sum_u\pi(u)\zeta_u=1-\alpha+\alpha D/(1+D)=1-p_f$; hence $p_{\mathrm{stay}}=p_0$,
$\Pr[\mathrm{same}]=1-p_f$, and $\Pr[\mathrm{flip}]=p_f$.
\end{proof}

Since conditioning does not increase entropy, $\HH_\pi(S_t\mid Y_{t-\kappa},\dots,Y_t)$ is
nonincreasing in $\kappa$, and the bound of Theorem~\ref{thm:vtr} below is nondecreasing in $\kappa$.
For the untruncated geometric law $P_z=(1-p)p^{z-1}$ (which has infinite support and is outside the
scope of the lemmas; the matrices are then infinite), the same formulas give
$\pi(u)=(1-p)p^u$, $\nu=\pi$, and $\pi T^{\mathrm{stay}}\propto\pi$; every $x_j$ is then proportional to
$\pi$, and the filter gives no gain from $\kappa>1$.
For finite-support laws, including truncated geometric laws, a larger $\kappa$ can strictly decrease
the right-hand side of~\eqref{eq:hsy-filter}.

\begin{lemma}[Bound on $\HH(S\mid Y)$]
\label{lem:hsy}
For every $\kappa\ge1$,
\begin{equation}
\label{eq:hsy-bound}
\limsup_{n\to\infty}\frac1n\HH(S\mid Y)\le\db\,\HH_\pi(S_t\mid Y_{t-\kappa},\dots,Y_t).
\end{equation}
\end{lemma}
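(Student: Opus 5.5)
We will bound $\HH(S\mid Y)$ by the chain rule, dropping conditioning, and then use stationarity of the output process together with the concentration of $M_n$.

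\emph{Step 1: chain rule.} Since $S=(S_1,\dots,S_{M_n+1})$ has random length, we condition on $M_n$ first. $M_n$ is the length of $Y$, hence a function of $Y$, so
\[
\HH(S\mid Y)=\sum_{m}\Pr[M_n=m]\,\HH(S\mid Y,M_n=m).
\]
On $\{M_n=m\}$ the chain rule gives
\[
\HH(S\mid Y)\le \HH(S_1)+\HH(S_{M_n+1})+\sum_{t=2}^{M_n}\HH(S_t\mid Y_{\max(1,t-\kappa)}^{t}),
\]
where each summand is understood conditionally on $M_n=m$. The boundary terms $S_1$ and $S_{M_n+1}$ are at most $R_n\le n$ and contribute only $O(\log n)$ entropy.

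\emph{Step 2: remove the conditioning on $M_n$.} This is the main obstacle: conditioning on $M_n=m$ changes the law of $(Y,S)$, so stationarity cannot be applied directly. The cleanest route is to compare with the infinite process. We set $\tilde m=\lceil \db n+n^{2/3}\rceil$ and consider $(\bar Y^{\tilde m},\bar S^{\tilde m})$, which is defined from the infinite input. Since $S_t=\bar S_t$ and $Y_t=\bar Y_t$ for $t\le M_n$, the pair $(Y,S)$ is a function of $(\bar Y^{\tilde m},\bar S^{\tilde m},M_n,S_{M_n+1})$ on the event $\{M_n\le\tilde m\}$. Moreover $M_n$ is determined by $\tau_1,\tau_2,\dots$.

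Rather than condition, we write $\HH(S\mid Y)\le \HH(\bar S^{\tilde m}, M_n, S_{M_n+1},\ones\{M_n>\tilde m\}\mid \bar Y^{\tilde m})+\text{error}$. We need $(Y,S)$ to be recoverable from these quantities, and $Y$ is a function of $(\bar Y^{\tilde m},M_n)$ on the good event. Since conditioning on more reduces entropy, and $\HH(S\mid Y)\le \HH(S,\bar S^{\tilde m}\mid Y)$, we get
\[
\HH(S\mid Y)\le \HH(\bar S^{\tilde m}\mid \bar Y^{\tilde m})+\HH(M_n)+\HH(S_{M_n+1})+1+\Pr[M_n>\tilde m]\cdot O(n\log n).
\]
The last term handles the bad event, since $S$ then takes at most $(n+1)^{n+1}$ values. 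Here $\HH(M_n)=O(\log n)$, and the Chernoff bound makes $\Pr[M_n>\tilde m]$ exponentially small.

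\emph{Step 3: stationarity.} By the chain rule and conditioning reduction,
\[
\HH(\bar S^{\tilde m}\mid\bar Y^{\tilde m})\le\sum_{t\le\tilde m}\HH(\bar S_t\mid\bar Y_{t-\kappa}^t)
\]
for $t>\kappa$, with the first $\kappa$ terms bounded by constants (the $\bar S_t$ are geometric, so they have finite entropy). By Part~1 of Lemma~\ref{lem:residual}, each term with $t\ge\kappa+1$ equals $\HH_\pi(S_t\mid Y_{t-\kappa},\dots,Y_t)$. Hence
\[
\HH(S\mid Y)\le \tilde m\,\HH_\pi(S_t\mid Y_{t-\kappa},\dots,Y_t)+O(n^{2/3}+\log n).
\]
Dividing by $n$, and using $\tilde m/n\to\db$, gives~\eqref{eq:hsy-bound}.
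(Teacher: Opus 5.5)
Your overall route is the paper's: pass to the infinite output process at the deterministic length $\tilde m=\lceil\db n+n^{2/3}\rceil$, discard $\{M_n>\tilde m\}$ by a Hoeffding/Chernoff bound, and bound $\HH(\bar S^{\tilde m}\mid\bar Y^{\tilde m})$ by the chain rule with a $\kappa$-context and Part~1 of Lemma~\ref{lem:residual}. Step~2, however, has a genuine gap. You go from conditioning on $Y=\bar Y^{M_n}$ to conditioning on $\bar Y^{\tilde m}$, and justify this by ``conditioning on more reduces entropy''. On the good event $\{M_n\le\tilde m\}$, though, $\bar Y^{\tilde m}$ consists of $Y$ together with the extra symbols $\bar Y_{M_n+1},\dots,\bar Y_{\tilde m}$, which come from input bits beyond the window. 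So you are conditioning on \emph{more} than $Y$, and that principle gives $\HH(\cdot\mid Y)\ge\HH(\cdot\mid Y,\bar Y^{\tilde m})$, which is the wrong direction for an upper bound. In general these extra symbols are informative about $\bar S^{M_n}$; for instance, the residual $U_{M_n}$ governs the flip probability of $\bar Y_{M_n+1}$. Hence the displayed inequality of Step~2 does not follow from what you wrote, and none of the error terms you list pays for this information.

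The repair is the one the paper uses. First write $\HH(S\mid Y)\le\HH(S\mid\bar Y^{\tilde m})+\HH(\bar Y^{\tilde m}\mid Y)$. Since $Y$ determines $M_n$, and given $Y$ the prefix $\bar Y^{\tilde m}$ is fixed by at most $(\tilde m-M_n)^+$ further binary symbols, you get $\HH(\bar Y^{\tilde m}\mid Y)\le\E[(\tilde m-M_n)^+]\le n^{2/3}+1+n\Pr[M_n>\tilde m]$. After that, the rest of your Step~2 goes through: splitting on $\ones\{M_n\le\tilde m\}$ and adding $\HH(M_n)$ and $\HH(S_{M_n+1})$. The new term is $O(n^{2/3})$. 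This is exactly the $n^{2/3}$ that appears in your final bound without having been derived, since your listed error terms are only $O(\log n)$. With this term included, your argument coincides with the paper's proof and yields~\eqref{eq:hsy-bound}.
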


\begin{proof}
Write $M=M_n$, $H^\star=\HH_\pi(S_t\mid Y_{t-\kappa},\dots,Y_t)$, and $m_n=\lceil\db n+n^{2/3}\rceil$.

\emph{Step 1: the last component.}
Since $Y=\bar Y^{M}$ and $S=(\bar S^{M},S_{M+1})$ with $0\le S_{M+1}\le n$,
\begin{align*}
\HH(S\mid Y)&\le\HH(S_{M+1})+\HH(\bar S^{M}\mid\bar Y^{M})\\
&\le\log(n+1)+\HH(\bar S^{M}\mid\bar Y^{M}).
\end{align*}
Here $\bar S^M$ has the random length $M$, which is a function of $\bar Y^M$.

\emph{Step 2: fixing the length.}
Let $\mathcal E=\{M\le m_n\}$.
Since $M$ is a sum of $n$ independent Bernoulli$(\db)$ variables, Hoeffding's
inequality~\cite{Hoeffding1963} gives $\Pr[\mathcal E^c]\le e^{-2n^{1/3}}$.
Using $\HH(A\mid C)\le\HH(A\mid C,\ones_{\mathcal E})+\HH(\ones_{\mathcal E})$,
\begin{multline*}
\HH(\bar S^{M}\mid\bar Y^{M})\le1+\Pr[\mathcal E]\,\HH(\bar S^{M}\mid\bar Y^{M},\mathcal E)
\\
\quad+\Pr[\mathcal E^c]\,\HH(\bar S^{M}\mid\bar Y^{M},\mathcal E^c).
\end{multline*}
The components of $\bar S^{M}$ count distinct runs of the window; hence they are nonnegative integers
with sum at most $n$.
Given $M$, there are at most $\binom{n+M}{M}\le2^{n+M}\le4^n$ such vectors, and the last term is at
most $2n\Pr[\mathcal E^c]$.
On $\mathcal E$, $\bar S^M$ is a function of $(\bar S^{m_n},M)$; hence
\begin{align*}
\Pr[\mathcal E]\,\HH(\bar S^{M}\mid\bar Y^{M},\mathcal E)&\le\Pr[\mathcal E]\,\HH(\bar S^{m_n}\mid\bar Y^{M},\mathcal E)\\
&\le\HH(\bar S^{m_n}\mid\bar Y^{M},\ones_{\mathcal E}).
\end{align*}
Using $\HH(A\mid C)\le\HH(A\mid C')+\HH(C'\mid C)$ with $C=(\bar Y^M,\ones_{\mathcal E})$ and $C'=\bar Y^{m_n}$,
\[
\HH(\bar S^{m_n}\mid\bar Y^{M},\ones_{\mathcal E})\le\HH(\bar S^{m_n}\mid\bar Y^{m_n})+\E[(m_n-M)^+],
\]
because, given $\bar Y^M$, the prefix $\bar Y^{m_n}$ is determined by at most $(m_n-M)^+$ further binary
symbols.
Since $M\le n$,
$\E[(m_n-M)^+]\le m_n-\db n+\E[(M-m_n)^+]\le n^{2/3}+1+n\Pr[\mathcal E^c]$.

\emph{Step 3: context.}
By the chain rule and because conditioning does not increase entropy,
\begin{align*}
\HH(\bar S^{m_n}\mid\bar Y^{m_n})&\le\sum_{t=1}^{m_n}\HH(\bar S_t\mid\bar Y^{m_n})\\
&\le\sum_{t=1}^{\kappa}\HH(\bar S_t)+\sum_{t=\kappa+1}^{m_n}\HH(\bar S_t\mid\bar Y_{t-\kappa}^t).
\end{align*}
Each term of the second sum equals $H^\star$.
For the first sum, $\Pr[\bar S_t\ge j]\le D^{\,j-1}$ for $j\ge1$ and every $t$ (for $t\ge2$ by
Part~2 of Lemma~\ref{lem:residual}; for $t=1$ because runs $2,3,\dots$ are deleted entirely
independently with probability $D$), and hence $\E[\bar S_t]\le a:=1/(1-D)$.
A nonnegative integer random variable with mean at most $a$ has entropy at most
$c_D=(a+1)\log(a+1)-a\log a$~\cite[Ch.~12]{CoverThomas2006}.
Hence $\HH(\bar S^{m_n}\mid\bar Y^{m_n})\le\kappa c_D+m_nH^\star$.

Collecting the bounds,
\[
\HH(S\mid Y)\le m_nH^\star+\log(n+1)+n^{2/3}+\kappa c_D+2+3n\,e^{-2n^{1/3}}.
\]
Dividing by $n$ and using $m_n/n\to\db$ gives~\eqref{eq:hsy-bound}.
\end{proof}

\subsection{The run channel and $\HH(X^n\mid S,Y)$}

\begin{lemma}[Bound on $\HH(X^n\mid S,Y)$]
\label{lem:hxsy}
Let $(L,N)$ have the law $\Pr[L=r,N=m]=P_r\Bin_r(m)$.
Then
\begin{equation}
\label{eq:hxsy}
\limsup_{n\to\infty}\frac1n\HH(X^n\mid S,Y)\le\frac{\HH(L\mid N)}{\mu}.
\end{equation}
Moreover, $\HH(L\mid N)=\HH(\Omega)-\HH(\omega)$ for the joint matrix $\Omega_{r,m}=P_r\Bin_r(m)$ and its
column marginal $\omega_m=\sum_r\Omega_{r,m}$, the law of the number of survivors of a run.
\end{lemma}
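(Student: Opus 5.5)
The plan is to use Lemma~\ref{lem:runlevel} to replace $(S,Y)$ by the vector of survivor counts, and then bound the conditional entropy of the run lengths one run at a time. By Part~1 of Lemma~\ref{lem:runlevel}, $(Y,S)$ determines $R_n$, $\tilde N=(\tilde N_1,\dots,\tilde N_{R_n})$ and (when $M_n\ge1$) $B$. Conversely, $(R_n,\tilde N,B)$ determines $Y$, and $(R_n,\tilde N)$ determines $S$. Since $X^n$ is equivalent to $(B,\ell_1,\dots,\ell_{R_n})$, we get
\[
\HH(X^n\mid S,Y)\le \HH(B)+\HH(\ell_1,\dots,\ell_{R_n}\mid R_n,\tilde N)\le 1+\sum_{j}\HH(\ell_j\ones\{j\le R_n\}\mid \tilde N_j\ones\{j\le R_n\},R_n),
\]
where the sum uses the chain rule and drops conditioning.

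Next I separate the boundary runs $j=1$ and $j=R_n$. Their lengths are at most $Z$, so together they contribute at most $2\log Z$. For an interior run $2\le j\le R_n-1$ we have $\ell_j=L_j$ and $\tilde N_j=N_j$. To avoid dealing with $R_n$ as a stopping time inside the conditioning, I would bound more crudely and work with the fixed index range $j\le m:=\lceil n/\mu+n^{2/3}\rceil$. Concretely, I would bound $\HH(L_2,\dots,L_{R_n-1}\mid N_2,\dots,N_{R_n-1},R_n)$ by $\HH(L_2^{m}\mid N_2^{m})$ plus the cost of the event $R_n>m$ plus $\HH(R_n)\le\log(n+1)$. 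On $\{R_n\le m+1\}$, the interior lengths are a function of $(L_2^m,R_n)$. The pairs $(L_j,N_j)$, $j\ge2$, are i.i.d.\ with law $P_r\Bin_r(m)$ by property~2 of Section~\ref{subsec:runs}. Hence $\HH(L_2^m\mid N_2^m)=(m-1)\HH(L\mid N)$.

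The event $\{R_n>m+1\}$ means $L_2+\dots+L_{m+1}<n$. Since the $L_j$ are bounded i.i.d.\ with mean $\mu$, Hoeffding's inequality makes its probability at most $e^{-cn^{1/3}}$. On that event the lengths of the at most $n$ runs are determined by the window, so the contribution is at most $n\cdot e^{-cn^{1/3}}\cdot O(\log Z)$ after conditioning on the indicator, exactly as in Step~2 of Lemma~\ref{lem:hsy}. Collecting the terms,
\[
\HH(X^n\mid S,Y)\le (n/\mu+n^{2/3})\HH(L\mid N)+O(\log n)+o(n).
\]
Dividing by $n$ gives~\eqref{eq:hxsy}.

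The identity $\HH(L\mid N)=\HH(L,N)-\HH(N)=\HH(\Omega)-\HH(\omega)$ is immediate from the chain rule. Here $\omega$ is the law of $N$ by marginalization.

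The step I expect to need the most care is the conditioning bookkeeping in the first display. To do it cleanly I would condition on the indicator $\ones\{R_n\le m+1\}$ together with $R_n$. On the good event, $(L_2^{R_n-1},N_2^{R_n-1})$ is a function of $(L_2^m,N_2^m,R_n)$. I would then use $\HH(A\mid C)\le\HH(A\mid C')+\HH(C'\mid C)$ to pass from conditioning on $N_2^{R_n-1}$ to conditioning on $N_2^m$. The extra survivor counts of the runs beyond the window cost at most $(m-R_n+1)^+\log(Z+1)$ in entropy. Its expectation is $O(n^{2/3})$, because $\E[R_n]\ge n/\mu$ by the renewal structure and concentration. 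Alternatively, I would avoid this entirely by noting that conditioning on additional independent survivor counts $N_j$, $j\ge R_n$, does not hurt, via a coupling argument.
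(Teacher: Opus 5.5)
Your proposal is correct and is essentially the paper's proof. The paper also pays $1+2\log Z$ for $B$ and the two boundary runs, replaces the random range of interior runs by the deterministic $k_n=\lceil n/\mu+n^{2/3}\rceil$ (your $m$), controls the overshoot $R_n\ge k_n+2$ by Hoeffding, and pays $\E[(k_n-R_n+1)^+]\log(Z+1)=O(n^{2/3})$ for the extra survivor counts. It differs only in packaging the case split as $\bar R=\max(R_n-1,k_n)$ and in getting the lower bound on $\E[R_n]$ from Wald's identity; stationarity in fact gives $\E[R_n]=1+(n-1)/\mu$ exactly.

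One point needs correcting: your closing ``coupling'' alternative is inaccurate as stated. $N_{R_n}$ is not conditionally independent of $L_2^{R_n-1}$ given $(R_n,N_2^{R_n-1})$, because the constraint $\theta_{R_n}\ge n$ ties $L_{R_n}$ to the earlier lengths. That one count must therefore be paid for (at most $\log(Z+1)$), and only the counts with $j>R_n$ are free by the strong Markov property.
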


\begin{proof}
If $Z=1$, the input is determined by $B$ and $\HH(L\mid N)=0$; assume $Z\ge2$.
Let $k_n=\lceil n/\mu+n^{2/3}\rceil$.

\emph{Step 1: reduction to run lengths.}
By Part~1 of Lemma~\ref{lem:runlevel}, $(R_n,\tilde N_1,\dots,\tilde N_{R_n})$ is a function of
$(S,Y)$, and $X^n$ is a function of $(B,\ell_1,\dots,\ell_{R_n})$.
The lengths $\ell_1$ and $\ell_{R_n}$ take values in $\{1,\dots,Z\}$, and $\ell_j=L_j$,
$\tilde N_j=N_j$ for $2\le j\le R_n-1$.
Hence
\[
\HH(X^n\mid S,Y)\le1+2\log Z+\HH(L_2^{R_n-1}\mid R_n,N_2^{R_n-1}),
\]
where $L_2^{R_n-1}$ and $N_2^{R_n-1}$ are empty if $R_n\le2$.

\emph{Step 2: a deterministic number of runs.}
Let $\bar R=\max(R_n-1,k_n)$.
Since $L_2^{R_n-1}$ is a function of $(L_2^{\bar R},R_n)$,
\begin{align*}
&\HH(L_2^{R_n-1}\mid R_n,N_2^{R_n-1})\\
&\quad\le\HH(L_2^{\bar R}\mid R_n,N_2^{R_n-1})\\
&\quad\le\HH(L_2^{\bar R}\mid R_n,N_2^{\bar R})+\HH(N_{R_n}^{\bar R}\mid R_n,N_2^{R_n-1})\\
&\quad\le\HH(L_2^{k_n}\mid N_2^{k_n})+\HH(L_{k_n+1}^{\bar R}\mid R_n)+\HH(N_{R_n}^{\bar R}\mid R_n)\\
&\quad\le(k_n-1)\HH(L\mid N)+\E[(R_n-1-k_n)^+]\log Z\\
&\qquad+\E[(k_n-R_n+1)^+]\log(Z+1).
\end{align*}
The second inequality is $\HH(A\mid C)\le\HH(A\mid C,C')+\HH(C'\mid C)$ with
$C'=N_{R_n}^{\bar R}$; the third uses the chain rule and drops conditioning; the last uses the
independence of the pairs $(L_j,N_j)$, $j\ge2$, and counts the entries of $L_{k_n+1}^{\bar R}$ and
$N_{R_n}^{\bar R}$, whose numbers are functions of $R_n$ and whose values lie in $\{1,\dots,Z\}$ and
$\{0,\dots,Z\}$.

\emph{Step 3: the number of runs.}
If $R_n\ge k_n+2$, then $L_2+\dots+L_{k_n+1}<n$, while $\E[L_2+\dots+L_{k_n+1}]=k_n\mu\ge n+\mu n^{2/3}$.
By Hoeffding's inequality for $k_n$ independent variables with values in $[1,Z]$,
\[
\Pr[R_n\ge k_n+2]\le\exp\Bigl(-\frac{2\mu^2n^{4/3}}{k_n(Z-1)^2}\Bigr)\le e^{-cn^{1/3}}
\]
for some $c>0$ and all large $n$, because $k_n=O(n)$.
Since $R_n\le n$, $\E[(R_n-1-k_n)^+]\le ne^{-cn^{1/3}}$.
Next, $R_n$ is a stopping time for $(L_j)_{j\ge1}$, and $\theta_{R_n}\ge n$.
By Wald's identity~\cite{Durrett2019} applied to $L_2,L_3,\dots$,
$\E[\theta_{R_n}]=\E[L_1]+\mu\,\E[R_n-1]$, and therefore $\E[R_n]\ge1+(n-Z)/\mu$.
Hence
\begin{align*}
\E[(k_n-R_n+1)^+]&=k_n+1-\E[R_n]+\E[(R_n-k_n-1)^+]\\
&\le n^{2/3}+1+\frac{Z}{\mu}+ne^{-cn^{1/3}}.
\end{align*}

Combining the steps,
$\HH(X^n\mid S,Y)\le(k_n-1)\HH(L\mid N)+O(n^{2/3})$, and $k_n/n\to1/\mu$.
Finally, $\HH(L\mid N)=\HH(L,N)-\HH(N)=\HH(\Omega)-\HH(\omega)$.
\end{proof}

\subsection{A lower bound on $\HH(S\mid X^n,Y)$}

The last term of~\eqref{eq:vtr-split} is the uncertainty about which input runs produced the output
when both $X^n$ and $Y$ are known.
It arises whenever an output run collects the survivors of several input runs of the same symbol.
For $z\ge1$, $r\ge1$, and $s\ge0$, let
\begin{equation}
\label{eq:hsplit}
\hg(z,r,s)=\HH\Bigl(\Bigl\{\tbinom{z}{l}\tbinom{r}{s-l}\big/\tbinom{z+r}{s}\Bigr\}_{l}\Bigr)
\end{equation}
for $s\le z+r$ be the entropy of the hypergeometric law of the number of survivors among the first
$z$ of $z+r$ bits, given that $s$ of them survive.
If $W$ and $W'$ are independent with $W\sim\mathrm{Bin}(z,\db)$ and $W'\sim\mathrm{Bin}(r,\db)$, then
$\Pr[W=l\mid W+W'=s]=\binom{z}{l}\binom{r}{s-l}/\binom{z+r}{s}$, whose entropy is $\hg(z,r,s)$.
Since this law is supported on $\{0,\dots,\min(z,s)\}$, $\hg(z,r,0)=0$ and
$0\le\hg(z,r,s)\le\log(z+1)$.
Let
\begin{equation}
\label{eq:Kchain}
\Theta_{z,r}=\sum_{s=1}^{z+r}\Bin_{z+r}(s)\,\hg(z,r,s),
\end{equation}
which is the expected value of $\hg(z,r,s)$ when $s\sim\mathrm{Bin}(z+r,\db)$.

\begin{lemma}[Chain penalty]
\label{lem:chain}
For the stationary renewal input with finite-support law $P$,
\begin{equation}
\label{eq:phi-chain}
\liminf_{n\to\infty}\frac1n\HH(S\mid X^n,Y)\ge\Phi(P,d),
\end{equation}
where
\begin{equation}
\label{eq:phi-def}
\Phi(P,d):=\frac{1-D}{\mu}\sum_{z\ge1}\sum_{r\ge1}P_z\Psi_r \Theta_{z,r}.
\end{equation}
The series converges, and $\Phi(P,d)\le\log(Z+1)D/\mu$.
\end{lemma}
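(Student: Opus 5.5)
By Part~4 of Lemma~\ref{lem:runlevel}, $\HH(S\mid X^n,Y)=\HH(\tilde N\mid X^n,Y)$ with $\tilde N=(\tilde N_1,\dots,\tilde N_{R_n})$. The plan is to lower-bound this conditional entropy by a sum of local contributions, one for each output run whose group contains at least two surviving-capable same-symbol input runs.

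Given $X^n$, the $\tilde N_j$ are independent binomials. Conditioning on $Y$ amounts to conditioning on the total survivor count of each maximal block of same-symbol runs separated only by fully deleted runs, together with the pattern of which opposite-symbol runs are empty. I will not try to compute this entropy exactly. Instead I will condition additionally on a finer variable $\Xi$ that is a function of $(X^n,\tilde N)$: for every output run, record its group index $j$, the type $i$, and the survivor counts of all same-symbol runs $j+2,\dots,j+2i$ \emph{except} the split between run $j$ and the aggregate of the remaining same-symbol runs. More concretely, I reveal everything except the number $l$ of survivors of the first run $j$ of each group. Given the remaining data, the total $s$ of the group is known. Then the law of $l$ given $(X^n,Y,\Xi)$ is the law of $W$ given $W+W'=s$, with $W\sim\mathrm{Bin}(z,\db)$ and $W'\sim\mathrm{Bin}(r,\db)$, where $z=L_j$ and $r=L_{j+2}+\dots+L_{j+2i}$. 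The survivor pattern inside the later runs is summed into $W'$; if needed, I reveal only the aggregate and not the internal split, which still determines $s-l$.

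The caveat is that the first run must have at least one survivor, which conditions $W\ge1$. To avoid this, I will define the group differently from Definition~\ref{def:group}: I take the last same-symbol run preceding the output run to be the one of length $z$, and use blocks for which the hypergeometric law is unconditioned. This requires care to match exactly the weight $\Bin_{z+r}(s)$ in~\eqref{eq:Kchain}. Using $\HH(A\mid C)\ge\HH(A\mid C,\Xi)-\HH(\Xi\mid C,A)$ is not available, so $\Xi$ must be chosen so that $\HH(\tilde N\mid X^n,Y)\ge\HH(\tilde N\mid X^n,Y,\Xi)$ trivially, and the per-group splits must be conditionally independent given $(X^n,Y,\Xi)$. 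Then the entropy is at least $\sum_{\text{groups}}\hg(z,r,s)$ in expectation. The conditional independence across groups follows from the independence of deletions on disjoint runs.

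Next I pass to rates. The expected number of groups per input bit, weighted by the input configuration, should be an ergodic average. Each position where a run starts contributes, at rate $1/\mu$ runs per bit, the event that the following $i$ opposite-symbol runs are fully deleted (weight $D^i$), interleaved with $i$ same-symbol runs of total length $r$ (weight $Q_{r,i}$). Summing over $i\ge1$ gives $\Psi_r$ for $r\ge1$. The factor $1-D$ accounts for requiring the next opposite run to survive, so that the block is maximal and distinct blocks are not double counted. This produces $\frac{1-D}{\mu}\sum_{z,r\ge1}P_z\Psi_r\Theta_{z,r}$. A finite-window correction of $O(n^{2/3})$, handled by Hoeffding and Wald exactly as in Lemma~\ref{lem:hxsy}, gives the $\liminf$.

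Convergence and the bound follow from $\hg\le\log(z+1)\le\log(Z+1)$ and $\sum_{r\ge1}\Psi_r=D/(1-D)$ by~\eqref{eq:psi-sum}, which give $\Phi\le\log(Z+1)D/\mu$.

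The main obstacle is the second step. I must choose the revealed variable so that the unrevealed split is exactly an unconditioned hypergeometric with parameters $(z,r,s)$, $s\sim\mathrm{Bin}(z+r,\db)$. The blocks must be disjoint and their splits conditionally independent, and the counting must match the weight $(1-D)P_z\Psi_r/\mu$ precisely. I would first try taking, for each maximal chain of same-symbol runs joined by full deletions, the pair (first run, rest of chain), and check the weight.
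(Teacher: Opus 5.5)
Your reduction to $\HH(\tilde N\mid X^n,Y)$, the $O(n^{2/3})$ window correction, and the bound $\Phi\le\log(Z+1)D/\mu$ are fine. The entropy step, which you leave open, has a genuine gap, and in the form you describe it cannot give $\Phi$.

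\emph{The blocks.} They must be defined only through runs of the \emph{other} symbol. The paper's event $C^{e,i}$ asks $\tilde N_e\ge1$, $\tilde N_{e+2}=\dots=\tilde N_{e+2i}=0$, $\tilde N_{e+2i+2}\ge1$. The chain is $b_m=e+2m-1$, $m=1,\dots,i+1$, with no condition on its own survivors; if none survive, the term is $\hg(\cdot,\cdot,0)=0$. Since $C^{e,i}$ depends only on runs outside the chain, the chain counts stay independent binomials, and the split law is exactly hypergeometric with $s\sim\mathrm{Bin}(z+r,\db)$.

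\emph{One split per block is not enough.} Counted once per block, a block needs \emph{both} boundary runs of the other symbol to survive. Its weight is then $(1-D)^2D^{\,i}P_zQ_{r,i}/\mu$, and summing over $i$ gives only $(1-D)\Phi$. Your counting paragraph imposes only the right boundary. What it really counts is pairs (run $j$, number $h\ge1$ of chain runs after $j$), i.e.\ one term for every non-final run of a chain. That count is the right one, but then the entropy step must supply all sequential splits $\hg(\ell_{b_m},r_m,s_m)$, $m=1,\dots,i$. Here $r_m$ is the total length of $b_{m+1},\dots,b_{i+1}$ and $s_m$ the number of survivors of $b_m,\dots,b_{i+1}$. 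The identity $\sum_{i\ge h}(1-D)^2D^{\,i}=(1-D)D^{\,h}$ then absorbs the left boundary and yields $\Psi_r$.

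\emph{No single global genie works.} There is no $\Xi$ given which all splits are conditionally independent hypergeometrics. The chain structure of one symbol consists of survivor counts that are split variables of chains of the other symbol. Note also that $Y$ does not reveal which input runs are empty; that is exactly the ambiguity being counted. For example, let $\tilde N_{e-1},\tilde N_e,\tilde N_{e+3},\tilde N_{e+4}\ge1$ and $\tilde N_{e+1}=\tilde N_{e+2}=0$. Then $C^{e-1,1}$ (chain $e,e+2$) and $C^{e,1}$ (chain $e+1,e+3$) both hold. Revealing $C^{e,1}$ exposes $\tilde N_{e+2}=0$ and fixes the first split, and revealing $C^{e-1,1}$ fixes the second, although given $Y$ both splits are in general uncertain. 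With groups as in Definition~\ref{def:group}, the output run formed by the survivors of $e+3$ even has type $0$, so its ambiguity is not seen at all.

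\emph{The paper's fix.} It never conditions globally. It writes $\HH_x(\tilde N\mid Y)=\sum_j\HH_x(\tilde N_j\mid\tilde N^{j-1},Y)$ and lower-bounds each term with its own genie $\Gamma_j=(\Upsilon_j,V,s_m)$. Here $\Upsilon_j$ is the chain in which $j=b_m$ with $m\le i$; it is unique because these events are disjoint for fixed $j$. The vector $V$ collects the counts outside $\{b_m,\dots,b_{i+1}\}$. Since $C^{e,i}$ and $\tilde N^{j-1}$ are functions of $V$, and $Y$ is a function of $(V,s_m)$ on $C^{e,i}$, the $j$-th term is at least $\sum_{(e,i)}\Pr_x[C^{e,i}]\,\E_x[\hg(\ell_j,r_m,s_m)\mid C^{e,i}]$. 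This per-term genie counts both chains in the example and produces exactly the sequential splits required above. With it, your per-run counting gives $\Phi$.
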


\begin{proof}
By Part~4 of Lemma~\ref{lem:runlevel}, $\HH(S\mid X^n,Y)=\HH(\tilde N\mid X^n,Y)$ with
$\tilde N=(\tilde N_1,\dots,\tilde N_{R_n})$.
Fix a realization $x$ of $X^n$; it fixes $R=R_n$, the lengths $\ell_1,\dots,\ell_R$, and the symbols
of the runs.
Probabilities and entropies conditional on $X^n=x$ are denoted by $\Pr_x$ and $\HH_x$.
Under $\Pr_x$ the variables $\tilde N_j$ are independent with $\tilde N_j\sim\mathrm{Bin}(\ell_j,\db)$
(Section~\ref{subsec:runs}), and $Y$ is a function of $\tilde N$.

\emph{Step 1: chains.}
For integers $e\ge1$ and $i\ge1$ with $e+2i+2\le R$, let $b_m=e+2m-1$ for $m=1,\dots,i+1$, and
let $C^{e,i}$ be the event
\[
\tilde N_e\ge1,\ \ \tilde N_{e+2}=\tilde N_{e+4}=\dots=\tilde N_{e+2i}=0,\ \ \tilde N_{e+2i+2}\ge1.
\]
Call the symbol of run $e$ $\mathtt a$ and the other symbol $\mathtt b$.
On $C^{e,i}$, if at least one of the $\mathtt b$-runs $b_1,\dots,b_{i+1}$ has a surviving bit, then the
first such bit starts an output run $o$ (the preceding output bit is the last survivor of run $e$,
which has symbol $\mathtt a$), and $o$ consists of the survivors of $b_1,\dots,b_{i+1}$, because the $\mathtt a$-runs
between them are deleted and run $e+2i+2$ has a surviving bit of symbol $\mathtt a$.
We call $e$ the \emph{anchor} and $(b_1,\dots,b_{i+1})$ the \emph{chain}.
For $1\le m\le i$ let
\[
r_m=\sum_{m'=m+1}^{i+1}\ell_{b_{m'}},\qquad s_m=\sum_{m'=m}^{i+1}\tilde N_{b_{m'}},
\]
the total length of $b_{m+1},\dots,b_{i+1}$ and the number of survivors of $b_m,\dots,b_{i+1}$.
For a fixed run $j$, the events $C^{e,i}$ with $j=b_m$ for some $1\le m\le i$ are disjoint:
if $C^{e,i}$ and $C^{e',i'}$ hold with $e<e'<j$ and $j-e$, $j-e'$ odd, then $e'$ is an $\mathtt a$-run
strictly between $e$ and $e+2i+2$ with $\tilde N_{e'}\ge1$, which contradicts $C^{e,i}$; and for a
fixed anchor $e$, $C^{e,i}$ forces $i$ to be the smallest integer with $\tilde N_{e+2i+2}\ge1$.

\emph{Step 2: a genie for one run.}
By the chain rule,
$\HH_x(\tilde N\mid Y)=\sum_{j=1}^{R}\HH_x(\tilde N_j\mid\tilde N^{j-1},Y)$.
Fix $j$ and let $\Upsilon_j=(e,i)$ if $C^{e,i}$ holds for the (unique) pair with $j=b_m$,
$1\le m\le i$, and $\Upsilon_j=\emptyset$ otherwise.
On $\{\Upsilon_j=(e,i)\}$, let $\mathcal B=\{b_m,\dots,b_{i+1}\}$ with $j=b_m$, let $V$ be the vector
$(\tilde N_{j'})_{j'\notin\mathcal B}$, and let $\Gamma_j=(\Upsilon_j,V,s_m)$; on
$\{\Upsilon_j=\emptyset\}$ let $\Gamma_j=\Upsilon_j$.
Since conditioning does not increase entropy,
\begin{align*}
\HH_x(\tilde N_j\mid\tilde N^{j-1},Y)
&\ge\HH_x(\tilde N_j\mid\tilde N^{j-1},Y,\Gamma_j)\\
&\ge\sum_{(e,i)}\Pr_x[\Upsilon_j=(e,i)]\\
&\quad\times\HH_x\bigl(\tilde N_j\mid\tilde N^{j-1},Y,V,s_m,
\\
&\qquad\qquad\Upsilon_j=(e,i)\bigr),
\end{align*}
where the terms with $\Upsilon_j=\emptyset$ were dropped because they are nonnegative.

\emph{Step 3: the conditional law.}
Fix $(e,i)$ and $m$ with $j=b_m$, and the corresponding $\mathcal B$ and $V$.
The event $C^{e,i}$ involves only runs outside $\mathcal B$; hence it is determined by $V$.
The prefix $\tilde N^{j-1}$ is a sub-vector of $V$, because all indices in $\mathcal B$ are at least
$j$.
The output $Y$ depends on $(\tilde N_b)_{b\in\mathcal B}$ only through $s_m$, because on $C^{e,i}$ the
survivors of the runs in $\mathcal B$ are consecutive bits of the same output run $o$ (the $\mathtt a$-runs
between them being deleted), and only their number matters; hence $Y$ is a function of $(V,s_m)$ on
$C^{e,i}$.
Therefore, on $\{\Upsilon_j=(e,i)\}=C^{e,i}$, conditioning on $(\tilde N^{j-1},Y,V,s_m)$ is the same as
conditioning on $(V,s_m)$.
Under $\Pr_x$, $(\tilde N_b)_{b\in\mathcal B}$ is independent of $V$, and its components are
independent binomial variables with lengths $\ell_b$.
Hence, conditionally on $V$ and on $s_m=s$, the variable $\tilde N_j$ is the number of survivors among
the $z=\ell_j$ bits of run $j$, given that $s$ of the $z+r_m$ bits of the runs in $\mathcal B$
survive; its conditional law is the hypergeometric law in~\eqref{eq:hsplit}, and
\[
\HH_x(\tilde N_j\mid V,s_m,C^{e,i})=\E_x\bigl[\hg(\ell_j,r_m,s_m)\bigm|C^{e,i}\bigr].
\]
Summing over $j$ and then over the pairs $(e,i)$ and $m\le i$ gives
\begin{equation}
\label{eq:chain-sum}
\HH_x(\tilde N\mid Y)\ge\E_x\Bigl[\sum_{(e,i)}\ones_{C^{e,i}}\sum_{m=1}^{i}\hg(\ell_{b_m},r_m,s_m)\Bigr].
\end{equation}
(The event that no run of the chain survives is not excluded from $C^{e,i}$; on it $s_m=0$ and
$\hg(\ell_{b_m},r_m,0)=0$.)

\emph{Step 4: complete runs.}
Average~\eqref{eq:chain-sum} over $X^n$ and keep only the pairs with $e\ge2$ and
$e+2i+2\le R_n-1$; the omitted terms are nonnegative.
All runs involved are then complete, so that $\ell_j=L_j$ and $\tilde N_j=N_j$.
For $e\ge2$ and $i\ge1$ let
\begin{multline*}
F_{e,i}=\ones\{N_e\ge1\}\prod_{k=1}^{i}\ones\{N_{e+2k}=0\}
\\
\quad\times\ones\{N_{e+2i+2}\ge1\}
\sum_{m=1}^{i}\hg(L_{b_m},r_m,s_m),
\end{multline*}
with $r_m$ and $s_m$ defined as in Step~1 from $(L_j,N_j)$.
For fixed $e$, at most one $i$ gives $F_{e,i}\ne0$, and $F_{e,i}\le i\log(Z+1)$.
Let $m_n=\lfloor n/\mu-n^{2/3}\rfloor$.
On $\{R_n\ge m_n+1\}$, every pair with $e+2i+2\le m_n$ satisfies $e+2i+2\le R_n-1$.
Hence
\begin{multline*}
\HH(S\mid X^n,Y)\ge\E\Bigl[\sum_{\substack{e\ge2,\,i\ge1\\e+2i+2\le m_n}}F_{e,i}\Bigr]
\\
\quad-m_n^2\log(Z+1)\Pr[R_n\le m_n].
\end{multline*}
If $Z=1$, then $R_n=n>m_n$ and the last term is zero.
If $Z\ge2$ and $R_n\le m_n$, then $\theta_{m_n}\ge n$ and $L_2+\dots+L_{m_n}\ge n-Z$, while
$\E[L_2+\dots+L_{m_n}]=(m_n-1)\mu\le n-\mu n^{2/3}-\mu$.
The deviation is at least $\mu n^{2/3}+\mu-Z$, and Hoeffding's inequality gives
\[
\Pr[R_n\le m_n]\le\exp\Bigl(-\frac{2(\mu n^{2/3}+\mu-Z)^2}{(m_n-1)(Z-1)^2}\Bigr)\le e^{-cn^{1/3}}
\]
for some $c>0$ and all large $n$; the last term therefore vanishes as $n\to\infty$.

\emph{Step 5: expectation of one term.}
The pairs $(L_j,N_j)$, $j\ge2$, are i.i.d.; hence $\phi_i=\E[F_{e,i}]$ does not depend on $e\ge2$.
The factors of $F_{e,i}$ involve disjoint sets of runs: run $e$, the $\mathtt a$-runs $e+2,\dots,e+2i$, run
$e+2i+2$, and the $\mathtt b$-runs $b_1,\dots,b_{i+1}$.
Hence
\[
\phi_i=(1-D)\,D^{\,i}\,(1-D)\sum_{m=1}^{i}\E\bigl[\hg(L_{b_m},r_m,s_m)\bigr].
\]
For $1\le m\le i$, let $h=i+1-m\ge1$ be the number of $\mathtt b$-runs after $b_m$.
The length $L_{b_m}$ has law $P$, the total length $r_m$ of the next $h$ runs has law
$Q_{\cdot,h}$, and, given these lengths, $s_m\sim\mathrm{Bin}(L_{b_m}+r_m,\db)$.
Hence $\E[\hg(L_{b_m},r_m,s_m)]=a_h$ with $a_h=\sum_{z,r}P_zQ_{r,h}\Theta_{z,r}$, and
\[
\phi_i=(1-D)^2D^{\,i}\sum_{h=1}^{i}a_h\le(1-D)^2D^{\,i}\,i\log(Z+1).
\]

\emph{Step 6: counting.}
For each $i\ge1$, the number of $e\ge2$ with $e+2i+2\le m_n$ is $(m_n-2i-3)^+$.
Hence the expectation in Step~4 equals $\sum_{i\ge1}(m_n-2i-3)^+\phi_i$.
Since $(m_n-2i-3)^+/n\le1/\mu$, $(m_n-2i-3)^+/n\to1/\mu$ for each $i$, and $\sum_i\phi_i<\infty$, dominated
convergence gives
\[
\liminf_{n\to\infty}\frac1n\HH(S\mid X^n,Y)\ge\frac1\mu\sum_{i\ge1}\phi_i.
\]
Exchanging the order of summation,
\begin{align*}
\sum_{i\ge1}\phi_i&=(1-D)^2\sum_{h\ge1}a_h\sum_{i\ge h}D^{\,i}=(1-D)\sum_{h\ge1}D^{\,h}a_h\\
&=(1-D)\sum_{z\ge1}\sum_{r\ge1}P_z\Bigl(\sum_{h\ge1}D^{\,h}Q_{r,h}\Bigr)\Theta_{z,r}.
\end{align*}
For $r\ge1$, $Q_{r,0}=0$, and the inner sum equals $\Psi_r$.
This proves~\eqref{eq:phi-chain}--\eqref{eq:phi-def}.
Finally, $\Theta_{z,r}\le\log(Z+1)$ for $z\le Z$ and $\sum_{r\ge1}\Psi_r=D/(1-D)$ by
Lemma~\ref{lem:psi}, which gives the bound on $\Phi$.
\end{proof}

\begin{proposition}[Comparison with the VTR correction]
\label{prop:vtr-compare}
Let $\Phi_{\mathrm{VTR}}(P,d)=\rho D\sum_{z,r\ge1}P_zP_r\Theta_{z,r}$ with $\rho$ from~\eqref{eq:rho}.
Then, for every finite-support law $P$,
\begin{multline}
\label{eq:phi-compare}
\Phi(P,d)=(1+D)\Phi_{\mathrm{VTR}}(P,d)
\\
\quad+\frac{1-D}{\mu}\sum_{z,r\ge1}P_z
\Bigl(\sum_{j\ge2}D^{\,j}Q_{r,j}\Bigr)\Theta_{z,r},
\end{multline}
and in particular $\Phi(P,d)\ge(1+D)\Phi_{\mathrm{VTR}}(P,d)$.
For the geometric law $P_z=(1-\gamma)\gamma^{z-1}$, $z\ge1$,
\begin{multline}
\label{eq:phi-vtr-geo}
\Phi_{\mathrm{VTR}}=\frac{\rho(1-\gamma)^3d}{\gamma^2(1-\gamma d)}
\sum_{z,r\ge1}(\gamma d)^{z+r}
\\
\quad\times\sum_{s=1}^{z+r}\binom{z+r}{s}
\Bigl(\frac{\db}{d}\Bigr)^{s}\hg(z,r,s),
\end{multline}
and $\rho=\db(1-q_\gamma)$ with $q_\gamma=(\gamma+d-2\gamma d)/(1+d-2\gamma d)$.
\end{proposition}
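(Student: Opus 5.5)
The plan is to split the renewal weight $\Psi_r$ by the number of runs and read off each part of the decomposition. Fix $r\ge1$. By \eqref{eq:Q-Psi} and $Q_{r,0}=0$,
\[
\Psi_r=DQ_{r,1}+\sum_{j\ge2}D^{\,j}Q_{r,j}=DP_r+\sum_{j\ge2}D^{\,j}Q_{r,j},
\]
since the sum of one run has law $P$.

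Substituting this into \eqref{eq:phi-def} splits $\Phi$ into two series. The second series is exactly the remainder term in \eqref{eq:phi-compare}. The first series is
\[
\frac{(1-D)D}{\mu}\sum_{z,r\ge1}P_zP_r\Theta_{z,r}.
\]
By \eqref{eq:rho}, $(1-D)/\mu=(1+D)\rho$, so this first series equals $(1+D)\rho D\sum P_zP_r\Theta_{z,r}=(1+D)\Phi_{\mathrm{VTR}}$.

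All terms are nonnegative, since $\Theta\ge0$ as an average of entropies, and the series converge by Lemma~\ref{lem:chain}. The rearrangement is therefore legitimate, and dropping the remainder gives the inequality $\Phi(P,d)\ge(1+D)\Phi_{\mathrm{VTR}}(P,d)$.

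For the geometric formula \eqref{eq:phi-vtr-geo}, the statement concerns only $\Phi_{\mathrm{VTR}}$ as a formal series, so I would compute it directly in five steps.

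\begin{enumerate}
\item Compute $D=\sum_z(1-\gamma)\gamma^{z-1}d^{\,z}=(1-\gamma)d/(1-\gamma d)$.
\item Write $P_zP_r=(1-\gamma)^2\gamma^{z+r-2}$.
\item Expand $\Theta_{z,r}$ by \eqref{eq:Kchain} as $\Theta_{z,r}=d^{\,z+r}\sum_s\binom{z+r}{s}(\db/d)^s\hg(z,r,s)$.
\item Collect the factors: $\rho D(1-\gamma)^2\gamma^{-2}(\gamma d)^{z+r}$ times the inner sum. Inserting $D$ yields the prefactor $\rho(1-\gamma)^3d/(\gamma^2(1-\gamma d))$.
\item Evaluate $\rho$ at the geometric law. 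With $\mu=1/(1-\gamma)$, $\rho=(1-D)/(\mu(1+D))$. Using the value of $D$ above, $1-D=(1-d)/(1-\gamma d)$ and $1+D=(1+d-2\gamma d)/(1-\gamma d)$, so $\rho=\db(1-\gamma)/(1+d-2\gamma d)$. This should equal $\db(1-q_\gamma)$, because $1-q_\gamma=(1-\gamma)/(1+d-2\gamma d)$ after simplifying the numerator $1+d-2\gamma d-\gamma-d+2\gamma d=1-\gamma$.
\end{enumerate}

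The only delicate point is that the geometric law has infinite support, outside the scope of the lemmas. I would state that \eqref{eq:phi-vtr-geo} is an identity of series with nonnegative terms. It converges because $\hg\le\log(z+1)$, so $\Theta_{z,r}\le\log(z+1)$, and $\sum_{z,r\ge1}P_zP_r\log(z+1)<\infty$ since $P$ has geometric tails. The main obstacle is thus just bookkeeping: checking that the VTR correction as published coincides with $\Phi_{\mathrm{VTR}}$. I would take the statement's definition of $\Phi_{\mathrm{VTR}}$ as given and verify only the algebra above.
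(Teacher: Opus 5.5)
Your proposal is correct and follows essentially the same route as the paper. Both split $\Psi_r=DP_r+\sum_{j\ge2}D^{\,j}Q_{r,j}$, use $(1-D)/\mu=(1+D)\rho$, and then verify the geometric prefactor and the identity $\rho=\db(1-q_\gamma)$ by the same algebra for $D$, $1-D$, and $1+D$.
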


\begin{proof}
For $r\ge1$, $\Psi_r=DQ_{r,1}+\sum_{j\ge2}D^{\,j}Q_{r,j}$ and $Q_{r,1}=P_r$.
Substituting into~\eqref{eq:phi-def} and using
$\frac{1-D}{\mu}D=(1+D)\rho D$ gives~\eqref{eq:phi-compare}; the last sum is nonnegative.
For the geometric law, $P_zP_r=(1-\gamma)^2\gamma^{z+r-2}$ and
$\Bin_{z+r}(s)=d^{\,z+r}\binom{z+r}{s}(\db/d)^s$, which gives the double sum
in~\eqref{eq:phi-vtr-geo}, and
$D=\sum_z(1-\gamma)\gamma^{z-1}d^{\,z}=(1-\gamma)d/(1-\gamma d)$.
Moreover $\mu=1/(1-\gamma)$, $1-D=\db/(1-\gamma d)$, and $1+D=(1+d-2\gamma d)/(1-\gamma d)$; hence
$\rho=(1-\gamma)\db/(1+d-2\gamma d)=\db(1-q_\gamma)$.
\end{proof}

The expression~\eqref{eq:phi-vtr-geo} coincides with the correction term
of~\cite[Thm.~3]{VenkataramananTatikondaRamchandran2013}, in which $\db(1-q_\gamma)$ is the rate of
output runs per input bit.
The formula~\eqref{eq:phi-vtr-geo} is an evaluation of the defining expression of
$\Phi_{\mathrm{VTR}}$ for a law of infinite support; it is used only to identify $\Phi_{\mathrm{VTR}}$ with
the correction term of~\cite{VenkataramananTatikondaRamchandran2013}.
Proposition~\ref{prop:vtr-compare} states that Lemma~\ref{lem:chain} improves this correction by the
factor $1+D$ and by the contribution of chains with more than two $\mathtt b$-runs, for every finite-support
law, including truncated geometric laws.
Section~\ref{sec:results} quantifies the resulting gain for truncated geometric laws.

\subsection{The bound}

\begin{theorem}[Free-law VTR bound]
\label{thm:vtr}
For every finite-support run-length law $P$, every $0<d<1$, and every integer $\kappa\ge1$,
\begin{multline}
\label{eq:vtr-lb}
\Cdel\ge\RVTR(P,d;\kappa):=\frac{\HH(P)}{\mu}-\db\,\HH_\pi(S_t\mid Y_{t-\kappa},\dots,Y_t)\\
-\frac{\HH(L\mid N)}{\mu}+\Phi(P,d).
\end{multline}
\end{theorem}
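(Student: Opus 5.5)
The bound follows by assembling the four-term decomposition of Lemma~\ref{lem:runlevel} and passing to the limit with the asymptotic estimates already proved. We start from~\eqref{eq:liminf-C}: it suffices to show
\[
\liminf_{n\to\infty}\frac1n\I(X^n;Y)\ge\RVTR(P,d;\kappa)
\]
for the stationary renewal input of Definition~\ref{def:renewal} with law $P$. By Part~3 of Lemma~\ref{lem:runlevel},
\[
\frac1n\I(X^n;Y)=\frac1n\HH(X^n)-\frac1n\HH(S\mid Y)-\frac1n\HH(X^n\mid S,Y)+\frac1n\HH(S\mid X^n,Y)
\]
for every $n$. All four terms are finite, so this identity holds exactly before any limit is taken.

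We then take $\liminf$ and use superadditivity. For real sequences,
\[
\liminf(a_n-b_n-c_n+e_n)\ge\lim a_n-\limsup b_n-\limsup c_n+\liminf e_n,
\]
provided $a_n$ converges and the other limits are finite. The four terms are handled as follows.
\begin{itemize}
\item The first term converges to $\HH(P)/\mu$ by~\eqref{eq:entropy-rate}.
\item For the second, $\limsup\frac1n\HH(S\mid Y)\le\db\,\HH_\pi(S_t\mid Y_{t-\kappa},\dots,Y_t)$ by Lemma~\ref{lem:hsy}, for the chosen $\kappa$.
\item For the third, $\limsup\frac1n\HH(X^n\mid S,Y)\le\HH(L\mid N)/\mu$ by Lemma~\ref{lem:hxsy}.
\item For the fourth, $\liminf\frac1n\HH(S\mid X^n,Y)\ge\Phi(P,d)$ by Lemma~\ref{lem:chain}.
\end{itemize}
Combining these four estimates gives the claimed inequality, and~\eqref{eq:liminf-C} then yields $\Cdel\ge\RVTR(P,d;\kappa)$.

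The only point needing care is that every quantity in the combination is finite, so that no $\infty-\infty$ arises. The second term is at most $\db$ times the bounded entropy of a geometric-tailed variable. The third is at most $\log Z/\mu$. The fourth is at most $\log(Z+1)D/\mu$ by Lemma~\ref{lem:chain}. The substantive work — the hidden Markov structure needed for the second term and the chain counting needed for the fourth — is already done in the lemmas. The main obstacle is only bookkeeping: the $\limsup$ and $\liminf$ bounds must all refer to the same input sequence, the prefixes of a single stationary process. This holds by construction.
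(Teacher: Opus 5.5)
Your proposal is correct and matches the paper's proof. As there, you divide~\eqref{eq:vtr-split} by $n$, split the $\liminf$ using superadditivity and $\liminf(-a_n)=-\limsup a_n$, bound the four terms by~\eqref{eq:entropy-rate} and Lemmas~\ref{lem:hsy}, \ref{lem:hxsy}, and~\ref{lem:chain}, and conclude with~\eqref{eq:liminf-C}; your extra check that no $\infty-\infty$ arises is not spelled out in the paper but is correct, since all four terms are nonnegative and the two $\limsup$ bounds are finite.
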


\begin{proof}
Let $X^n$ be the window of the stationary renewal input with law $P$.
Divide~\eqref{eq:vtr-split} by $n$.
Since $\liminf(a_n+b_n)\ge\liminf a_n+\liminf b_n$ and $\liminf(-a_n)=-\limsup a_n$,
\begin{align*}
\liminf_{n\to\infty}\frac1n\I(X^n;Y)
&\ge\lim_{n}\frac1n\HH(X^n)-\limsup_n\frac1n\HH(S\mid Y)\\
&\quad-\limsup_n\frac1n\HH(X^n\mid S,Y)\\
&\quad+\liminf_n\frac1n\HH(S\mid X^n,Y).
\end{align*}
The four terms are bounded by~\eqref{eq:entropy-rate}, Lemma~\ref{lem:hsy}, Lemma~\ref{lem:hxsy},
and Lemma~\ref{lem:chain}, which gives $\liminf_n\frac1n\I(X^n;Y)\ge\RVTR(P,d;\kappa)$, and
\eqref{eq:liminf-C} completes the proof.
\end{proof}

All terms of~\eqref{eq:vtr-lb} except $\Phi$ are finite sums for a finite-support law:
$\HH(P)$ and $\HH(L\mid N)$ are sums over at most $Z(Z+1)$ terms, and~\eqref{eq:hsy-filter} is a sum
over $2^\kappa$ sequences, each requiring $\kappa$ products of a row vector with a $Z\times Z$ matrix.
The series $\Phi$ has nonnegative terms, and any finite partial sum is a lower bound; replacing $\Phi$
by a partial sum in~\eqref{eq:vtr-lb} therefore gives a valid lower bound on $\Cdel$.
The geometric laws used below as a Markov control are truncated to $\{1,\dots,Z\}$ and renormalized,
which gives admissible finite-support laws; no limit over the truncation is taken.

%% file: sec_numerics_TIT.tex
\section{Optimization and Verified Evaluation}
\label{sec:numerics}

Theorems~\ref{thm:dm} and~\ref{thm:vtr} hold for every finite-support law.
The numerical task has two independent parts: a search that proposes a law $P$ with a large value of
the functional, and a verified evaluation of the functional at that law
(Fig.~\ref{fig:pipeline}).
Only the second part affects validity: an imperfect search can only produce a weaker bound, never an
invalid one.

\begin{figure*}[!t]
\centering
\resizebox{\textwidth}{!}{\input{figures/pipeline_TIT}}
\caption{Search and verification.
The search proposes a law in double-precision arithmetic; the reported bound is the lower endpoint
of an interval enclosure of the functional at the exact rational law defined by the candidate, and
its validity rests only on the theorems and on Proposition~\ref{prop:verified}.
(a)~Unconstrained logits $\lambda$ are mapped to a law on $\{1,\dots,Z\}$.
(b)~For each $d$ the tables that depend on $d$ only are built once and held fixed; each evaluation of
either functional is a short sequence of recursions and bilinear forms in $(P,\Psi)$.
(c)~The DM search uses Adam with reverse-mode differentiation and monitors the law every $50$ steps;
the VTR search uses L-BFGS with the adjoint gradient.
No codeword or channel output is sampled at any stage.
(d)~The best law is re-evaluated at the exact rational law $f/\sum f$ in interval arithmetic; the
reported bound is the larger of the two verified lower endpoints.}
\label{fig:pipeline}
\end{figure*}

\subsection{The search problem}
\label{subsec:search-problem}

For a fixed $d$ and support $\{1,\dots,Z\}$ we seek laws with large $\RDM(P,d)$ or $\RVTR(P,d;1)$,
that is, approximate maximizers over the simplex
$\Delta_Z=\{P\in\R^Z_{\ge0}:\sum_zP_z=1\}$.
Both objectives are nonconcave, and no claim of global or local optimality is made.
We parameterize the relative interior of $\Delta_Z$ by $P=\mathrm{softmax}(\lambda)$,
$P_z=e^{\lambda_z}/\sum_{z'}e^{\lambda_{z'}}$, $\lambda\in\R^Z$, and ascend along gradients of finite
truncations of the objectives.
If $\partial f/\partial P$ denotes the gradient with respect to $P$ (with the coordinates of $P$
treated as independent), then
$\partial f/\partial\lambda_z=P_z\bigl(\partial f/\partial P_z-\sum_{z'}P_{z'}\partial f/\partial P_{z'}\bigr)$.
The quantities $\mu$ and $D$ are linear in $P$, with $\partial\mu/\partial P_z=z$ and
$\partial D/\partial P_z=d^{\,z}$.

Both objectives contain linear functionals $F=\sum_{r=1}^{R}c_r\Psi_r$ of the truncated renewal weights,
where $c_r$ may depend on $P$ explicitly.
The derivative of $F$ through $\Psi$ is obtained from the adjoint recursion
\begin{equation}
\label{eq:adjoint}
\xi_r=c_r+D\sum_{\ell=1}^{Z}P_\ell\,\xi_{r+\ell},\qquad r=R,R-1,\dots,1,
\end{equation}
with $\xi_{r'}=0$ for $r'>R$; then
\begin{equation}
\label{eq:adjoint-grad}
\frac{\partial F}{\partial P_\ell}\Big|_{\Psi}=D\sum_{r=\ell}^{R}\xi_r\Psi_{r-\ell},
\qquad
\frac{\partial F}{\partial D}\Big|_{\Psi}=\frac1D\sum_{r=1}^{R}\xi_r\Psi_r .
\end{equation}
Indeed, differentiating~\eqref{eq:psi-rec} gives the triangular linear system
$\mathrm d\Psi_r-D\sum_\ell P_\ell\,\mathrm d\Psi_{r-\ell}=b_r$ with
$b_r=\mathrm dD\sum_\ell P_\ell\Psi_{r-\ell}+D\sum_\ell\mathrm dP_\ell\,\Psi_{r-\ell}$.
Writing it as $(I-M)\,\mathrm d\Psi=b$, we get $\mathrm dF=c^{\top}(I-M)^{-1}b=\xi^{\top}b$ with
$\xi=(I-M^{\top})^{-1}c$, which is~\eqref{eq:adjoint}; collecting the coefficients of
$\mathrm dP_\ell$ and $\mathrm dD$ in $\xi^{\top}b$, and using
$\sum_\ell P_\ell\Psi_{r-\ell}=\Psi_r/D$, gives~\eqref{eq:adjoint-grad}.
The cost is $O(RZ)$ operations, the same as the recursion~\eqref{eq:psi-rec}.

\emph{DM functional.}
By Lemma~\ref{lem:dm-reduce}, the tables $G_{z,r}$ and $\Bin_a(k)$ are computed once per $d$, and
each evaluation requires $\Psi$, the bilinear form $P^{\top}G\Psi$, and the
mixture~\eqref{eq:q-bilinear}.
In the search, $\Psi$ is computed as the truncated sum $\sum_{i\le I}D^iP^{*i}$ of convolution powers
with $D^{I}\le10^{-14}$, and the gradient is obtained by reverse-mode automatic differentiation of this
graph.

\emph{VTR functional.}
The objective is $\RVTR(P,d;1)$ with $\Phi$ truncated to $r\le R$.
The term $\HH(P)/\mu$ has the explicit gradient
$-(\log P_z+\log e)/\mu-z\HH(P)/\mu^2$.
The pairwise term~\eqref{eq:hsy-pair} depends on $P$ only through $(\mu,D)$, and its partial
derivatives with respect to $\mu$ and $D$ are computed by central differences.
The derivative of $\HH(L\mid N)=\HH(\Omega)-\HH(\omega)$ with respect to $P_r$ is
$-\sum_m\Bin_r(m)\log(\Omega_{r,m}/\omega_m)$.
The chain penalty is $\Phi=c\,F$ with $c=(1-D)/\mu$ and $F=\sum_{z,r}P_z\Psi_r\Theta_{z,r}$; its gradient
combines $\partial c/\partial P_z=-d^{\,z}/\mu-(1-D)z/\mu^2$, the explicit derivative
$\sum_r\Psi_r\Theta_{z,r}$ of $F$, and~\eqref{eq:adjoint-grad} with $c_r=\sum_zP_z\Theta_{z,r}$.
The gradients were checked against finite differences of the objective.

\subsection{Search protocols}
\label{subsec:search}

\emph{DM functional (Algorithm~\ref{alg:dm}).}
For each $d$, let $\hat p(d)$ be the stay probability of~\cite[Table~I]{DrineaMitzenmacher2007},
linearly interpolated off the tabulated grid with $\hat p(0)=1/2$, and $\mu_0=1/(1-\hat p)$.
The main runs used supports $Z\approx20\mu_0$, capped at $1000$ (the cap is active at $d=0.85$ and
$d=0.9$; $Z=2400$ at $d=0.95$), and at about half of the values of $d$ a support about $1.4$ times
larger gave a larger value and was retained; the truncation $R$ of the renewal weights in $\Lambda$ is
between $1.18Z$ and $1.6Z$, which leaves an omitted renewal mass below $10^{-13}$.
The values of $Z$ and $R$ for every $d$ are listed with the supplementary material.
Three restarts start from truncated geometric laws with stay probabilities $\hat p$ and
$\hat p\pm0.08(1-\hat p)$; each runs $600$ Adam steps~\cite{KingmaBa2015} (learning rate $0.03$), and
every $50$ steps the current law is evaluated in double precision; the best law, including the
geometric law with stay probability $\hat p$, is retained.
Longer polishing runs changed the double-precision values by at most $1.1\times10^{-5}$~bits; the
laws that are verified and reported are those of the main runs, which are the laws reported in the
earlier version of this work~\cite{khodaiemehr2026improvedlowerboundscapacity}.
Exploratory runs with mixtures of geometric and Gaussian components, sparsity penalties, and
exponentiated-gradient updates, reported in~\cite{khodaiemehr2026improvedlowerboundscapacity}, did not improve on these laws by more than
$4\times10^{-6}$~bits (Section~\ref{sec:discussion}).

\begin{algorithm*}[!t]
\caption{Search for the Drinea--Mitzenmacher functional (Theorem~\ref{thm:dm})}
\label{alg:dm}
\small
\begin{algorithmic}[1]
\Require deletion probability $d$; stay probability $\hat p(d)$; support $Z$; truncation $R$;
  $T=600$ steps; learning rate $\eta=0.03$; monitoring period $50$
\Ensure candidate vector $f=P^\star\in\R^Z_{>0}$ (verified by Algorithm~\ref{alg:verify})
\Statex \emph{Tables that depend on $d$ only (Lemma~\ref{lem:dm-reduce}), built once:}
\State $G_{z,r}\gets\sum_{k=1}^{z+r}w(z,r,k)\log[\binom{z+r}{k}-\binom{r}{k}]$ for $1\le z\le Z$, $0\le r\le R$
  \Comment{\eqref{eq:w}--\eqref{eq:G}}
\State $\Bin_a(k)\gets\binom ak\db^{\,k}d^{\,a-k}$ for $0\le a\le Z+R$, $0\le k\le a$
\Statex \emph{Objective $f(P)$ used by the search, for a law $P$ on $\{1,\dots,Z\}$:}
\State $\mu\gets\sum_zzP_z$;\ $D\gets\sum_zP_zd^{\,z}$;\ $\rho\gets(1-D)/(\mu(1+D))$
  \Comment{\eqref{eq:mu-D}, \eqref{eq:rho}}
\State $\Psi\gets\sum_{i=0}^{I}D^iP^{*i}$ truncated to $r\le Z+R$, with $I$ such that $D^I\le10^{-14}$
  \Comment{$P^{*i}$: $i$-fold convolution; \eqref{eq:Q-Psi}}
\State $\Lambda\gets\sum_{z\le Z}\sum_{r\le R}P_z\Psi_rG_{z,r}$ \Comment{\eqref{eq:lambda-bilinear}}
\State $\tilde q_k\gets\max\bigl(0,\sum_{a\le Z+R}\bigl[(P*\Psi)_a-D\Psi_a\bigr]\Bin_a(k)\bigr)$
  \Comment{first form of \eqref{eq:q-bilinear} via \eqref{eq:w-split}}
\State $\varepsilon\gets\max(0,1-\sum_k\tilde q_k)$;\
  $f(P)\gets-h(d)+\rho\bigl(-\sum_k\tilde q_k\log\tilde q_k+(1-\varepsilon)\log(1-\varepsilon)+\Lambda\bigr)$
  \Comment{\eqref{eq:dm-trunc}}
\Statex \emph{Search:}
\State $P^\star\gets$ geometric law with stay probability $\hat p$, truncated to $\{1,\dots,Z\}$ and
  renormalized; $f^\star\gets f(P^\star)$
\For{$p\in\{\hat p,\ \hat p+0.08(1-\hat p),\ \hat p-0.08(1-\hat p)\}$} \Comment{three restarts}
  \State $\lambda\gets\log$ of the truncated geometric law with stay probability $p$;
    reset the Adam moments
  \For{$t=1,\dots,T$}
    \State $P\gets\mathrm{softmax}(\lambda)$; evaluate $f(P)$ as in lines 3--7
    \State compute $\nabla_\lambda f$ by reverse-mode differentiation of lines 3--7
    \State $\lambda\gets\lambda+\mathrm{Adam}_\eta(\nabla_\lambda f)$ \Comment{ascent step}
    \If{$t\equiv0\pmod{50}$ or $t=T$}
      \State $P\gets\mathrm{softmax}(\lambda)$; evaluate $f(P)$ afresh in double precision
      \If{$f(P)>f^\star$} $P^\star\gets P$;\ $f^\star\gets f(P)$ \EndIf
    \EndIf
  \EndFor
\EndFor
\State \Return $P^\star$
\end{algorithmic}
\end{algorithm*}

\emph{VTR functional (Algorithm~\ref{alg:vtr}).}
Let $\hat\gamma$ be the stay probability of~\cite[Table~I]{VenkataramananTatikondaRamchandran2013}
at the same $d$ (interpolated for $d<0.05$ and $0.05<d<0.1$; for $d\ge0.75$ the value $\hat p(d)$ above).
The support is $Z=\max(40,\lceil15/(1-\hat\gamma)\rceil)$, capped at $160$, and $R=2Z$.
The initial laws are truncated geometric laws with stay probabilities
$\min(0.99,\max(0.3,\hat\gamma+\delta))$, $\delta\in\{0,\pm0.02,\pm0.05\}$, the law found for the DM
functional at the same $d$, and laws found in earlier runs of the search whose objectives used other
correction terms: the correction term of the earlier version~\cite{khodaiemehr2026improvedlowerboundscapacity} and the penalty $(1-D)\Phi(P,d)$.
An initial law with a larger support $Z_{\mathrm{seed}}>Z$ (the DM laws have supports up to $2400$) is
restricted to its first $Z$ coordinates; a law with a smaller support is padded with zeros; in both
cases every mass is then floored at $10^{-14}$ and the vector is renormalized.
This choice affects only the search, not the validity of the verified value.
L-BFGS~\cite{LiuNocedal1989} runs for at most $1500$ iterations from each initial law (memory $30$), and
the law with the largest double-precision objective is verified.
The winning initial law was geometric at $13$ of the $26$ values of $d$, an earlier run at $12$, and
the DM law at one.

\begin{algorithm*}[!t]
\caption{Search for the free-law VTR functional (Theorem~\ref{thm:vtr})}
\label{alg:vtr}
\small
\begin{algorithmic}[1]
\Require deletion probability $d$; stay probability $\hat\gamma(d)$; support
  $Z=\min(160,\max(40,\lceil15/(1-\hat\gamma)\rceil))$; $R=2Z$; iteration limit $1500$
\Ensure candidate vector $f=P^\star\in\R^Z_{>0}$ (verified by Algorithm~\ref{alg:verify}, $\kappa\in\{1,3\}$)
\Statex \emph{Tables that depend on $d$ only, built once:}
\State $\Theta_{z,r}\gets$ midpoint of the interval enclosure of~\eqref{eq:Kchain}, $1\le z\le Z$,
  $1\le r\le R$ \Comment{sum over $s$ within $12$ std.\ dev.}
\State $\Bin_r(m)$ for $1\le r\le Z$, $0\le m\le r$
\Statex \emph{Objective $f(P)=\RVTR(P,d;1)$ with $\Phi$ truncated to $r\le R$, and its gradient $g=\partial f/\partial P$:}
\State $\mu\gets\sum_zzP_z$;\ $D\gets\sum_zP_zd^{\,z}$;\ $\alpha\gets(1-D)/(\mu\db)$;\
  $p_f\gets\alpha/(1+D)$;\ $p_0\gets(1-\alpha)/(1-p_f)$ \Comment{Lemma~\ref{lem:residual}}
\State $H_S\gets(1-p_f)[h(p_0)+(1-p_0)g]+p_fg$ with $g=h(D^2)/(1-D^2)$ \Comment{\eqref{eq:hsy-pair}}
\State $\Omega_{r,m}\gets P_r\Bin_r(m)$;\ $\omega_m\gets\sum_r\Omega_{r,m}$;\
  $H_{LN}\gets\HH(\Omega)-\HH(\omega)$ \Comment{Lemma~\ref{lem:hxsy}}
\State $\Psi_0\gets1$;\ $\Psi_r\gets D\sum_{\ell\le\min(r,Z)}P_\ell\Psi_{r-\ell}$ for $1\le r\le R$
  \Comment{\eqref{eq:psi-rec}}
\State $F\gets\sum_{z}\sum_{r=1}^{R}P_z\Psi_r\Theta_{z,r}$;\ $\Phi\gets(1-D)F/\mu$ \Comment{\eqref{eq:phi-def}}
\State $f(P)\gets\HH(P)/\mu-\db H_S-H_{LN}/\mu+\Phi$ \Comment{\eqref{eq:vtr-lb}}
\State $\xi_r\gets\sum_zP_z\Theta_{z,r}+D\sum_\ell P_\ell\xi_{r+\ell}$ for $r=R,\dots,1$ ($\xi_{r'}=0$, $r'>R$)
  \Comment{adjoint \eqref{eq:adjoint}}
\State $g_z\gets\partial_{P_z}[\HH(P)/\mu]-\db\,\partial_{P_z}H_S-\partial_{P_z}[H_{LN}/\mu]
  +\partial_{P_z}\Phi$ \Comment{Section~\ref{subsec:search-problem}, \eqref{eq:adjoint-grad}}
\Statex \emph{Search:}
\State initial laws: truncated geometric laws with stay probabilities
  $\min(0.99,\max(0.3,\hat\gamma+\delta))$, $\delta\in\{0,\pm0.02,\pm0.05\}$; the DM law of
  Algorithm~\ref{alg:dm} at the same $d$ and laws of earlier runs (each, with support $Z_{\mathrm{seed}}$, restricted to its first $\min(Z,Z_{\mathrm{seed}})$
  coordinates, zero-padded to $\{1,\dots,Z\}$ if $Z_{\mathrm{seed}}<Z$, floored at $10^{-14}$, and renormalized)
\For{each initial law $P^{(0)}$}
  \State $\lambda\gets\log P^{(0)}$
  \State minimize $\lambda\mapsto-f(\mathrm{softmax}(\lambda))$ by L-BFGS (memory $30$, at most $1500$
    iterations) with gradient $-P\odot(g-\langle P,g\rangle\ones)$ from lines 3--10
  \State record $P=\mathrm{softmax}(\lambda)$ at termination and $f(P)$
\EndFor
\State \Return the recorded law with the largest $f(P)$ as $P^\star$
\end{algorithmic}
\end{algorithm*}

\subsection{Verified evaluation}
\label{subsec:verified}

Each reported value is obtained as follows.
The search returns a vector $f\in\R^Z_{>0}$ of double-precision numbers.
The evaluated law is $P_z=f_z/\sum_{z'}f_{z'}$, an exact rational probability vector; it is an
admissible law in Theorems~\ref{thm:dm} and~\ref{thm:vtr} in its own right, and no approximation of
another law is involved.
The decimal value of $d$ is enclosed between the two neighbouring double-precision numbers.

A real number $x$ is represented by an interval $[\underline x,\overline x]$ with double-precision
endpoints and $\underline x\le x\le\overline x$.
Most real numbers, and most results of arithmetic operations on doubles, are not doubles; the
IEEE~754 standard~\cite{IEEE754} requires that each basic operation ($+$, $-$, $\times$, $\div$) return
the exact result rounded to a double according to a selectable rounding direction.
In the round-toward-$-\infty$ (rounding-down) direction, the returned value $\mathrm{RD}(a\circ b)$ is
the largest double that is at most the exact value $a\circ b$; in the round-toward-$+\infty$ direction,
$\mathrm{RU}(a\circ b)$ is the smallest double that is at least $a\circ b$.
Sums, differences, products, and quotients of intervals are computed with the lower endpoint
rounded down and the upper endpoint rounded up~\cite{MooreKearfottCloud2009}, for example
$[\underline a,\overline a]+[\underline b,\overline b]
=[\mathrm{RD}(\underline a+\underline b),\mathrm{RU}(\overline a+\overline b)]$.
The program sets the processor to the round-toward-$-\infty$ direction once and obtains upper
endpoints from the identity $\mathrm{RU}(a+b)=-\mathrm{RD}((-a)-b)$ (and its analogues for the other
operations), so no switching of rounding directions is needed.
Each such operation returns an interval that contains the exact result of the operation for all
arguments in the input intervals.

The only transcendental function is the logarithm.
For a positive double $x$, write $x=m\,2^{e}$ exactly with $m\in[2^{-1/2},2^{1/2})$ and an integer
$e$ (the comparison with $2^{-1/2}$ uses a double-precision constant, which only moves $m$ within
$[0.7071,1.4143)$), and let $t=(m-1)/(m+1)$, so that $|t|<0.172$ and
\[
\ln x=e\ln2+2\operatorname{artanh}t,\qquad
\operatorname{artanh}t=\sum_{j\ge0}\frac{t^{2j+1}}{2j+1}.
\]
For $0\le t<1$, the partial sum over $j\le13$ is a lower bound, and adding
\[
\sum_{j\ge14}\frac{t^{2j+1}}{2j+1}\le\frac{t^{29}}{29(1-t^2)}
\]
gives an upper bound; for $t<0$ we use $\operatorname{artanh}t=-\operatorname{artanh}|t|$.
All operations are rounded outward, $t$ is itself enclosed, and $\ln2$ is enclosed between two
consecutive doubles.
For an interval argument, monotonicity of $\ln$ gives the enclosure from the two endpoints.
For $\phi(p)=-p\log p$, the argument interval is first intersected with $[0,1]$; the lower endpoint
of the enclosure is the smaller of the lower enclosures of $\phi$ at the two endpoints (since $\phi$ is
concave), and the upper endpoint is the larger of the upper enclosures at the two endpoints if the
interval lies in $[0,1/e]$ or in $[1/e,1]$, where $\phi$ is monotone, and an upper bound on
$\max\phi=(\log e)/e$ otherwise.
The value $\phi(0)=0$ is returned directly, without evaluating $\log0$; the tests for
$[0,1/e]$ and $[1/e,1]$ use the doubles $0.3678794411714<1/e$ and $0.3679>1/e$; and the upper bound
on $\max\phi\approx0.530738$ is the constant $0.5308$.
Divisions abort if the divisor interval is not positive, and a logarithm whose argument interval is
not positive returns $-\infty$ as its lower endpoint, which keeps the enclosure valid.

Binomial coefficients and probabilities $\Bin_a(k)$ with large $a$ are enclosed as products of
rational factors, in a scaled representation $(\text{interval},\ \text{integer exponent of }2)$ that
avoids overflow and underflow; their logarithms are enclosed as the logarithm of the interval plus the
exponent.

\begin{proposition}[Soundness of the verified evaluation]
\label{prop:verified}
Let $\widetilde{\mathcal R}$ be either the expression~\eqref{eq:dm-trunc} or the right-hand side
of~\eqref{eq:vtr-lb} with $\Phi$ replaced by a finite partial sum, evaluated at the exact law
$P_z=f_z/\sum f$ and the exact $d$.
Let $[\underline{\mathcal R},\overline{\mathcal R}]$ be the interval obtained by evaluating the same
finite expression with the interval operations above, starting from intervals that contain $d$ and
each $f_z$.
Then $\underline{\mathcal R}\le\widetilde{\mathcal R}\le\Cdel$.
\end{proposition}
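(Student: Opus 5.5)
The statement has two inequalities, and I would prove them separately. The right-hand one, $\widetilde{\mathcal R}\le\Cdel$, needs no numerics. The evaluated law $P_z=f_z/\sum f$ is an exact finite-support probability vector with $P_Z>0$ (all $f_z>0$), and $0<d<1$. For the DM expression, Proposition~\ref{prop:dm-trunc} applies directly and gives $\widetilde{\mathcal R}\le\RDM(P,d)\le\Cdel$. For the VTR expression, the remark after Theorem~\ref{thm:vtr} applies: every term of $\Phi$ is nonnegative, so a finite partial sum lies below $\Phi(P,d)$. Replacing $\Phi$ by that partial sum therefore lowers $\RVTR(P,d;\kappa)$, and Theorem~\ref{thm:vtr} bounds $\RVTR(P,d;\kappa)$ by $\Cdel$.

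The left-hand inequality, $\underline{\mathcal R}\le\widetilde{\mathcal R}$, is the inclusion property of interval arithmetic. I would argue by induction over the computation graph, listing its operations in execution order. The invariant is that every computed interval contains the exact real value that the same expression takes at the exact $d$ and the exact $f_z$. Base cases:
\begin{enumerate}
\item The input intervals contain $d$ and each $f_z$, by construction.
\item Every exact constant used ($\ln2$, rational factors of binomial coefficients, the threshold doubles) is either exactly representable or enclosed.
\end{enumerate}
Inductive steps:
\begin{enumerate}
\item For $+,-,\times,\div$, outward rounding with $\mathrm{RD}$/$\mathrm{RU}$ returns an interval containing the exact result for all arguments in the input intervals. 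The identity $\mathrm{RU}(a+b)=-\mathrm{RD}((-a)-b)$ and its analogues hold exactly in IEEE arithmetic. Division aborts unless the divisor interval is positive, so no invalid value is ever produced.
\item The scaled representation (interval, exponent of $2$) is handled by applying the same argument to the mantissa interval, since multiplying by an exact power of $2$ is exact.
\end{enumerate}
Every quantity in~\eqref{eq:dm-trunc} and in~\eqref{eq:vtr-lb} is a finite composition of these primitives. This includes $\Psi$ via~\eqref{eq:psi-rec}, $\mu$, $D$, $q$, $G$, $\Theta$, the filter~\eqref{eq:hsy-filter}, and $\HH(\Omega)-\HH(\omega)$. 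So once the primitives are shown to be enclosures, the invariant carries through to the final value. Taking its lower endpoint gives $\underline{\mathcal R}\le\widetilde{\mathcal R}$.

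The main obstacle is certifying the transcendental primitives. For $\ln$, with $x$ a positive double, the decomposition $x=m2^e$ is exact, and $\ln x=e\ln2+2\operatorname{artanh}t$ holds for any $m>0$. The tolerance in the comparison with $2^{-1/2}$ therefore only affects the bound $|t|<0.172$, not correctness. For $0\le t<1$, all terms of the artanh series are nonnegative, so the partial sum over $j\le13$ is a lower bound. The tail is at most $\frac{1}{29}\sum_{j\ge14}t^{2j+1}=t^{29}/(29(1-t^2))$, which gives the upper bound. Oddness handles $t<0$. Each of these terms is computed in outward-rounded interval arithmetic from an enclosure of $t$. Monotonicity of $\ln$ extends the enclosure from endpoint arguments to interval arguments, and returning $-\infty$ as the lower endpoint for nonpositive argument intervals keeps the enclosure valid.

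For $\phi(p)=-p\log p$, intersecting the argument interval with $[0,1]$ is legitimate because the true value is a probability. Concavity puts the minimum over the interval at an endpoint. Monotonicity on $[0,1/e]$ and on $[1/e,1]$ gives the maximum when the interval lies in one of these pieces; the tests use doubles that bracket $1/e$ on the correct side. Otherwise $0.5308>(\log e)/e$ is a valid upper bound. Finally, $\phi(0)=0$ is returned exactly.
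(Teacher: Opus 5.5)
Your proposal is correct and follows the paper's own proof. The right inequality comes from Proposition~\ref{prop:dm-trunc} for the DM expression, and from Theorem~\ref{thm:vtr} together with the nonnegativity of the terms of $\Phi$ for the VTR expression; the left inequality is the inclusion property, carried by induction over the finite composition of interval operations. Your explicit checks of the $\ln$ and $\phi$ enclosures, and of the scaled binomials, only spell out what the paper takes from its description of the interval operations before the proposition. One small correction there: rescaling by $2^k$ is exact only away from underflow, but it still gives an enclosure under outward rounding.
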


\begin{proof}
The expression is a finite composition of the operations $+$, $-$, $\times$, $\div$ (with divisors
bounded away from zero), $\ln$, and $\phi$.
We show by induction over the composition that every intermediate interval contains the exact value
of the corresponding subexpression.
The inputs are enclosed by construction.
If the arguments of an operation are enclosed, the output interval contains the exact result
because each interval operation contains the exact result for all arguments in the input intervals,
in particular for the exact arguments.
Hence the final interval contains $\widetilde{\mathcal R}$, and
$\underline{\mathcal R}\le\widetilde{\mathcal R}$.
For the DM functional, $\widetilde{\mathcal R}\le\Cdel$ is Proposition~\ref{prop:dm-trunc}.
For the VTR functional, replacing $\Phi$ by a partial sum of its nonnegative series decreases the
right-hand side of~\eqref{eq:vtr-lb}, which is at most $\Cdel$ by Theorem~\ref{thm:vtr}.
\end{proof}

\begin{algorithm}[!t]
\caption{Verified evaluation of a candidate law}
\label{alg:verify}
\small
\begin{algorithmic}[1]
\Require decimal $d$; candidate $f\in\R^Z_{>0}$ (doubles); truncations $R$ and window widths
  ($10$ std.\ dev.\ for $\Lambda$, $12$ for $\Theta$); context lengths $\kappa\in\{1,3\}$
\Ensure a number $\underline{\mathcal R}$ with $\underline{\mathcal R}\le\Cdel$
\State set the IEEE~754 rounding mode to $-\infty$; all operations below are interval operations
  (upper endpoints by negation)
\State $\boldsymbol d\gets[\text{prev.\ double},\text{next double}]\ni d$;\
  $\boldsymbol P_z\gets[f_z]/\bigl[\sum_{z'}f_{z'}\bigr]$ \Comment{exact law $f/\sum f$ is enclosed}
\State enclose $\mu$, $D$, and $\Psi_r$ by~\eqref{eq:psi-rec} for $r\le Z+R$
\If{DM functional}
  \State enclose $\Bin_a(k)$ by the Pascal recurrence and $\tilde q_k$ by the mixture~\eqref{eq:q-bilinear}, $a\le Z+R$
  \State lower-enclose $\tilde\Lambda$ over $r\le R$ and the $k$-window; enclose $\E[K]$ by~\eqref{eq:EK}
  \State $\underline{\mathcal R}\gets$ lower endpoint of~\eqref{eq:dm-trunc}
\Else\ (VTR functional)
  \State lower-enclose $\HH(P)/\mu$ and a partial sum $\tilde\Phi$ of~\eqref{eq:phi-def} with interval $\Theta_{z,r}$
  \State upper-enclose $\HH(\Omega)-\HH(\omega)$ and, for each $\kappa$, \eqref{eq:hsy-filter} with interval
    matrices $T^{\mathrm{same}}$, $T^{\mathrm{flip}}$ and $g$ from~\eqref{eq:g}
  \State $\underline{\mathcal R}\gets$ largest over $\kappa$ of the lower endpoints of~\eqref{eq:vtr-lb} with $\tilde\Phi$
\EndIf
\State \Return $\underline{\mathcal R}$ \Comment{tables show $\lfloor10^5\underline{\mathcal R}\rfloor/10^5$}
\end{algorithmic}
\end{algorithm}

Algorithm~\ref{alg:verify} summarizes the procedure.
The evaluated expressions are as follows.
For the DM functional, the enclosures cover $\mu$, $D$, $\Psi_r$ for $r\le Z+R$ (by the
recursion~\eqref{eq:psi-rec}), the partial sums $\tilde q_k$ of the mixture~\eqref{eq:q-bilinear}
over $a\le Z+R$ (with $\Bin_a(k)$ computed by the Pascal recurrence
$\Bin_a(k)=\db\,\Bin_{a-1}(k-1)+d\,\Bin_{a-1}(k)$), and the partial sum $\tilde\Lambda$ over $r\le R$ and
$k$ within ten standard deviations of the $\mathrm{Bin}(z+r,\db)$ law from its mean $\db(z+r)$; terms
are omitted only when they are nonnegative,
as Proposition~\ref{prop:dm-trunc} permits.
The prefactor uses the exact expression~\eqref{eq:EK}.
For the VTR functional, $\HH(P)/\mu$, $\HH(L\mid N)$, and~\eqref{eq:hsy-filter} for
$\kappa\in\{1,3\}$ are finite expressions and are enclosed without truncation; $g$ is enclosed from
its closed form~\eqref{eq:g}.
The partial sum of $\Phi$ runs over $r\le R$, and, in each $\Theta_{z,r}$, over $s$ within twelve standard
deviations of the $\mathrm{Bin}(z+r,\db)$ law from its mean, with the hypergeometric entropies enclosed term by term.
The reported value is the larger of the two lower endpoints obtained with $\kappa=1$ and $\kappa=3$,
each of which is a lower bound on $\Cdel$ by Proposition~\ref{prop:verified}.
Tables list the lower endpoints rounded down to five decimals.
The omitted mass $\varepsilon$ of Proposition~\ref{prop:dm-trunc} is below $10^{-12}$ in all DM
evaluations; this is reported for information and is not needed for validity.
The candidate laws, the retained index ranges, the enclosures of every term, and the verifier are
provided as supplementary material.

\subsection{Validation}
\label{subsec:validation}

The verifier and the formulas were checked in three ways.
First, for $d=0.2$ and a non-geometric law on $\{1,\dots,10\}$, the DM functional was evaluated
directly from the type probabilities~\eqref{eq:dm-joint}, without Lemma~\ref{lem:dm-reduce}, in
40-digit arithmetic, with omitted probability mass below $10^{-18}$.
The direct value is $0.326651950637153$, and the verified lower endpoint is $0.326651950637135$; the
closed form~\eqref{eq:EK} agrees with the direct mean to all printed digits.
Second, for $P=(\frac12,\frac12)$ and $d=\frac12$, where the residual chain has two states, the
enclosures of $\HH_\pi(S_t\mid Y_{t-1},Y_t)$ and $\HH_\pi(S_t\mid Y_{t-3},\dots,Y_t)$ contain
$0.95530631$ and $0.94835626$, the values obtained by exact enumeration of the chain of
Lemma~\ref{lem:residual}, independently of Proposition~\ref{prop:filter}.
Third, Lemma~\ref{lem:chain} was compared with Monte Carlo estimates of
$\frac1n\HH(S\mid X^n,Y)$ for a non-geometric law ($n=3000$, $20$ samples; the posterior of the
survivor counts given $X^n$ and $Y$ is computed exactly for each sample by a forward recursion over
input runs).
At $d=0.1$, $0.3$, and $0.6$ the bound $\Phi$ equals $\MCphiA$, $\MCphiB$, and $\MCphiC$, below the
estimates $\MCestA$, $\MCestB$, and $\MCestC$ (95\% confidence half-widths $\MCciA$, $\MCciB$,
$\MCciC$).
This comparison is a consistency check only; the difference reflects ambiguities that
Lemma~\ref{lem:chain} does not count (Section~\ref{sec:discussion}).

%% file: figures/pipeline_TIT.tex
\definecolor{archnavy}{RGB}{30,58,84}
\definecolor{archline}{RGB}{90,118,138}
\definecolor{archmuted}{RGB}{70,88,102}
\definecolor{archXbg}{RGB}{232,238,244}
\definecolor{archYbg}{RGB}{228,235,241}
\definecolor{archSbg}{RGB}{222,232,240}
\definecolor{archhead}{RGB}{36,68,96}
\definecolor{archbar}{RGB}{186,205,220}
\definecolor{archcert}{RGB}{214,234,222}
\begin{tikzpicture}[
  x=1cm,y=1cm,
  font=\scriptsize,
  arr/.style={-{Stealth[round,length=5.0pt,width=3.5pt]}, line width=0.65pt,
              color=archnavy!70!black, shorten >=0.9pt, shorten <=0.9pt},
  darr/.style={-{Stealth[round,length=5.0pt,width=3.5pt]}, line width=0.58pt, densely dashed,
               color=archnavy!60!black, shorten >=0.9pt, shorten <=0.9pt},
  blk/.style={draw=archline, line width=0.55pt, rounded corners=1.5pt, fill=white, align=center,
              inner xsep=3.2pt, inner ysep=2.4pt, font=\scriptsize, minimum height=0.92cm},
  dimlab/.style={font=\tiny, text=archmuted, inner sep=0.4pt},
]
\node[anchor=west, text=archhead, font=\scriptsize\bfseries] at (0,0.98)
  {(a)~Search variables};
\node[blk, text width=1.55cm] (din) at (1.05,0) {$d\in(0,1)$\\[-0.4pt]{\tiny deletion prob.}};
\node[blk, text width=2.65cm, right=0.55cm of din] (init)
  {$Z$, $R$, initial laws\\[-0.4pt]{\tiny truncated geometric, earlier laws}};
\node[blk, text width=2.05cm, right=0.55cm of init] (lam) {$\lambda\in\mathbb{R}^{Z}$\\[-0.4pt]{\tiny logits}};
\node[blk, text width=2.55cm, fill=archbar!35, right=0.55cm of lam] (P)
  {$P=\mathrm{softmax}(\lambda)$\\[-0.4pt]{\tiny $P\in\Delta_Z$ (relative interior)}};
\draw[arr] (din.east) -- (init.west);
\draw[arr] (init.east) -- (lam.west);
\draw[arr] (lam.east) -- (P.west);

\node[anchor=west, text=archhead, font=\scriptsize\bfseries] at (0,-1.08)
  {(b)~Two functionals};
\node[blk, text width=5.55cm, fill=archXbg] (dm) at (3.35,-2.62)
  {{\bfseries Drinea--Mitzenmacher functional}\\[1.2pt]
   tables $G_{z,r}$, $\Bin_a(k)$ built once per $d$ {\tiny (Lemma~\ref{lem:dm-reduce})}\\[1.4pt]
   $\mu,\ D,\ \Psi$ {\tiny\eqref{eq:psi-rec}},\ $\Lambda=P^{\!\top}G\Psi$,\ $\tilde q$ {\tiny\eqref{eq:q-bilinear}},\ $\E[K]$ {\tiny\eqref{eq:EK}}\\[0.8pt]
   $\widetilde{\mathcal R}(P,d)$ {\tiny\eqref{eq:dm-trunc}, Thm.~\ref{thm:dm}}};
\node[blk, text width=5.55cm, fill=archYbg, right=0.50cm of dm] (vtr)
  {{\bfseries Free-law VTR functional}\\[1.2pt]
   kernel $\Theta_{z,r}$ built once per $d$ {\tiny\eqref{eq:Kchain}}\\[1.4pt]
   $\HH(P)/\mu$,\ \ $\HH(L|N)$ {\tiny (Lemma~\ref{lem:hxsy})}\\[0.4pt]
   $\HH_\pi(S_t|Y_{t-\kappa}^{t})$ {\tiny (Prop.~\ref{prop:filter})},\ \ $\Phi$ {\tiny (Lemma~\ref{lem:chain})}\\[0.8pt]
   $\RVTR(P,d;\kappa)$ {\tiny\eqref{eq:vtr-lb}, Thm.~\ref{thm:vtr}}};
\coordinate (pdrop) at ($(P.south)+(0,-0.24)$);
\fill[archnavy] (pdrop) circle (1.15pt);
\draw[arr] (P.south) -- (pdrop);
\draw[arr] (pdrop) -| (dm.north);
\draw[arr] (pdrop) -| (vtr.north);

\node[anchor=west, text=archhead, font=\scriptsize\bfseries] at (0,-4.20)
  {(c)~Optimization loop};
\node[blk, text width=3.0cm, fill=archSbg] (grad) at (3.35,-5.50)
  {gradient $\nabla_\lambda$\\[0.4pt]
   {\tiny DM: reverse-mode AD; VTR: adjoint \eqref{eq:adjoint}--\eqref{eq:adjoint-grad}}};
\node[blk, text width=2.3cm, right=0.35cm of grad] (adam)
  {update $\lambda$\\[-0.4pt]{\tiny DM: Adam, rate $0.03$\\ VTR: L-BFGS}};
\node[blk, text width=2.6cm, right=0.35cm of adam] (cert)
  {monitor\\[-0.2pt]{\tiny DM: re-evaluate every $50$ steps\\ VTR: at termination}};
\node[blk, text width=2.4cm, fill=archbar!40, right=0.35cm of cert] (best)
  {keep best law\\[0.5pt] candidate $f\in\mathbb{R}^Z_{>0}$};
\coordinate (adentry) at ([xshift=0.9cm]grad.north);
\draw[arr] (dm.south) -- (grad.north);
\draw[arr] (vtr.south) -- ++(0,-0.35) -| (adentry);
\draw[arr] (grad.east) -- (adam.west);
\draw[arr] (adam.east) -- (cert.west);
\draw[arr] (cert.east) -- (best.west);
\coordinate (fbdown) at ($(adam.south)+(0,-0.30)$);
\coordinate (fbleft) at ([xshift=-0.62cm]din.west |- fbdown);
\coordinate (fbtop) at (fbleft |- 0,1.14);
\draw[darr] (adam.south) -- (fbdown) -- (fbleft) -- (fbtop) -| (lam.north);
\node[dimlab, anchor=south] at ([yshift=2.5pt]fbtop -| lam) {update $\lambda$};

\node[anchor=west, text=archhead, font=\scriptsize\bfseries] at (0,-7.05)
  {(d)~Verified evaluation};
\node[blk, text width=2.85cm, fill=archcert] (exact) at (1.70,-8.10)
  {exact rational law\\ $P_z=f_z/\sum_{z'}f_{z'}$;\ $d\in[\underline d,\overline d]$};
\node[blk, text width=3.30cm, fill=archcert, right=0.42cm of exact] (iv)
  {interval arithmetic\\[-0.2pt]{\tiny directed rounding, verified $\ln$,\\ scaled binomials}};
\node[blk, text width=2.95cm, fill=archcert, right=0.42cm of iv] (trunc)
  {one-sided truncation\\[-0.2pt]{\tiny Prop.~\ref{prop:dm-trunc}; partial sums of $\Phi$}};
\node[blk, text width=2.55cm, fill=archbar!55, right=0.42cm of trunc] (lb)
  {lower endpoint $\le C(d)$\\[-0.2pt]{\tiny (Prop.~\ref{prop:verified})\\ envelope $\max(\text{Thm.~\ref{thm:dm}},\text{Thm.~\ref{thm:vtr}})$}};
\coordinate (cdown) at ($(best.south)+(0,-0.62)$);
\coordinate (cleft) at ([xshift=-0.35cm]exact.west |- cdown);
\draw[arr] (best.south) -- (cdown) -- (cleft) |- (exact.west);
\draw[arr] (exact.east) -- (iv.west);
\draw[arr] (iv.east) -- (trunc.west);
\draw[arr] (trunc.east) -- (lb.west);
\end{tikzpicture}

%% file: sec_results_TIT.tex
\section{Numerical Results}
\label{sec:results}

All new values in this section are lower endpoints of interval enclosures
(Section~\ref{subsec:verified}), rounded down to five decimals.
Published values are quoted with the precision of the cited tables; values computed from published
formulas (linear bounds, $1-h(d)$) are rounded down.

\begin{figure*}[!t]
\centering
\includegraphics[width=\textwidth,height=0.76\textheight,keepaspectratio]{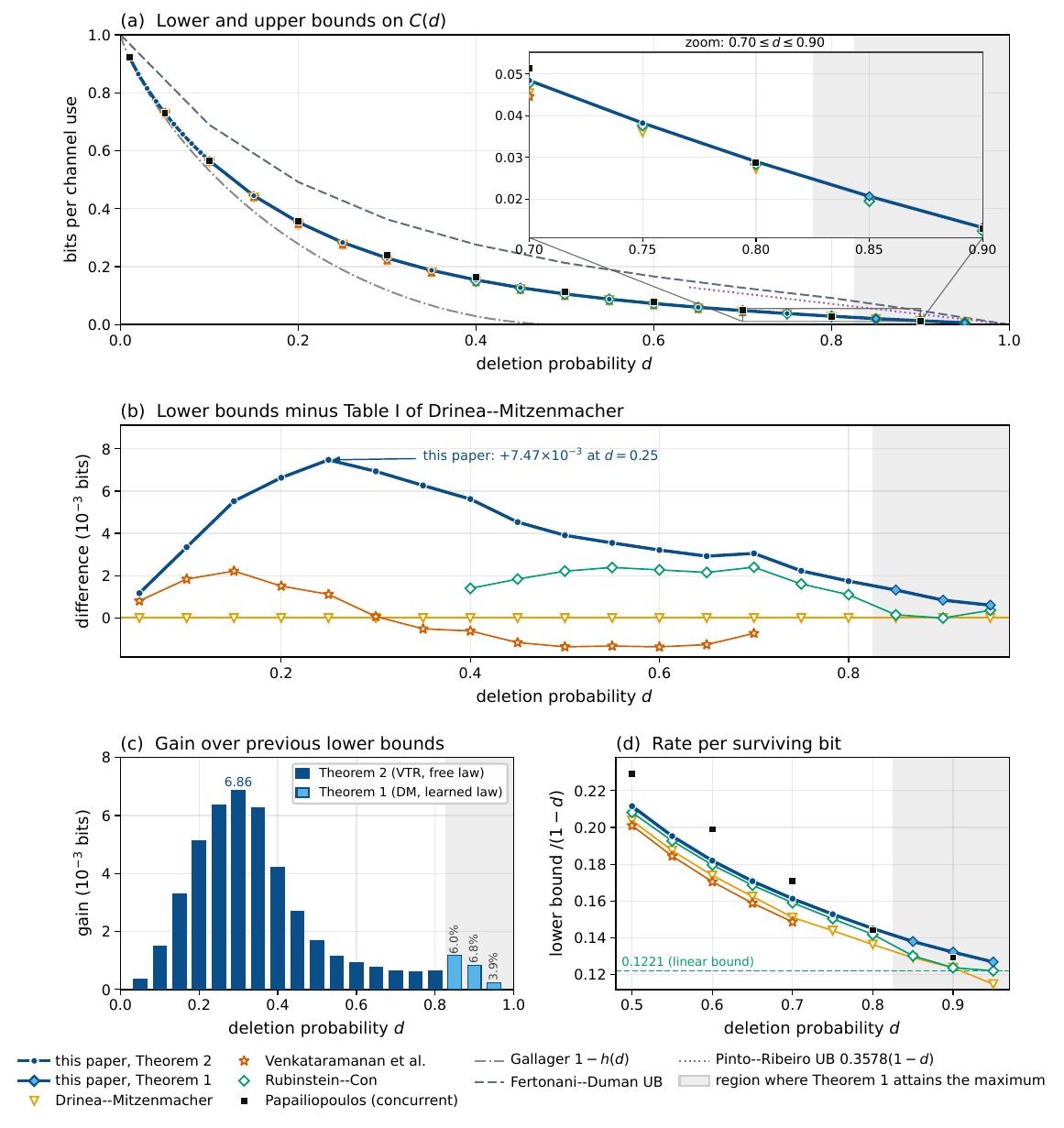}
\caption{Lower and upper bounds on $\Cdel$. Line segments between the values at tabulated $d$ are guides to the eye; the bounds are the values at those $d$ (and, for $1-h(d)$ and $0.3578(1-d)$, the curves themselves).
The present bound is the larger of Theorems~\ref{thm:dm} and~\ref{thm:vtr} (verified lower
endpoints): dark circles mark values of $d$ where Theorem~\ref{thm:vtr} attains the maximum, light
diamonds those where Theorem~\ref{thm:dm} does, and the shaded region is $d\ge\DmFrom$, where
Theorem~\ref{thm:dm} attains it.
(a)~Lower and upper bounds; the inset magnifies $d\in[0.70,0.90]$.
Open markers are the tabulated bounds
of~\cite{DrineaMitzenmacher2007,VenkataramananTatikondaRamchandran2013,RubinsteinCon2023}
(Rubinstein--Con: larger of their Table~2 and $0.1221(1-d)$); filled squares are the lower endpoints
of the concurrent enclosure~\cite{Papailiopoulos2026}.
(b)~The same lower bounds minus Table~I of~\cite{DrineaMitzenmacher2007}, in $10^{-3}$~bits.
(c)~Gain over the largest previously published lower bound at the $\NumTab$ values of $d$ with
tabulated run-length bounds (the baseline of Table~\ref{tab:comparison}); the largest gain is labelled,
and relative gains are given for $d\ge\DmFrom$, where the absolute gains are small.
(d)~Lower bounds divided by $1-d$ (rate per surviving bit); the dashed line is the linear bound
$0.1221(1-d)$ of~\cite{RubinsteinCon2023}.
Upper bounds:~\cite{FertonaniDuman2010,PintoRibeiro2026ParallelBA}.}
\label{fig:comparison}
\end{figure*}

\subsection{The Drinea--Mitzenmacher functional}

Table~\ref{tab:dm} compares three evaluations of Theorem~\ref{thm:dm}: the published values of
Drinea and Mitzenmacher for geometric laws with two-decimal stay probabilities, a geometric control in
which the stay probability $p$ is optimized continuously on the same support as the learned law, and
the learned law.
The control separates the gain due to a free law from the gain due to a finer choice of $p$.
For $d\le0.03$ the learned law is close to geometric and the gain over the control is below
$10^{-4}$~bits.
The gain is largest, $\DmFreeMax\times10^{-3}$~bits, at $d=\DmFreeMaxD$; in relative terms the learned
law exceeds the optimized geometric law by $\DmFreeRelNine\%$ at $d=0.9$.

\begin{table}[!t]
\caption{Theorem~\ref{thm:dm} for geometric and learned run-length laws.
``Geometric'' is the truncated geometric law on the support of the learned law with the stay
probability $p^\star$ optimized continuously; ``Gain'' is learned minus geometric.
All bound values except the published column are verified lower endpoints, rounded down.}
\label{tab:dm}
\centering
\footnotesize
\setlength{\tabcolsep}{4pt}
\input{data_TIT/tab_dm.tex}
\end{table}

\begin{figure*}[!t]
\centering
\includegraphics[width=\textwidth]{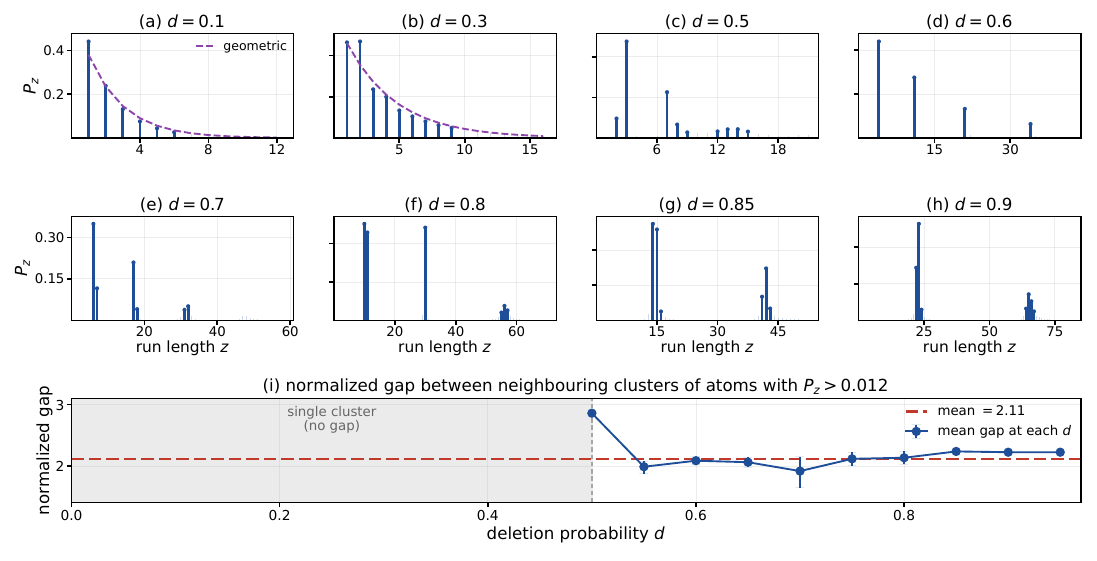}
\caption{Learned run-length laws for Theorem~\ref{thm:dm}.
(a)--(h) $P_z$ at increasing $d$; markers indicate atoms with $P_z\ge0.02$, and the dashed curve is
the geometric law of~\cite[Table~I]{DrineaMitzenmacher2007}.
(i) Normalized gaps~\eqref{eq:gap} between neighbouring clusters of atoms with $P_z>0.012$ for
$d\ge0.5$ (marker: mean at that $d$; whiskers: minimum and maximum); the horizontal line is the mean
over all gaps.}
\label{fig:shape}
\end{figure*}

Figure~\ref{fig:shape} shows how the learned laws change with $d$.
Up to $d\approx0.3$ the law is unimodal and close to geometric; for $0.35\le d\le0.45$ it develops
secondary modes.
From $d=0.5$ on, most of the mass lies on a few separated clusters of neighbouring run lengths: the
atoms with $P_z>0.012$ carry between $\ClusterMassMin\%$ and $\ClusterMassMax\%$ of the mass for
$0.5\le d\le0.8$ (and more than $55\%$ up to $d=0.95$).
At $d=0.8$ the clusters have centroids $\CentEight$, and the atoms $z\le3$ carry less than
$10^{-4}$ of the mass.
To describe the spacing, group the atoms with $P_z>0.012$ into clusters of indices at most two apart,
let $c_1<c_2<\cdots$ be the cluster centroids (weighted by $P_z$), and let
$\sigma(c)=\sqrt{c\,\db d}$ be the standard deviation of the number of survivors of a run of length
$c$.
The normalized gap between neighbouring clusters is
\begin{equation}
\label{eq:gap}
\frac{\db(c_{j+1}-c_j)}{\frac12\bigl(\sigma(c_j)+\sigma(c_{j+1})\bigr)}
=2\sqrt{\frac{\db}{d}}\bigl(\sqrt{c_{j+1}}-\sqrt{c_j}\bigr).
\end{equation}
Over the $\NGaps$ gaps observed for $d\ge0.5$ its mean is $\MeanGap$ and its range is
$[\GapMin,\GapMax]$; at $d=0.8$ the two gaps are $\GapEight$.
By~\eqref{eq:gap}, a constant normalized gap is the same as equal spacing of the square roots of the
centroids.
A heuristic reading is that the number of survivors of an isolated run of length $c$ has mean $\db c$
and standard deviation $\sigma(c)$, and that runs whose survivor counts are about two standard
deviations apart are distinguishable by the decoder, analogous to the points of a pulse-amplitude
constellation.
The reading is only heuristic: an output run can merge several input runs, and its length has the
mixture law~\eqref{eq:q-bilinear} rather than a single binomial law.
The cluster structure is an empirical property of the laws found by the search; neither the geometric
family nor the two-parameter family of~\cite{DrineaMitzenmacher2007} contains such laws.

In a support study for the DM search~\cite{khodaiemehr2026improvedlowerboundscapacity} at $d\in\{0.1,0.3,0.5,0.8\}$, the learned law was padded with
zeros to supports $Z/\mu_0\in\{5,10,15,20,25,30,40\}$ and $Z=\max(1000,40\mu_0)$, and re-optimized
at most of these supports; Fig.~\ref{fig:support} shows the double-precision values.
Enlarging the support beyond $Z\approx20\mu_0$ changed the value by at most $1.4\times10^{-10}$~bits
at these values of $d$.
This is an observation about the search protocol; it does not bound what other laws with larger
support can achieve.

\begin{figure*}[!t]
\centering
\includegraphics[width=0.85\textwidth]{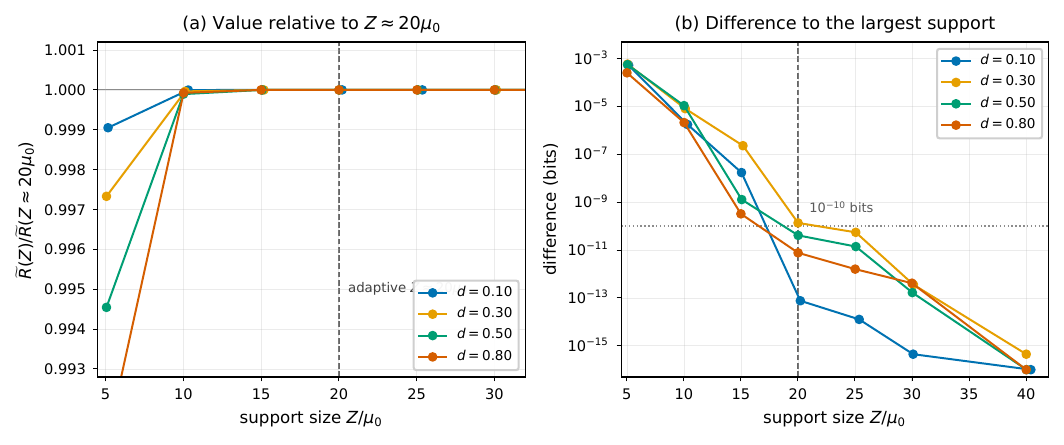}
\caption{Support study for the search of Algorithm~\ref{alg:dm} (double-precision values of the
retained expression~\eqref{eq:dm-trunc}).
(a) Value at support $Z$ divided by the value at $Z\approx20\mu_0$ (dashed line: the support used in
the main runs).
(b) Difference between the value at the largest tested support and the value at $Z$; differences
below $10^{-15}$ are at the level of double-precision rounding.}
\label{fig:support}
\end{figure*}

\subsection{The VTR functional}

Table~\ref{tab:vtr} lists Theorem~\ref{thm:vtr} for a Markov control (truncated geometric laws with the
stay probability $\gamma^\star$ optimized continuously) and for the law found by the search.
With truncated geometric laws, Theorem~\ref{thm:vtr} exceeds the published VTR values by
$\MarkovGainMin\times10^{-3}$ to $\MarkovGainMax\times10^{-3}$~bits; most of this gain is due to the
chain penalty (Proposition~\ref{prop:vtr-compare}).
A free run-length law adds up to $\FreeGainMax\times10^{-3}$~bits more.
The free-law VTR bound exceeds the DM bound with a learned law for $d\le\VtrUpTo$, and the DM bound is
larger from $d=\DmFrom$ on.
The functional was not evaluated at $d=0.95$.
For the laws found by the search, the context length $\kappa=3$ in~\eqref{eq:vtr-lb} improves on
$\kappa=1$ by up to $\KappaGain\times10^{-4}$~bits.

\begin{table}[!t]
\caption{Theorem~\ref{thm:vtr} for truncated geometric laws (Markov control) and for the law found by the search, compared with
the published VTR values and with Theorem~\ref{thm:dm} at the learned law.
The larger of the last two columns is in bold.}
\label{tab:vtr}
\centering
\footnotesize
\setlength{\tabcolsep}{3.5pt}
\input{data_TIT/tab_vtr.tex}
\end{table}

\subsection{Comparison with published bounds}

Table~\ref{tab:comparison} and Figs.~\ref{fig:comparison}--\ref{fig:gain} compare the larger of the two bounds with the
largest previously published lower bound at every $d$ for which a run-length bound has been
tabulated.
The baseline is the maximum of Tables~I and~II of~\cite{DrineaMitzenmacher2007}, Table~I
of~\cite{VenkataramananTatikondaRamchandran2013}, Table~2 of~\cite{RubinsteinCon2023}, the linear bounds
$0.1185(1-d)$~\cite{MitzenmacherDrinea2006} and $0.1221(1-d)$~\cite{RubinsteinCon2023}, and $1-h(d)$ for
$d<1/2$.
The new bounds are larger at all $\NumTab$ such values of $d$ (Fig.~\ref{fig:comparison}(c)).
Fig.~\ref{fig:comparison}(b) shows that, relative to Table~I of~\cite{DrineaMitzenmacher2007}, the gain
is largest at moderate $d$, where Theorem~\ref{thm:vtr} applies, and Fig.~\ref{fig:gain}(b) shows the
relative gains: up to $\RelGainVal\%$ over the previous best at $d=\RelGainD$, and, for each
functional, the gain over the published evaluation at a geometric (Markov) law grows with $d$.
The largest absolute gain is $\AbsGainVal$~bits at $d=\AbsGainD$, and the largest relative gain is
$\RelGainVal\%$ at $d=\RelGainD$.
At the values of $d$ where no run-length bound was tabulated ($d<0.05$ and $0.05<d<0.1$), the only
published bounds are $1-h(d)$ and the linear bounds $0.1185(1-d)$ and $0.1221(1-d)$, which the new
values exceed; the geometric controls of Tables~\ref{tab:dm} and~\ref{tab:vtr} are the appropriate
reference there.
The Kirsch--Drinea~\cite{KirschDrinea2009} and Castiglione--Kav\v{c}i\'c~\cite{CastiglioneKavcic2015}
values rely on Monte Carlo estimates and are not included.

The concurrent enclosure of Papailiopoulos~\cite{Papailiopoulos2026} has larger lower endpoints at
$d\in\{\PPbetter\}$, and the present bounds are larger at $d\in\{\WeBetter\}$.
Every new value lies below the corresponding upper endpoint of~\cite{Papailiopoulos2026}
(Fig.~\ref{fig:bounds}).
The two approaches are complementary: the enclosure of~\cite{Papailiopoulos2026} evaluates specified
finite-state and independent-run sources and also supplies converse bounds, while the present work
optimizes the run-length law inside two classical functionals.

\begin{table}[!t]
\caption{Published lower bounds on $\Cdel$ and this paper.
D--M:~\cite{DrineaMitzenmacher2007} (Table~I);
VTR:~\cite{VenkataramananTatikondaRamchandran2013} (Table~I);
R--C:~\cite{RubinsteinCon2023} (larger of Table~2 and $0.1221(1-d)$);
P--P: concurrent enclosure of~\cite{Papailiopoulos2026} (Table~3 lower endpoints, rounded down);
$1{-}h(d)$:~\cite{Gallager1961} ($d<1/2$ only).
The last column is the larger of Theorems~\ref{thm:dm} and~\ref{thm:vtr} (verified lower endpoints,
rounded down); ${}^{\ast}$: attained by Theorem~\ref{thm:vtr}, otherwise by Theorem~\ref{thm:dm}
(Theorem~\ref{thm:vtr} was not evaluated at $d=0.95$).
${}^{\dagger}$: $d$ not listed in Table~I of~\cite{DrineaMitzenmacher2007}; the entry is our verified
evaluation of their functional at the geometric law with optimized stay probability
(Table~\ref{tab:dm}), not a published value.
Values from published formulas and from other papers are rounded down to five decimals (four for VTR,
as printed there).
The largest entry in each row is in bold.}
\label{tab:comparison}
\centering
\scriptsize
\setlength{\tabcolsep}{2.6pt}
\resizebox{\tabIVwidth}{!}{\input{data_TIT/tab_comparison.tex}}
\end{table}

\begin{figure*}[!t]
\centering
\includegraphics[width=\textwidth]{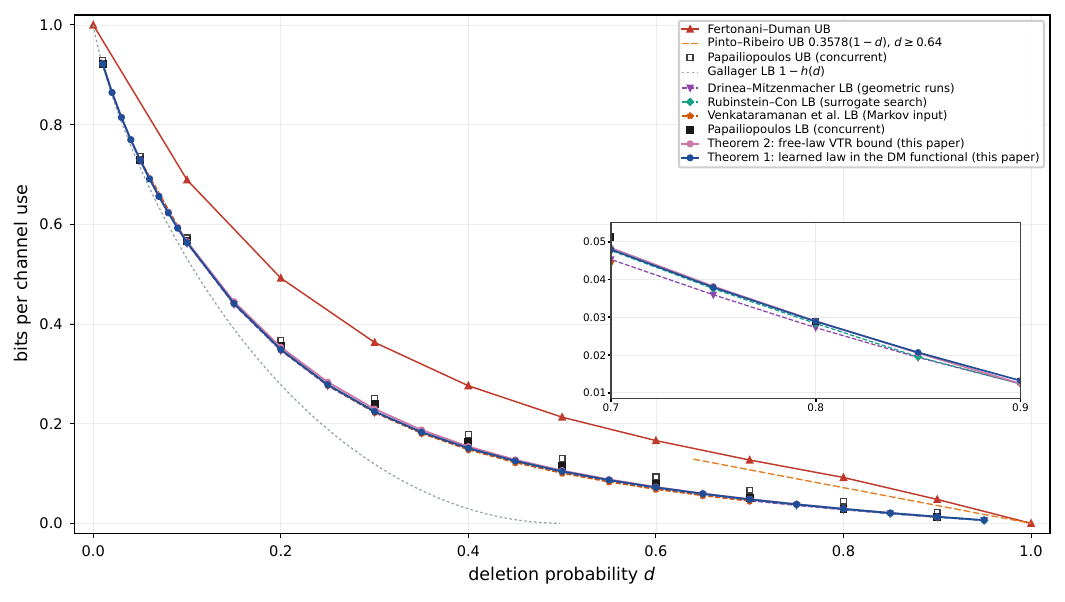}
\caption{Lower and upper bounds on $\Cdel$, with the two theorems shown separately. Line segments between the values at tabulated $d$ are guides to the eye; the bounds are the values at those $d$ (and, for $1-h(d)$ and $0.3578(1-d)$, the curves themselves).
Blue: Theorem~\ref{thm:dm} at the learned run-length law; pink: Theorem~\ref{thm:vtr} at the law found
by the search ($d\le0.9$).
The inset magnifies $d\in[0.70,0.90]$, where the curves otherwise overlap.
Tabulated lower bounds:~\cite{DrineaMitzenmacher2007,VenkataramananTatikondaRamchandran2013,RubinsteinCon2023};
lower and upper endpoints of the concurrent enclosure:~\cite{Papailiopoulos2026}.
Upper bounds:~\cite{FertonaniDuman2010,PintoRibeiro2026ParallelBA,Papailiopoulos2026}.}
\label{fig:bounds}
\end{figure*}

\begin{figure*}[!t]
\centering
\includegraphics[width=\textwidth]{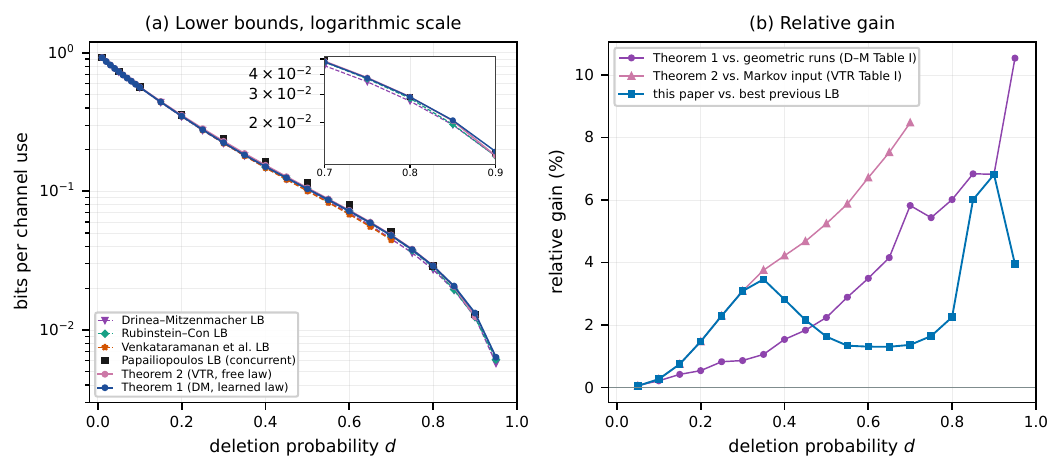}
\caption{(a)~Lower bounds on a logarithmic scale; the inset magnifies $d\in[0.70,0.90]$. Line segments between the values at tabulated $d$ are guides to the eye.
(b)~Relative gains: Theorem~\ref{thm:dm} at the learned law over the geometric law of Table~I
of~\cite{DrineaMitzenmacher2007}; Theorem~\ref{thm:vtr} at the law found by the search over the
Markov-input values of Table~I of~\cite{VenkataramananTatikondaRamchandran2013}; and the larger of the
two theorems over the largest previously published lower bound (the baseline of
Table~\ref{tab:comparison}).}
\label{fig:gain}
\end{figure*}

%% file: data_TIT/tab_dm.tex
\begin{tabular}{@{}cccccc@{}}
\toprule
$d$ & D--M Tab.~I & $p^\star$ & Geometric & Learned & Gain \\
 & \cite{DrineaMitzenmacher2007} & & (opt.\ $p$) & law & ($10^{-3}$) \\
\midrule
0.01 & --- & 0.5062 & 0.92210 & 0.92211 & 0.01 \\
0.02 & --- & 0.5127 & 0.86440 & 0.86444 & 0.04 \\
0.03 & --- & 0.5195 & 0.81444 & 0.81452 & 0.08 \\
0.04 & --- & 0.5267 & 0.76957 & 0.76972 & 0.15 \\
0.05 & 0.72829 & 0.5342 & 0.72856 & 0.72879 & 0.22 \\
0.06 & --- & 0.5419 & 0.69067 & 0.69098 & 0.31 \\
0.07 & --- & 0.5500 & 0.65541 & 0.65582 & 0.42 \\
0.08 & --- & 0.5584 & 0.62243 & 0.62296 & 0.53 \\
0.09 & --- & 0.5671 & 0.59150 & 0.59214 & 0.64 \\
0.10 & 0.56196 & 0.5761 & 0.56239 & 0.56316 & 0.77 \\
0.15 & 0.43918 & 0.6241 & 0.43964 & 0.44102 & 1.38 \\
0.20 & 0.34669 & 0.6749 & 0.34674 & 0.34855 & 1.82 \\
0.25 & 0.27588 & 0.7243 & 0.27618 & 0.27815 & 1.97 \\
0.30 & 0.22243 & 0.7689 & 0.22243 & 0.22434 & 1.91 \\
0.35 & 0.18101 & 0.8071 & 0.18108 & 0.18292 & 1.84 \\
0.40 & 0.14841 & 0.8391 & 0.14874 & 0.15069 & 1.95 \\
0.45 & 0.12286 & 0.8655 & 0.12292 & 0.12511 & 2.19 \\
0.50 & 0.10186 & 0.8875 & 0.10188 & 0.10414 & 2.26 \\
0.55 & 0.084323 & 0.9061 & 0.08438 & 0.08675 & 2.37 \\
0.60 & 0.069564 & 0.9220 & 0.06958 & 0.07199 & 2.41 \\
0.65 & 0.056858 & 0.9358 & 0.05685 & 0.05922 & 2.36 \\
0.70 & 0.045324 & 0.9480 & 0.04577 & 0.04796 & 2.19 \\
0.75 & 0.035984 & 0.9588 & 0.03599 & 0.03793 & 1.94 \\
0.80 & 0.027266 & 0.9685 & 0.02729 & 0.02890 & 1.61 \\
0.85 & 0.01938 & 0.9774 & 0.01947 & 0.02070 & 1.23 \\
0.90 & 0.012378 & 0.9855 & 0.01238 & 0.01322 & 0.83 \\
0.95 & 0.005741 & 0.9930 & 0.00592 & 0.00634 & 0.42 \\
\bottomrule
\end{tabular}

%% file: data_TIT/tab_vtr.tex
\begin{tabular}{@{}cccccc@{}}
\toprule
$d$ & VTR Tab.~I & \multicolumn{2}{c}{Trunc.\ geometric (Thm.~\ref{thm:vtr})} & Free law & DM \\
\cmidrule(lr){3-4}
 & \cite{VenkataramananTatikondaRamchandran2013} & $\gamma^\star$ & value & (Thm.~\ref{thm:vtr}) & (learned) \\
\midrule
0.01 & --- & 0.5062 & 0.92214 & \textbf{0.92215} & 0.92211 \\
0.02 & --- & 0.5126 & 0.86452 & \textbf{0.86456} & 0.86444 \\
0.03 & --- & 0.5193 & 0.81470 & \textbf{0.81478} & 0.81452 \\
0.04 & --- & 0.5262 & 0.77002 & \textbf{0.77017} & 0.76972 \\
0.05 & 0.7291 & 0.5334 & 0.72923 & \textbf{0.72946} & 0.72879 \\
0.06 & --- & 0.5408 & 0.69158 & \textbf{0.69191} & 0.69098 \\
0.07 & --- & 0.5485 & 0.65660 & \textbf{0.65703} & 0.65582 \\
0.08 & --- & 0.5564 & 0.62391 & \textbf{0.62447} & 0.62296 \\
0.09 & --- & 0.5646 & 0.59328 & \textbf{0.59396} & 0.59214 \\
0.10 & 0.5638 & 0.5730 & 0.56448 & \textbf{0.56530} & 0.56316 \\
0.15 & 0.4414 & 0.6178 & 0.44314 & \textbf{0.44469} & 0.44102 \\
0.20 & 0.3482 & 0.6654 & 0.35115 & \textbf{0.35332} & 0.34855 \\
0.25 & 0.277 & 0.7127 & 0.28088 & \textbf{0.28335} & 0.27815 \\
0.30 & 0.2225 & 0.7565 & 0.22695 & \textbf{0.22936} & 0.22434 \\
0.35 & 0.1805 & 0.7951 & 0.18517 & \textbf{0.18727} & 0.18292 \\
0.40 & 0.1478 & 0.8280 & 0.15233 & \textbf{0.15403} & 0.15069 \\
0.45 & 0.1217 & 0.8557 & 0.12602 & \textbf{0.12739} & 0.12511 \\
0.50 & 0.1005 & 0.8790 & 0.10452 & \textbf{0.10576} & 0.10414 \\
0.55 & 0.083 & 0.8987 & 0.08663 & \textbf{0.08787} & 0.08675 \\
0.60 & 0.0682 & 0.9157 & 0.07147 & \textbf{0.07277} & 0.07199 \\
0.65 & 0.0556 & 0.9305 & 0.05842 & \textbf{0.05978} & 0.05922 \\
0.70 & 0.0446 & 0.9435 & 0.04704 & \textbf{0.04837} & 0.04796 \\
0.75 & --- & 0.9549 & 0.03696 & \textbf{0.03821} & 0.03793 \\
0.80 & --- & 0.9652 & 0.02789 & \textbf{0.02901} & 0.02890 \\
0.85 & --- & 0.9748 & 0.01951 & 0.02052 & \textbf{0.02070} \\
0.90 & --- & 0.9847 & 0.01147 & 0.01245 & \textbf{0.01322} \\
\bottomrule
\end{tabular}

%% file: data_TIT/tab_comparison.tex
\begin{tabular}{@{}c@{\hspace{0.38em}}*{6}{c@{\hspace{0.36em}}}@{}}
\toprule
$d$ & $1{-}h(d)$ & D--M & VTR & R--C & P--P & This work \\
\midrule
0.01 & 0.91920 & 0.92210$^\dagger$ & --- & --- & 0.92211 & \textbf{0.92215}$^{\ast}$ \\
0.02 & 0.85855 & 0.86440$^\dagger$ & --- & --- & --- & \textbf{0.86456}$^{\ast}$ \\
0.03 & 0.80560 & 0.81444$^\dagger$ & --- & --- & --- & \textbf{0.81478}$^{\ast}$ \\
0.04 & 0.75770 & 0.76957$^\dagger$ & --- & --- & --- & \textbf{0.77017}$^{\ast}$ \\
0.05 & 0.71360 & 0.72829 & 0.7291 & --- & \textbf{0.72983} & 0.72946$^{\ast}$ \\
0.06 & 0.67255 & 0.69067$^\dagger$ & --- & --- & --- & \textbf{0.69191}$^{\ast}$ \\
0.07 & 0.63407 & 0.65541$^\dagger$ & --- & --- & --- & \textbf{0.65703}$^{\ast}$ \\
0.08 & 0.59782 & 0.62243$^\dagger$ & --- & --- & --- & \textbf{0.62447}$^{\ast}$ \\
0.09 & 0.56353 & 0.59150$^\dagger$ & --- & --- & --- & \textbf{0.59396}$^{\ast}$ \\
0.10 & 0.53100 & 0.56196 & 0.5638 & --- & \textbf{0.56660} & 0.56530$^{\ast}$ \\
0.15 & 0.39015 & 0.43918 & 0.4414 & --- & --- & \textbf{0.44469}$^{\ast}$ \\
0.20 & 0.27807 & 0.34669 & 0.3482 & --- & \textbf{0.35674} & 0.35332$^{\ast}$ \\
0.25 & 0.18872 & 0.27588 & 0.2770 & --- & --- & \textbf{0.28335}$^{\ast}$ \\
0.30 & 0.11870 & 0.22243 & 0.2225 & --- & \textbf{0.23903} & 0.22936$^{\ast}$ \\
0.35 & 0.06593 & 0.18101 & 0.1805 & --- & --- & \textbf{0.18727}$^{\ast}$ \\
0.40 & 0.02904 & 0.14841 & 0.1478 & 0.14981 & \textbf{0.16350} & 0.15403$^{\ast}$ \\
0.45 & 0.00722 & 0.12286 & 0.1217 & 0.12470 & --- & \textbf{0.12739}$^{\ast}$ \\
0.50 & --- & 0.10186 & 0.1005 & 0.10407 & \textbf{0.11454} & 0.10576$^{\ast}$ \\
0.55 & --- & 0.08432 & 0.0830 & 0.08671 & --- & \textbf{0.08787}$^{\ast}$ \\
0.60 & --- & 0.06956 & 0.0682 & 0.07183 & \textbf{0.07960} & 0.07277$^{\ast}$ \\
0.65 & --- & 0.05685 & 0.0556 & 0.05901 & --- & \textbf{0.05978}$^{\ast}$ \\
0.70 & --- & 0.04532 & 0.0446 & 0.04772 & \textbf{0.05127} & 0.04837$^{\ast}$ \\
0.75 & --- & 0.03598 & --- & 0.03759 & --- & \textbf{0.03821}$^{\ast}$ \\
0.80 & --- & 0.02726 & --- & 0.02837 & 0.02884 & \textbf{0.02901}$^{\ast}$ \\
0.85 & --- & 0.01938 & --- & 0.01953 & --- & \textbf{0.02070} \\
0.90 & --- & 0.01237 & --- & 0.01237 & 0.01292 & \textbf{0.01322} \\
0.95 & --- & 0.00574 & --- & 0.00610 & --- & \textbf{0.00634} \\
\bottomrule
\end{tabular}

%% file: sec_discussion_TIT.tex
\section{Discussion and Open Problems}
\label{sec:discussion}

The validity of every reported number rests on Theorem~\ref{thm:dm}~\cite{DrineaMitzenmacher2007},
on Lemmas~\ref{lem:psi}--\ref{lem:dm-reduce} and Proposition~\ref{prop:dm-trunc} for the DM
functional, on Theorem~\ref{thm:vtr} for the VTR functional, and on
Proposition~\ref{prop:verified} for the evaluation.
The search of Section~\ref{subsec:search} only selects the law at which the bounds are evaluated.
Statements about the shape of the laws found by the search, about support sizes, and about the
relative performance of search variants are observations about the search; they are not needed for
validity and do not imply that the laws are optimal.

\subsection{Tighter run-length functionals}

The Monte Carlo comparison of Section~\ref{subsec:validation} shows that $\Phi$ is about $70\%$ of
the estimated value of $\frac1n\HH(S\mid X^n,Y)$ at $d\in\{0.1,0.3,0.6\}$.
The chain penalty of Lemma~\ref{lem:chain} charges only the split of the survivors of an output run
among the runs of its chain, given the anchor, the number $i$ of deleted runs of the other symbol
inside the chain, and the survivor counts of all runs outside the chain (the variable $\Gamma_j$ in Step~2
of the proof).
The uncertainty about these revealed quantities, that is, about where the input runs that contribute
to one output run end and those that contribute to the next begin, is not counted.
A bound that reveals less, for example only the survivor counts of the runs of the same symbol
outside the chain, would lead to a correction term at least as large as $\Phi$, and it might still be evaluated with the renewal
weights of Lemma~\ref{lem:psi}.
Similarly, the context length $\kappa$ in Lemma~\ref{lem:hsy} can be increased at a cost of $2^\kappa$
matrix--vector products; for the laws found by the search, $\kappa=3$ improved on $\kappa=1$ by at most
$\KappaGain\times10^{-4}$~bits, which suggests, but does not prove, that the second term of~\eqref{eq:vtr-split} is already
close to its limit $\lim_\kappa\HH_\pi(S_t\mid Y_{t-\kappa},\dots,Y_t)$.

\subsection{The search}

Three questions about the search remain open.
First, the support study of Fig.~\ref{fig:support} shows that, for the DM functional at four values of
$d$, enlarging the support beyond $Z\approx20\mu_0$ changes the retained expression by at most
$1.4\times10^{-10}$~bits.
Whether maximizers exist over all laws with finite mean, and whether their tails decay geometrically
with a rate determined by $d$, is not known.
Second, in the earlier version of this work~\cite{khodaiemehr2026improvedlowerboundscapacity} we tested structured parameterizations of the DM search: finite mixtures
$P=\sum_{j}\alpha_jP^{(j)}$ of geometric laws and discretized Gaussian components centred at preferred
run lengths, trained with annealed sparsity penalties on $\HH(P)$ and $\HH(\alpha)$, and sparse
projections that keep the largest masses followed by free re-optimization.
On the grid of Table~\ref{tab:dm} these variants matched the laws of Algorithm~\ref{alg:dm} to within
$4\times10^{-6}$~bits in double precision.
Third, exponentiated-gradient (mirror-descent) updates
$P\leftarrow P\odot\exp(\eta\nabla_P\RDM)/\langle P,\exp(\eta\nabla_P\RDM)\rangle$ from comb-shaped
initial laws reached values $10^{-4}$ to $10^{-3}$~bits below those of Algorithm~\ref{alg:dm}.
These observations indicate that, for the DM functional, the remaining gap to the VTR functional at small
$d$ is a gap between functionals rather than a limitation of the search, which is consistent with
Table~\ref{tab:vtr}.

\subsection{Run-length constellations}

The laws found for the DM functional change from approximately geometric laws at small $d$ to laws
concentrated on a few clusters of run lengths at large $d$ (Fig.~\ref{fig:shape}).
The normalized gaps~\eqref{eq:gap} suggest that the clusters are arranged so that the numbers of
survivors of runs from neighbouring clusters are separated by about two binomial standard deviations,
as in a pulse-amplitude constellation whose noise grows with the amplitude.
This is an empirical observation, and the following questions are open.
\begin{enumerate}
\item Do maximizers of $\RDM(\cdot,d)$ or $\RVTR(\cdot,d;\kappa)$ put vanishing mass on short runs as
$d\to1$?
\item Must maximizers become multimodal for all sufficiently large $d$?
\item Is the normalized spacing~\eqref{eq:gap} of neighbouring clusters asymptotically constant, and if
so, what is the constant?
\item Can maximizers be characterized by first-order optimality conditions on a finite set of active
run lengths, as in the discrete input laws that achieve the capacity of amplitude-constrained
channels?
\end{enumerate}
An answer to any of these questions would turn the cluster structure into a property of the
functionals rather than of the search.

\subsection{Other channels}

Run-length inputs are natural for other synchronization channels, and the reductions of this paper
may serve as a starting point for channels that preserve the run structure used here: the channel acts
independently on the bits, never splits an input run, and produces survivor counts whose conditional
split given their total has a known law.
Channels with insertions can split a run (a single $0$ can become $01$), and they require a new
derivation.
For channels with i.i.d.\ duplications or insertions, the jigsaw and residual-run arguments
of~\cite{DrineaMitzenmacher2007,VenkataramananTatikondaRamchandran2013} have analogues, and recent work
gives small-insertion expansions and finite-blocklength
bounds~\cite{TeginDuman2025InsertionISIT,TeginDuman2026InsertionCap,MorozovDuman2026FiniteLength}.
Whether the renewal reduction, the chain penalty, and the verified search lead to improved bounds for
those channels is left for future work.